\newtheorem{theorem}{Theorem}
\newtheorem{lemma}[theorem]{Lemma}
\newtheorem{corollary}[theorem]{Corollary}
\begin{document}

\newcommand{\be}{\begin{equation}}
\newcommand{\ee}{\end{equation}}
\newcommand{\bea}{\begin{eqnarray}}
\newcommand{\eea}{\end{eqnarray}}
\newcommand{\beaa}{\begin{eqnarray*}}
\newcommand{\eeaa}{\end{eqnarray*}}

\title{Capacity of a POST Channel with and without Feedback}
\author{Haim H. Permuter,  Himanshu Asnani and Tsachy Weissman\\
\thanks{H. Permuter is with the department of Electrical and Computer Engineering,
 Ben-Gurion University of the Negev, Beer-Sheva, Israel (haimp@bgu.ac.il).
H. Asnani and T. Weissman are with the  Department of Electrical
Engineering, Stanford University, CA, USA (asnani@stanford.edu,
tsachy@stanford.edu). This paper was presented in part at the 2013
IEEE International Symposium on Information Theory.}
}



%


\maketitle \vspace{-1.4cm}

\begin{abstract}
We consider finite state channels  where the state of the channel is
its previous output. We refer to these as POST (Previous Output is
the STate) channels. We first focus  on POST($\alpha$) channels.
These channels have binary inputs and outputs, where the state
determines if the channel behaves as a $Z$ or an $S$
channel, both with parameter $\alpha$. 
We show that the non feedback capacity of the POST($\alpha$) channel
equals its feedback capacity, despite the memory of the channel. The
proof of this surprising result is based on showing that the induced
output distribution, when maximizing the directed information in the
presence of feedback, can also be achieved by an input distribution
that does not utilize of the feedback. We show that this is a
sufficient condition for the feedback capacity to equal the non
feedback capacity for any finite state channel. We show that the
result carries over from the POST($\alpha$) channel to a binary POST
channel where the previous output determines whether the current channel will be binary with parameters $(a,b)$ or $(b,a)$.
Finally, we show that, in general, feedback may increase the
capacity of a POST channel.

\end{abstract}
\begin{keywords}
Causal conditioning, Convex optimization, Channels with memory,
Directed information, Feedback capacity, Finite state channel, KKT
conditions, POST channel.
\end{keywords}

\vspace{-0.0cm}
\section{Introduction}

The capacity of a memoryless channel is very well understood. There
are many simple memoryless channels for which we know the capacity
analytically. These include the binary symmetric channel, the
erasure channel, the additive Gaussian channel and the $Z$ Channel.
Furthermore, using convex optimization tools, such as the
Blahut-Arimoto algorithm \cite{Blahut72,Arimoto72}, we can
efficiently compute the capacity of any memoryless channel with a
finite alphabet. However, in the case of channels with memory, the
exact capacities are known for only a few channels, such as additive
Gaussian channels (water filling solution)
\cite{Shannon49_waterfilling,Pinsker60} and discrete additive
channels with memory \cite{Alajaji94}. In cases where feedback is
allowed, there are only a few more cases where the exact capacity is
known, such as the modulo-additive noise  channel, the additive
noise channel where the noise is a first-order autoregressive
moving-average Gaussian process
\cite{Kim10_Feedback_capacity_stationary_Gaussian}, the trapdoor
channel \cite{PermuterCuffVanRoyWeissman08}, and the Ising Channel
\cite{ElischoPermuter_Ising12}. If the state is known at the
decoder, then knowledge of the state at the encoder can be
considered as partial feedback, as considered and  solved in
\cite{Goldsmith_Varaiya97_state_fading} and in \cite{Chen05}.
\begin{figure}[h!]{
\psfrag{d1}[][][1]{$y_{i-1}=0$} \psfrag{d2}[][][1]{$y_{i-1}=1$}
\psfrag{x}[][][1]{$x_{i}$} \psfrag{y}[][][1]{$y_i$}
\psfrag{a1}[][][0.9]{$\alpha$} \psfrag{a2}[][][0.9]{$1-\alpha$}
  \centerline{ \includegraphics[width=8cm]{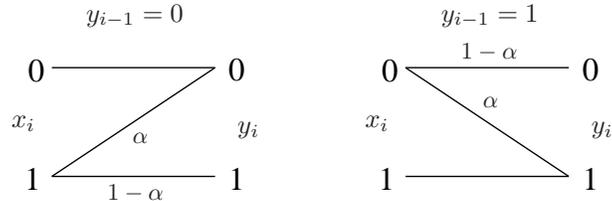}}
  \caption{ POST($\alpha$):
  If $y_{i-1}=0$ then the channel behaves as a $Z$ channel
  with parameter $\alpha$ and if $y_{i-1}=1$ then it behaves as an $S$ channel with parameter $\alpha$.}
 \label{Ychannel_alpha}
}\end{figure}

In this paper we introduce and consider a new family of channels
that we refer to as ``POST channels".
%
These are simple Finite State Channels (FSCs) where the state of the
channel is the previous output. In particular, we focus on a family
of POST channels that have binary inputs $\{X_i\}_{i\geq 1}$ and
binary outputs $\{Y_i\}_{i\geq 1}$ related as follows:

\begin{equation}
\text{if } X_i=Y_{i-1}, \text{ then } Y_i=X_i, \text{ else } Y_i=X_i
\oplus Z_i, \text{  where } Z_i\sim
\text{Bernnouli}\left(\alpha\right).
\end{equation}
We call these channels POST($\alpha$) and their behavior is depicted
in Fig. \ref{Ychannel_alpha}. When $y_{i-1}=0$, the current channel
behaves as a $Z$ channel with parameter $\alpha$ and when
$y_{i-1}=1,$ it behaves as an $S$ channel with parameter $\alpha$.
We refer to POST$(\frac{1}{2})$ as the {\it simple} POST channel.

The simple POST channel is similar to the Ising channel introduced by Berger and Bonomi
\cite{Berger90IsingChannel}, but rather than the previous input being the state of the channel,
here the state of the channel is the previous output.  This channel arose in the investigation of
controlled feedback in the setting of ``to feed or not to feed back"
\cite{AsnaniPermuterWeissman10_to_feed_or_not}. The POST channel can also be useful in modeling
memory affected by past channel outputs, as is the case in flash memory and other storage
devices.

In order to gain intuition for investigating the influence of
feedback on the simple POST channel, let us first consider a channel
with binary i.i.d. states $S_i$, distributed
Bernoulli($\frac{1}{2}$), where the channel behaves similarly to the
simple POST channel. When $S_{i-1}=0$, then the current channel
behaves as a $Z$ channel and when $S_{i-1}=1$, it behaves as an $S$
channel, as shown in Fig. \ref{f_similar_channel}. Similar to the
POST channel, we assume that the state of the channel is known to
the decoder; hence the output of the channel is $(Y_i,S_{i})$ (or,
equivalently from a capacity standpoint, $(Y_i,S_{i-1})$).
\begin{figure}[h!]{
\psfrag{d1}[][][1]{$s_{i-1}=0$} \psfrag{d2}[][][1]{$s_{i-1}=1$}
\psfrag{x}[][][1]{$x_{i}$} \psfrag{y}[][][1]{$y_i$}
\psfrag{a1}[][][0.9]{$\frac{1}{2}$}
\psfrag{a2}[][][0.9]{$\frac{1}{2}$}
\psfrag{TX}[][][1]{En}\psfrag{Xi}[][][0.9]{$X_i$}\psfrag{Yi}[][][0.9]{$Y_i,S_i\
\ $}\psfrag{P}[][][1]{$\ P(s_i,y_i|x_i)$}
\psfrag{Ch}[][][0.9]{Channel} \psfrag{Ps}[][][0.9]{$S_i$ i.i.d.
$\sim $Bernoulli($\frac{1}{2}$)} \psfrag{RX}[][][1]{De}
\psfrag{M}[][][0.9]{$M$} \psfrag{M2}[][][0.9]{$\hat M$}
\psfrag{0}[][][1]{0}\psfrag{1}[][][1]{1}
  \centerline{ \includegraphics[width=8cm]{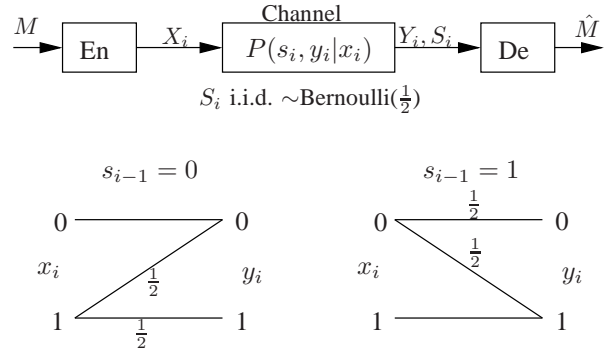}}
  \caption{ A channel similar  to the simple POST channel, except that the channel state, $\{S_i\}_{i\geq 1}$ is  i.i.d.
  Bernoulli($\frac{1}{2}$) independent of the input.}
 \label{f_similar_channel}
}\end{figure}

The non feedback capacity of this channel is simply
$C=\max_{P(x)}I(X;Y,S)$ and, because of symmetry, the input that
achieves the maximum is Bernoulli($\frac{1}{2}$) , resulting in a
capacity of  $H_b(\frac{1}{4})-\frac{1}{2}=0.3111$, where $H_b(p)$
is the binary entropy function. However, if there is perfect
feedback of $(Y_i,S_i)$ to the encoder, then the state of the
channel is known to the encoder and the capacity is simply the
capacity of the $Z$ (or $S$) channel, which is
$H_b(\frac{1}{5})-\frac{2}{5}=-\log_2 0.8=0.3219$. Evidently,
feedback increases the capacity of this channel.

The similarity between the channels may seem to hint that feedback
increases the capacity of the POST($\alpha$) channel as well.
Indeed, our initial interest in this channel was due to this belief,
in our quest for a channel with memory that would be amenable to
analysis under the ``to feed or not to feed'' framework of
constrained feedback in
\cite{AsnaniPermuterWeissman10_to_feed_or_not}, while exhibiting
non-trivial dependence of its capacity on the extent to which the
feedback is constrained.
However, numerical results based on the computational algorithm devised in
\cite{AsnaniPermuterWeissman10_to_feed_or_not} suggested that feedback does not increase the
capacity of the simple POST channel. This paper stemmed from our attempts to make sense of
these observations.

In order to prove that feedback does not increase the capacity of
some families of POST channels, we look at two convex optimization
problems: maximizing the directed information over regular input
distributions (non feedback case), i.e., $P(x^n)$ and, secondly,
over causal conditioning that is influenced by the feedback i.e.,
$P(x^n||y^{n-1})$. We show that a necessary and sufficient condition
for the solutions of the two optimization problems to achieve the
same value is that the induced output distributions $P(y^n)$ by the
respective optimal values $P^*(x^n)$ and $P^*(x^n||y^{n-1})$ are the
same. This necessary and sufficient condition that we establish, in
the generality of any finite state channel, follows from the KKT
conditions \cite[Ch. 5]{BoydOptimizationBook04} for convex
optimization problems.

%

The remainder of the paper is organized as follows. In Section
\ref{s_dircted_inco_causal_con}, we briefly present the definitions
of directed information and causal conditioning pmfs that we use
throughout the paper. In Section
\ref{s_max_directed_transorm_into_convex}, we show that the
optimization problem of maximizing the directed information over
causal conditioning pmfs is  convex. Additionally, using the KKT
conditions, we show that if the output distribution induced by the
conditional pmfs that achieve the maximum in the presence of
feedback  can also be induced by an input distribution that does not
use feedback, then feedback does not increase the capacity. In
Section \ref{s_capacity_y_channel} we compute the feedback capacity
of the POST($\alpha$) channel. Then we apply the result of Section
\ref{s_max_directed_transorm_into_convex} to show that it equals the
non feedback capacity  by establishing the existence of an input
distribution without feedback that induces the same output
distribution as the capacity achieving one for the feedback case. In
Section \ref{s_post_ab} we consider a binary POST$(a,b)$ channel
with two states; in each state there is a binary channel and the
channels have opposite parameters. The binary POST$(a,b)$ channel
generalizes the POST($\alpha$) channel and we show that  feedback
does not increase capacity  for this considerably larger class of
channels. In Section \ref{s_POST_increase_capacity}, we show that
unlike the the POST($a,b$), feedback may increase the capacity of
POST channels in general. In Section \ref{s_conclusion}, we conclude
and suggest some directions for further research on the family of
POST channels.

\section{Directed information, causal conditioning and notations\label{s_dircted_inco_causal_con}}


Throughout this paper, we denote random variables by capital letters
such as $X$. The probability $\Pr\{X=x\}$ is denoted by $p(x)$. We
denote the whole vector of probabilities by capital $P$, i.e.,
$P(x)$ is the probability vector of the random variable $X$. 

We use the {\it causal conditioning} notation $(\cdot||\cdot)$
developed by Kramer~\cite{Kramer98}. We denote by $p(x^n||y^{n-d})$
the probability mass function of $X^n = (X_1,\ldots, X_n)$,
\emph{causally conditioned} on $Y^{n-d}$ for some integer $d\geq 0$,
which is defined as
\begin{equation} \label{e_causal_cond_def}
p(x^n||y^{n-d}) := \prod_{i=1}^{n} p(x_i|x^{i-1},y^{i-d}).
\end{equation}
By convention, if $i < d$, then $y^{i-d}$ is set to null, i.e.,  if $i < d$ then
$p(x_i|x^{i-1},y^{i-d})$ is just $p(x_i|x^{i-1})$.  In particular, we use extensively the cases
$d=0,1$:
\begin{align}
p(x^n||y^{n}) &:= \prod_{i=1}^{n}p(x_i|x^{i-1},y^{i}),\\
p(x^n||y^{n-1}) &:= \prod_{i=1}^{n} p(x_i|x^{i-1},y^{i-1}).
\end{align}

The directed information was defined by Massey~\cite{Massey90}, inspired by Marko's work
\cite{Marko73} on bidirectional communication,  as
\begin{equation}\label{e_def_directed}
I(X^n\to Y^n) := \sum_{i=1}^n I(X^i;Y_i|Y^{i-1}).
\end{equation}
The directed information can also be rewritten as
\begin{equation}\label{e_kim_identity}
I(X^n \to Y^n) = \sum_{x^n,y^n}p(x^n||y^{n-1})p(y^n||x^n)\log
\frac{p(y^n||x^n)}{\sum_{x^n}p(x^n||y^{n-1})p(y^n||x^n)}.
\end{equation}
This is due to the definition of causal conditioning and the chain rule
\begin{equation}
p(x^n,y^n)=p(x^n||y^{n-1})p(y^n||x^n).
\end{equation}
We will make use the fact that directed information $I(X^n\to Y^n)$
is concave in $P(x^n||y^{n-1})$ for a fixed $P(y^n||x^n),$ which is
proved in Lemma \ref{l_concavity_dir} in Appendix
\ref{app_concavity}.

Directed information characterizes the capacity of point-to-point
channels with feedback~\cite{Chen05,Kim08_feedback_directed,
  TatikondaMitter_IT09, PermuterWeissmanGoldsmith09}. For channels where the state is a function of the output, of which the POST
  channel is a special case, it was shown
\cite{Chen05,PermuterCuffVanRoyWeissman08} that the feedback
capacity is given by
\begin{equation}\label{e_max_feedback}
C_{fb}=\lim_{n\to \infty} \frac{1}{n} \max_{P(x^n||y^{n-1})} I(X^n\to Y^n).
\end{equation}
On the other hand, without feedback the capacity is given by
\begin{equation}\label{e_max_feedback}
C=\lim_{n\to \infty} \frac{1}{n} \max_{P(x^n)} I(X^n\to Y^n),
\end{equation}
since the channel is indecomposable  \cite{Gallager68}. In the case where there is no feedback,
namely,  the Markov form $X_i-X^{i-1}-Y^{i-1}$ holds, $I(X^n\to Y^n)=I(X^n;Y^n),$ as shown in
\cite{Massey90}.


\section{Maximization of the directed information as a convex optimization problem
\label{s_max_directed_transorm_into_convex}} In order to show that
feedback does not increase the capacity of  POST channels, we
consider the two optimization problems:
\begin{equation}\label{e_opt_dir_feedback}
\max_{P(x^n||y^{n-1})} I(X^n\to Y^n)
\end{equation}
and
\begin{equation}\label{e_no_feedback}
\max_{P(x^n)} I(X^n\to Y^n).
\end{equation}
In this section, we  show that both problems are convex optimization
problems, and use the KKT condition to state a necessary and
sufficient condition for the two optimization problems to obtain the
same value.

A convex optimization problem, as defined in \cite[Ch.
4]{BoydOptimizationBook04}, is a problem of the form
\begin{align}\label{e_convex_fun}
&\mathrm{minimize \ \ \ \ }  f_0(x) \\ \notag &\mathrm{subject \ to } \ \  \ f_i(x) \leq b_i
\quad i=1,\cdots,k \\ \notag &\ \ \ \ \ \ \ \ \ \ \ \ \ \ \ g_j(x) = 0  \quad j=1,\cdots,l
\end{align}
where $f_0(x)$ and $\{f_i(x)\}_{i=1}^k$ are convex functions, and $\{g_j(x)\}_{j=1}^l$ are
affine.


In order to convert the optimization problem in
(\ref{e_opt_dir_feedback}) into a convex optimization problem, as
presented in (\ref{e_convex_fun}), we need to show that the set of
conditional pmfs $P(x^n||y^{n-1})$ can be expressed using
inequalities that contains only convex functions and equalities that
contains affine functions.

\begin{lemma}[Causal conditioning is a polyhedron] \label{l_cusal_polyhedron} The set of all causal conditioning distributions of the
form $P(x^n||y^{n-1})$ is a polyhedron in $\mathbb{R}^{|\mathcal
X|^n|\mathcal Y|^{n-1}}$ and is given by a set of linear equalities
and inequalities of the form:

\begin{equation}\label{e_causal1}
\begin{array}{ll}
p(x^n||y^{n-1})\geq 0,& \forall x^n, y^{n-1}, \\
 \sum_{x_{i+1}^n}p(x^n||y^{n-1})= \gamma_{x^{i},y^{i-1}},& \forall x^i,
y_{}^{n-1},
i\geq 1,\\
 \sum_{x_{1}^n}p(x^n||y^{n-1})= 1,& \forall  y_{}^{n-1}.
\end{array}
\end{equation}
\end{lemma}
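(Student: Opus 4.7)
The plan is to establish the lemma by proving both inclusions between the set of causal conditioning pmfs and the set defined by the constraints in (\ref{e_causal1}). Once both inclusions are verified, the polyhedral structure is immediate since the defining constraints are finitely many linear equalities and inequalities in the coordinates $p(x^n||y^{n-1}) \in \mathbb{R}^{|\mathcal X|^n|\mathcal Y|^{n-1}}$.

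For the easy (forward) direction, I would start from the definition $p(x^n||y^{n-1}) = \prod_{i=1}^n p(x_i|x^{i-1},y^{i-1})$ and verify each constraint in turn: non-negativity is immediate from non-negativity of each factor; the normalization $\sum_{x_1^n} p(x^n||y^{n-1}) = 1$ follows by summing out $x_n$, then $x_{n-1}$, and so on, each step yielding a factor of $1$; and for the marginal condition one writes
\begin{equation*}
\sum_{x_{i+1}^n} p(x^n||y^{n-1}) = \prod_{j=1}^{i} p(x_j|x^{j-1},y^{j-1}),
\end{equation*}
which manifestly depends only on $(x^i,y^{i-1})$ and not on $(y_i,\ldots,y_{n-1})$. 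This common value is precisely what the lemma names $\gamma_{x^i,y^{i-1}}$.

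For the reverse (harder) direction, assume a vector $\{v(x^n,y^{n-1})\}$ satisfies the three types of constraints with some values $\{\gamma_{x^i,y^{i-1}}\}$. The crucial observation — and this is the step I expect to be the main obstacle — is that the equality $\sum_{x_{i+1}^n} v(x^n||y^{n-1}) = \gamma_{x^i,y^{i-1}}$ encodes the defining structural feature of causal conditioning: the $(x^i,y^{i-1})$ marginal must not depend on the ``future'' outputs $y_i,\ldots,y_{n-1}$. Taking this for granted, define for every $i$ and every $(x^i,y^{i-1})$ with $\gamma_{x^{i-1},y^{i-2}} > 0$,
\begin{equation*}
q(x_i|x^{i-1},y^{i-1}) := \frac{\gamma_{x^i,y^{i-1}}}{\gamma_{x^{i-1},y^{i-2}}},
\end{equation*}
with the convention $\gamma_{x^0,y^{-1}}=1$, and extend arbitrarily (say uniformly) on the zero-measure region. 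One checks $q(x_i|x^{i-1},y^{i-1}) \geq 0$ and $\sum_{x_i} q(x_i|x^{i-1},y^{i-1}) = 1$ using the marginal consistency between level $i$ and level $i-1$. Telescoping then gives $\prod_{i=1}^n q(x_i|x^{i-1},y^{i-1}) = \gamma_{x^n,y^{n-1}}$. Finally, applying the marginal constraint at $i=n$ yields $\gamma_{x^n,y^{n-1}} = v(x^n,y^{n-1})$ (since summing over the empty set $x_{n+1}^n$ gives $v$ itself), so $v(x^n||y^{n-1}) = \prod_{i=1}^n q(x_i|x^{i-1},y^{i-1})$ is indeed a causal conditioning pmf.

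The two inclusions together identify the set of causal conditioning distributions exactly with the solutions of the linear system (\ref{e_causal1}); since a set in $\mathbb{R}^N$ cut out by finitely many linear inequalities and equalities is by definition a polyhedron, the proof is complete. The only delicate care needed is the handling of coordinates $(x^{i-1},y^{i-2})$ with $\gamma_{x^{i-1},y^{i-2}}=0$, which contribute zero on both sides and can be handled by an arbitrary choice of conditional extending the definition consistently.
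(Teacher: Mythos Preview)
Your proposal is correct and follows essentially the same route as the paper: both prove the forward inclusion directly from the definition and, for the reverse inclusion, define the conditionals $q(x_i|x^{i-1},y^{i-1}) = \gamma_{x^i,y^{i-1}}/\gamma_{x^{i-1},y^{i-2}}$ and telescope to recover the original vector. If anything, your treatment is slightly more careful than the paper's, since you explicitly address the case $\gamma_{x^{i-1},y^{i-2}}=0$, which the paper leaves implicit.
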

Note that the two equalities in (\ref{e_causal1}) may be unified
into one if we add $i=0$ to the equality cases and we restrict the
corresponding $\gamma$ to be unity. Furthermore, for $n=1$ we obtain
the regular vector probability, i.e., $p(x)\geq0, \ \forall x$ and
$\sum_x P(x)=1$.
\begin{proof}
It is straightforward to see that every causal conditioning $P(x^n||y^n)$ satisfies these
equalities and inequalities. Now, we need to show that if an element in $\mathbb{R}^{|\mathcal
X|^n|\mathcal Y|^{n-1}}$, call it  $P(x^n||y^{n-1})$, satisfies (\ref{e_causal1}), then
$P(x^n||y^{n-1})$ is a causal conditioning pmf, namely, there exists a sequence of regular
conditioning  $\{P(x_i|x^{i-1},y^{i-1})\}_{i=1}^n$ such that
$p(x^n||y^{n-1})=\prod_{i=1}^{n}p(x_i|x^{i-1},y^{i-1}).$

Let  us define for all $x^{i},y^{i-1}$
\begin{align}
p(x_i|x^{i-1},y^{i-1})&\triangleq \frac{\sum_{x^{i+1}}^n p(x^n||y^n)}{\sum_{x^{i}}^np(x^n||y^n)}\nonumber \\
&=\frac{\gamma_{x^{i},y^{i-1}}}{\gamma_{x^{i-1},y^{i-2}}}.\label{e_p_cond}
\end{align}
Now, note that the vector probability  $P(x_i|x^{i-1},y^{i-1})$
defined in (\ref{e_p_cond}) satisfies $p(x_i|x^{i-1},y^{i-1})\geq0$,
and $\sum_{x_i}p(x_i|x^{i-1},y^{i-1})=1.$ Furthermore, observe that
\begin{align}
\prod_{i=1}^{n}p(x_i|x^{i-1},y^{i-1})=\prod_{i=1}^{n}\frac{\gamma_{x^{i},y^{i-1}}}{\gamma_{x^{i-1},y^{i-2}}}\nonumber
=\frac{\gamma_{x^{n},y^{n-1}}}{1}=\gamma_{x^{n},y^{n-1}} =p(x^n||y^{n-1}).
\end{align}

\end{proof}

Note that the optimization problem given in
(\ref{e_opt_dir_feedback}) is a convex optimization one since the
set of causal conditioning pmfs is a polyhedron (Lemma
\ref{l_cusal_polyhedron}) and the directed information is concave in
$P(x^n||y^{n-1})$ for a fixed $P(y^n||x^{n})$ \cite[Lemma
2]{NaissPermuter12_BA_DI}. Therefore, the KKT conditions \cite[Ch
5.5.3]{BoydOptimizationBook04} are necessary and sufficient. The
next theorem states these conditions explicitly for our setting.

\begin{theorem}[Necessary and sufficient conditions for maximizing the the directed information]
A set of necessary and sufficient conditions for an input
probability  $P(x^n||y^{n-1})$ to achieve the maximum  in
(\ref{e_max_feedback}) is that for some numbers $\beta_{y^{n-1}}$
\begin{align}
\sum_{y_n}p(y^n||x^n)\log \frac{p(y^n||x^n)}{ep(y^n)}=\beta_{y^{n-1}}, \ \  \forall x^n, y^{n-1},
\text{
if } p(x^n||y^{n-1})>0, \nonumber \\
\sum_{y_n}p(y^n||x^n)\log \frac{p(y^n||x^n)}{ep(y^n)}\leq\beta_{y^{n-1}}, \ \  \forall x^n,
y^{n-1}, \text{ if } p(x^n||y^{n-1})=0,\label{e_th_cond}
\end{align}
where $p(y^n)=\sum_{x^n}p(y^n||x^n)p(x^n||y^{n-1})$. Furthermore,
the maximum is given by
 \begin{equation}\label{e_th_cap}
\max_{P(x^n||y^{n-1})} I(X^n\to Y^n)=\sum_{y^{n-1}}\beta_{y^{n-1}}+1.
\end{equation}

\end{theorem}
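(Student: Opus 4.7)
The plan is to invoke the KKT conditions for convex programs (\cite[Ch. 5.5.3]{BoydOptimizationBook04}). By Lemma~\ref{l_cusal_polyhedron} the feasible set is a polytope, and $I(X^n\to Y^n)$ is concave in $P(x^n||y^{n-1})$ for fixed channel law $P(y^n||x^n)$, so the KKT conditions are both necessary and sufficient. It therefore suffices to write the stationarity conditions in closed form and check that they coincide with the claim.

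The key computation is the partial derivative of $I(X^n\to Y^n)$ with respect to $p(x^n||y^{n-1})$, starting from the representation in \eqref{e_kim_identity}. Two contributions arise: a direct term $\sum_{y_n} p(y^n||x^n)\log\frac{p(y^n||x^n)}{p(y^n)}$, and a correction coming from $p(y^n)=\sum_{x^n} p(x^n||y^{n-1})p(y^n||x^n)$ inside the logarithm. Differentiating the latter via the chain rule, the numerator factor $\sum_{x^n}p(x^n||y^{n-1})p(y^n||x^n)$ collapses to $p(y^n)$ and cancels, leaving $-\sum_{y_n} p(y^n||x^n)=-1$. Absorbing the $-1$ into the logarithm gives
\[
\frac{\partial I(X^n\to Y^n)}{\partial p(x^n||y^{n-1})} = \sum_{y_n} p(y^n||x^n)\log\frac{p(y^n||x^n)}{e\,p(y^n)}.
\]
Attaching multipliers $\beta_{y^{n-1}}$ to the normalizations $\sum_{x^n}p(x^n||y^{n-1})=1$, and invoking complementary slackness on $p\ge 0$, the stationarity condition reproduces precisely the equalities and inequalities of \eqref{e_th_cond}. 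For the value formula \eqref{e_th_cap}, I would multiply the equality case by $p(x^n||y^{n-1})$ and sum over $(x^n,y^{n-1})$: the right-hand side becomes $\sum_{y^{n-1}}\beta_{y^{n-1}}$, while the left-hand side equals $I(X^n\to Y^n)-\sum_{x^n,y^n}p(x^n||y^{n-1})p(y^n||x^n) = I(X^n\to Y^n)-1$, where we used $\sum_{y^n}p(y^n)=1$, valid since the optimizer is a bona fide causal conditioning.

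The subtle point I expect to be the main obstacle is justifying that only the multipliers $\beta_{y^{n-1}}$ need to appear, even though Lemma~\ref{l_cusal_polyhedron} describes the polytope using the additional consistency equalities $\sum_{x_{i+1}^n}p(x^n||y^{n-1})=\gamma_{x^i,y^{i-1}}$ for $i<n$. The expected resolution is that the gradient computed above, when restricted to consistency-preserving directions, is automatically aligned with the causal-conditioning subspace, so the Lagrange multipliers dual to the consistency constraints may be chosen to vanish at the optimum; this reduces the KKT system to the clean form stated in the theorem.
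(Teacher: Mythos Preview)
Your approach is the same as the paper's: cast the maximization as a convex program over the polytope of Lemma~\ref{l_cusal_polyhedron}, invoke KKT, and compute the partial derivative of $I(X^n\to Y^n)$ with respect to $p(x^n||y^{n-1})$. Your derivative computation and the derivation of~\eqref{e_th_cap} by averaging match the paper exactly.

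Where you diverge from the paper is precisely at the ``subtle point'' you flag. The paper does \emph{not} argue that the multipliers attached to the consistency constraints can be taken to be zero. Instead, it treats the $\gamma_{x^i,y^{i-1}}$ not as fixed parameters but as additional \emph{free} optimization variables (the objective is independent of them, so this is harmless). Consequently the full KKT system contains, in addition to $\partial L/\partial p(x^n||y^{n-1})=0$, the stationarity equations $\partial L/\partial \gamma_{x^i,y^{i-1}}=0$. It is this second block of equations that the paper uses to eliminate the contribution $\sum_i \nu_{x^i,y^{n-1}}$ from the first block, leaving only the normalization multipliers $\nu_{y^{n-1}}$ and yielding~\eqref{e_th_cond}. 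Your proposed resolution (``the gradient is aligned with the causal subspace, hence the multipliers vanish'') is a different, unproven claim and does not replace this step. To complete the argument along the paper's lines, write the Lagrangian with the $\gamma$'s as variables, differentiate with respect to both $p(x^n||y^{n-1})$ and each $\gamma_{x^i,y^{i-1}}$, and use the latter to cancel the consistency multipliers in the former.
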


For $n=1$ we obtain a known result proved by Gallager \cite[Theorem
4.5.1]{Gallager68} that states that a sufficient and necessary
condition for $P^*(x)$ to achieve $\max_{P(x)}I(X;Y)$ is that
\begin{equation}
\sum_{y}p(y|x)\log \frac{p(y|x)}{p(y)}=C, \ \  \forall x \text{ if  } p^*(x)>0,
\end{equation}
and
\begin{equation}
\sum_{y}p(y|x)\log \frac{p(y|x)}{p(y)}\leq C, \ \  \forall x \text{ if } p(x)=0,
\end{equation}
for some $C$. Furthermore, $C=\max_{P(x)}I(X;Y).$

\begin{proof}
Using the fact that a causal conditioning pmf is a polyhedron (Lemma \ref{l_cusal_polyhedron}),
we can write the maximization of the directed information as a standard convex optimization
problem:

\begin{equation}\label{e_opt}
\begin{array}{ll}
\text{minimize} & -\sum_{x^n,y^n}p(x^n||y^{n-1})p(y^n||x^n)\log \frac{p(y^n||x^n)}{\sum_{x^n}p(x^n||y^{n-1})p(y^n||x^n)}  \\
\text{s.t. } & -p(x^n||y^{n-1})\leq 0,\  \forall x^n, y^{n-1},\\
&\sum_{x_{i+1}^n}p(x^n||y^{n-1})- \gamma_{x^{i},y^{i-1}}=0,\ \forall x^i, y_{}^{n-1}, i\geq 1\\
&\sum_{x_{}^n}p(x^n||y^{n-1})= 1,\ \forall  y_{}^{n-1}.
\end{array}
\end{equation}

The Lagrangian is defined as
\begin{align}
L=&-\sum_{x^n,y^n}p(x^n||y^{n-1})p(y^n||x^n)\log
\frac{p(y^n||x^n)}{\sum_{x^n}p(x^n||y^{n-1})p(y^n||x^n)}-
\sum_{x^n,y^{n-1}}\lambda_{x^n,y^{n-1}}p(x^n||y^{n-1}) \nonumber \\
& +\sum_{i=0}^n \sum_{x^i, y^{n-1}}\nu_{x^i, y_{}^{n-1}}\left( \sum_{x_{i+1}^n}p(x^n||y^{n-1})-
\gamma_{x^{i},y^{i-1}}\right)+ \sum_{y^{n-1}}\nu_{y_{}^{n-1}}\left( \sum_{x_{}^n}p(x^n||y^{n-1})-
1\right),
\end{align}
where $\lambda_{x^n,y^{n-1}}\geq0.$ The KKT conditions for this problem are
\begin{align}\label{e_kkt}
\frac{\partial L}{\partial L p(x^n||y^{n-1})}& =0, \ \forall x^n, y_{}^{n-1},\nonumber \\
\frac{\partial L}{\partial L \gamma_{x^{i},y^{i-1}}}& =0, \ \forall x^n, y_{}^{n-1},\nonumber \\
\lambda_{x^n,y^{n-1}}& \geq0,  \ \forall x^n, y_{}^{n-1}, \nonumber \\
\lambda_{x^n,y^{n-1}}p(x^n||y^{n-1})& =0,  \ \forall x^n, y_{}^{n-1}
\end{align}
and that $P(x^n||y^{n-1})$ is a valid causal conditioning pmf, namely, satisfies the constraint
of (\ref{e_opt}). Now, we need to show that the KKT conditions described above are equivalent to
(\ref{e_th_cond}). Let us compute the derivatives in order to write the conditions in
(\ref{e_kkt}) more explicitly:
\begin{align}
\frac{\partial L}{\partial
p(x^n||y^{n-1})}=&-\sum_{y_n}p(y^n||x^n)\log{P(y^n||x^n)}-\sum_{y_n}p(y^n||x^n)\log
\frac{1}{\sum_{x^n}p(x^n||y^{n-1})P(y^n||x^n)} \nonumber \\
& +\sum_{x'^n,y_n}p(x'^n||y^{n-1})P(y^{n}||x'^n) \frac{1}{\sum_{\tilde x^n}p(\tilde
x^n||y^{n-1})p(y^n||\tilde x^n)}p(y^n||x^n)\nonumber \\
&- \lambda_{x^n,y^{n-1}}+ \sum_{i=0}^n \nu_{x^i, y_{}^{n-1}} + \nu_{ y_{}^{n-1}} \nonumber \\
 =& -\sum_{y_n}p(y^n||x^n)\log\frac{{p(y^n||x^n)}}{\sum_{x^n}p(x^n||y^{n-1})p(y^n||x^n)}+\sum_{y_n}p(y^n||x^n)\nonumber \\
&- \lambda_{x^n,y^{n-1}}+ \sum_{i=0}^n \nu_{x^i, y_{}^{n-1}} + \nu_{ y_{}^{n-1}} \nonumber \\
 =& -\sum_{y_n}p(y^n||x^n)\log\frac{{p(y^n||x^n)}}{ep(y^n)}- \lambda_{x^n,y^{n-1}}+ \sum_{i=0}^n \nu_{x^i, y_{}^{n-1}} + \nu_{ y_{}^{n-1}} 
\end{align}
\begin{equation}
\frac{\partial L}{\partial \gamma_{x^{i},y^{i-1}}}=-\sum_{i=0}^n
\nu_{x^i, y_{}^{n-1}}.
\end{equation}
Hence, the KKT conditions given in (\ref{e_kkt}) become
\begin{align}\label{e_kkt1}
\sum_{y_n}p(y^n||x^n)\log\frac{{p(y^n||x^n)}}{ep(y^n)}=&\nu_{ y_{}^{n-1}} \ \forall x^n, y_{}^{n-1} \text{ if } P(x^n||y^{n-1})>0, \nonumber \\
\sum_{y_n}p(y^n||x^n)\log\frac{{p(y^n||x^n)}}{ep(y^n)}\leq &\nu_{ y_{}^{n-1}} \ \forall x^n,
y_{}^{n-1} \text{ if } p(x^n||y^{n-1})=0,
\end{align}
which is exactly (\ref{e_th_cond}). Now, to obtain (\ref{e_th_cap})
we use (\ref{e_kkt1}) and observe that
\begin{align}\label{e_kkt2}
\sum_{x^n}\sum_{y^{n-1}} p(x^n||y^{n-1})
\sum_{y_n}p(y^n||x^n)\log\frac{{p(y^n||x^n)}}{ep(y^n)}=&\sum_{X^n}\sum_{y^{n-1}}
p(x^n||y^{n-1})\nu_{ y_{}^{n-1}}.
\end{align}
Indeed, the LHS of (\ref{e_kkt2}) equals to $I(X^n\to Y^n)-1$. The
RHS equals to $\sum_{y^{n-1}} \nu_{ y_{}^{n-1}},$ since
$\sum_{x^n}p(x^n||y^{n-1})=1$ for all $y^{n-1}$ and, therefore,
(\ref{e_kkt2}) implies that (\ref{e_th_cap}) holds.
\end{proof}

The next corollary is the main tool we use in this paper to prove
that the feedback capacity and the non feedback capacity of a
channel are equal.
\begin{corollary}\label{cor_pyn}
Let $P^*(x^n||y^{n-1})$ be a pmf that all its elements are positive
and that achieves the maximum of $ \max_{P(x^n||y^{n-1})} I(X^n\to
Y^n),$ and let $P^*(y^n)$ be the output probability induced by
$P^*(x^n||y^{n-1}).$ If for any $n$ there exists an input
probability distribution $P(x^n)$ such that
 \begin{equation}
p^*(y^n)=\sum_{x^n}p(y^n||x^n)p(x^n),
\end{equation}
 then the feedback capacity and the nonfeedback capacity
are the same.
\end{corollary}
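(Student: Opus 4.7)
The plan is to invoke the KKT characterization from the preceding theorem to transfer optimality from the capacity-achieving causal conditioning pmf to the hypothesized feedback-free pmf. Since $P^*(x^n||y^{n-1})$ is strictly positive and attains the maximum in (\ref{e_opt_dir_feedback}), there exist constants $\{\beta_{y^{n-1}}\}$ such that
\[
\sum_{y_n} p(y^n||x^n)\log\frac{p(y^n||x^n)}{e\, p^*(y^n)}=\beta_{y^{n-1}}, \qquad \forall\, x^n, y^{n-1},
\]
and the maximum equals $\sum_{y^{n-1}}\beta_{y^{n-1}}+1$. The key observation is that the left-hand side depends on the chosen input pmf only through the induced output $p^*(y^n)$, since the channel kernel $p(y^n||x^n)$ is fixed.

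Next I would view the input pmf $P(x^n)$ supplied by the hypothesis as a particular element of the feasible polyhedron of (\ref{e_opt_dir_feedback}) by declaring $p(x_i|x^{i-1},y^{i-1})=p(x_i|x^{i-1})$, so that the causal conditioning pmf ignores the past outputs. Since $P(x^n)$ induces the same output distribution $p^*(y^n)$ by assumption, it satisfies the same equality in the KKT conditions with the same constants $\{\beta_{y^{n-1}}\}$; the inequality branch imposes no further restriction, since the equality already holds at every $x^n$. The sufficiency direction of the theorem then yields
\[
\left. I(X^n\to Y^n)\right|_{P(x^n)} \;=\; \sum_{y^{n-1}}\beta_{y^{n-1}}+1 \;=\; \max_{P(x^n||y^{n-1})} I(X^n\to Y^n),
\]
so the feedback-free maximum meets the feedback maximum at block length $n$.

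Combining this with the trivial reverse inequality $\max_{P(x^n)} I(X^n\to Y^n) \leq \max_{P(x^n||y^{n-1})} I(X^n\to Y^n)$ gives equality for every $n$. Dividing by $n$, letting $n\to\infty$, and using the fact that in the absence of feedback $I(X^n\to Y^n)=I(X^n;Y^n)$, we conclude $C=C_{fb}$. There is no real obstacle in the argument; the only point worth verifying is that the KKT condition in (\ref{e_th_cond}) is determined by the output distribution alone once the channel is fixed, which is transparent from its form.
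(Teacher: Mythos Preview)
Your argument is correct and follows essentially the same line as the paper: both rely on the observation that the KKT condition (\ref{e_th_cond}) depends on the input pmf only through the induced output $p^*(y^n)$. The only cosmetic difference is that the paper sums the equality in (\ref{e_th_cond}) over $y^{n-1}$ to obtain the single-constant KKT condition for the \emph{non-feedback} problem $\max_{P(x^n)}I(X^n;Y^n)$ and verifies that $P(x^n)$ satisfies it, whereas you instead embed $P(x^n)$ into the feedback polyhedron and verify the \emph{feedback} KKT conditions directly; both routes arrive at the same value $\sum_{y^{n-1}}\beta_{y^{n-1}}+1$.
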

\begin{proof}
Note that the sufficient and necessary condition given in (\ref{e_th_cond}) depends only on the
channel causal conditioning pmf $P(y^n||x^n)$ and the output pmf $P(y^n)$. Furthermore, note that
if  (\ref{e_th_cond}) is satisfied then
\begin{align}
\sum_{y^n}p(y^n||x^n)\log \frac{p(y^n||x^n)}{ep(y^n)}=\sum_{y^{n-1}}\beta_{y^{n-1}}, \ \  \forall
x^n,
\end{align}
and since for the non-feedback case $p(y^n||x^n)=p(y^n|x^n)$, $\forall\ (x^n,y^n)$, we obtain
\begin{align}
\sum_{y^n}p(y^n|x^n)\log \frac{p(y^n|x^n)}{ep(y^n)}=\sum_{y^{n-1}}\beta_{y^{n-1}}, \ \  \forall
x^n.
\end{align}
This means that the KKT conditions of $\max_{P(x^n)}I(X^n;Y^n)$ are satisfied. Furthermore, the
 maximum value for both optimization problems is $\sum_{y^{n-1}}\beta_{y^{n-1}}+1$ and,
 therefore, they are equal.
\end{proof}

\section{Capacity of the POST($\alpha$) channel with and without feedback \label{s_capacity_y_channel}}

\begin{lemma}[Feedback capacity]\label{l_posta_feedback}
The feedback capacity of the POST($\alpha$) channel  is the same as
of the memoryless $Z$ channel with parameter $\alpha$, which is
$C=-\log_2 c$ where
\begin{equation}\label{e_normalization_coeeficient}
c=(1+\bar{\alpha} \alpha^{\frac{\alpha}{\bar{\alpha}}})^{-1}.
\end{equation}
The behavior  of the capacity as a function of $\alpha$ is depicted
in Fig. \ref{f_post_alpha}.
\end{lemma}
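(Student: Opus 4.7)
The plan is to prove the feedback capacity equals $-\log_2 c$ by matching single-letter upper and lower bounds. The guiding intuition is that with perfect feedback the encoder knows $Y_{i-1}$ at time $i$, and conditionally on $Y_{i-1}$ the mapping $X_i \mapsto Y_i$ is a memoryless $Z$-channel (when $Y_{i-1}=0$) or $S$-channel (when $Y_{i-1}=1$). Both of these have the same capacity, namely $-\log_2 c$, so one expects each conditional channel use to contribute at most this much mutual information, and a state-dependent feedback strategy should make the bound tight.

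For the upper bound I would start from the formula $C_{fb} = \lim_n \frac{1}{n}\max I(X^n \to Y^n)$ cited in the paper and expand
\begin{equation*}
I(X^n\to Y^n) \;=\; \sum_{i=1}^n \bigl[H(Y_i\mid Y^{i-1}) - H(Y_i\mid X^i,Y^{i-1})\bigr].
\end{equation*}
The POST structure $p(y_i\mid x^i,y^{i-1}) = p(y_i\mid x_i,y_{i-1})$ reduces the second term to $H(Y_i\mid X_i,Y_{i-1})$, while ``conditioning reduces entropy'' gives $H(Y_i\mid Y^{i-1}) \le H(Y_i\mid Y_{i-1})$. Hence
\begin{equation*}
I(X^n\to Y^n) \;\le\; \sum_{i=1}^n I(X_i;Y_i\mid Y_{i-1})
\;=\; \sum_{i=1}^n \sum_{y_{i-1}} p(y_{i-1})\, I(X_i;Y_i\mid Y_{i-1}{=}y_{i-1}).
\end{equation*}
For each value of $y_{i-1}$ the inner conditional mutual information is evaluated over a memoryless $Z$- or $S$-channel with parameter $\alpha$, so it is bounded by $-\log_2 c$. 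Dividing by $n$ and taking limits yields $C_{fb}\le -\log_2 c$.

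For the lower bound I would exhibit an explicit stationary, Markov feedback strategy: let $P(X_i \mid Y_{i-1}=0)$ be the capacity-achieving input for the $Z$-channel and $P(X_i\mid Y_{i-1}=1)$ be the capacity-achieving input for the $S$-channel (which by symmetry is just the flipped version). Under this strategy $X_i$ depends on $(X^{i-1},Y^{i-1})$ only through $Y_{i-1}$, so $\{Y_i\}$ becomes a time-homogeneous Markov chain; consequently $H(Y_i\mid Y^{i-1}) = H(Y_i\mid Y_{i-1})$ and every inequality used in the upper-bound derivation is met with equality, giving $\frac{1}{n} I(X^n\to Y^n) \to -\log_2 c$ after the chain reaches its (unique) stationary distribution. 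Combined with the upper bound, this proves the lemma.

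The main obstacle I anticipate is bookkeeping rather than any deep step: one must carefully invoke the POST channel structure to strip $X^{i-1}$ and $Y^{i-2}$ from $H(Y_i\mid X^i,Y^{i-1})$, and, for the achievability direction, verify that the chosen $P(X_i\mid Y_{i-1})$ renders $\{Y_i\}$ Markov so the conditioning-reduces-entropy inequality becomes an equality. The actual value of $c$ does not require fresh derivation; it is simply the standard $Z$-channel capacity formula, obtained by maximizing $I(X;Y)$ over $P(X)$ against a $Z$-channel with crossover $\alpha$.
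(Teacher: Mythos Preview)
Your proposal is correct and follows essentially the same approach as the paper: the converse uses the same expansion of directed information, the POST Markov structure to reduce $H(Y_i\mid X^i,Y^{i-1})$ to $H(Y_i\mid X_i,Y_{i-1})$, conditioning-reduces-entropy on the first term, and the per-letter bound by the $Z$-channel capacity; achievability is by the state-dependent input that makes each conditional channel a capacity-achieving $Z$/$S$ channel. The only bookkeeping point you do not spell out (and the paper handles via the bounded-difference inequality $|I(X^n\to Y^n)-I(X^n\to Y^n\mid Y_0)|\le H(Y_0)$) is the dependence on the initial state $Y_0$, but this is exactly the kind of detail you flagged as an anticipated obstacle.
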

\begin{figure}[h!]{
\psfrag{a}[][][0.8]{}
\psfrag{c}[][][0.9]{Capacity of POST($\alpha$)}
\psfrag{alpha}[][][1]{$\alpha$}
  \centerline{
  \includegraphics[width=6cm]{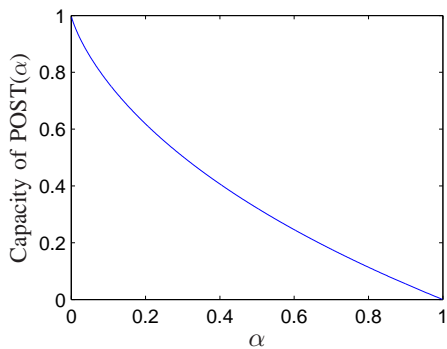}}
  \caption{The capacity of the POST($\alpha$) channel with and without feedback. This is also the capacity of
  the $Z$ channel with parameter $\alpha$\label{f_post_alpha}}
}\end{figure}
 {\it Proof of the achievability and the capacity of
$Z$ channel with parameter $\alpha$:} the achievability proof is
trivial since both the encoder and the decoder know if the channel
behaves as a $Z$ or an $S$ channel. The input probability that
maximizes the mutual information, i.e. $\arg \max_{P_X}I(X;Y),$ for
the memoryless $Z$ channels with parameter $\alpha$ is
\begin{equation}
P(x=1)=c{\alpha^{\frac{\alpha}{\bar{\alpha}}}} \ \ \
P(x=0)=c({1-\alpha^{\frac{1}{\bar{\alpha}}}}),
\end{equation}
where $c$ is a normalization coefficient and is given in
(\ref{e_normalization_coeeficient}). The output probability for the
$Z$ channel with parameter $\alpha$ is
\begin{align}\label{e_py}
&P(y=1)=\bar \alpha P(x=1)=c{\bar \alpha
\alpha^{\frac{\alpha}{\bar{\alpha}}}} \nonumber \\
&P(y=0)=1-c{\bar \alpha \alpha^{\frac{\alpha}{\bar{\alpha}}}}=c.
\end{align}
The capacity of the $Z$ channel with parameter $\alpha$ is
\begin{align}
C=\max_{P_X}I(X;Y)=-\log_2 c,
\end{align}
which is also an achievable rate for the POST($\alpha$) channel with
feedback. \hfill \QED

{\it Proof of converse:}
 The upper bound is given in the following set of
equalities and inequalities:

\begin{eqnarray}
C_{fb}&\stackrel{(a)}{=}&\lim_{n\to \infty} \max_{P(x^n||y^{n-1})} \frac{1}{n} I(X^n\to Y^n)\nonumber\\
&\stackrel{(b)}{=}&\lim_{n\to \infty} \max_{P(x^n||y^{n-1})}  \frac{1}{n} I(X^n\to Y^n|Y_0)\nonumber\\
&=&\lim_{n\to \infty} \max_{P(x^n||y^{n-1})}  \frac{1}{n} \sum_{i=1}^n H(Y_i|Y^{i-1})-H(Y_i|Y^{i-1},X^{i}) \nonumber\\
&\stackrel{(c)}{\leq} &\lim_{n\to \infty} \max_{P(x^n||y^{n-1})} \frac{1}{n} \sum_{i=1}^n H(Y_i|Y_{i-1})-H(Y_i|Y_{i-1},X_{i}) \nonumber\\
&\stackrel{}{=} &\lim_{n\to \infty} \max_{P(x^n||y^{n-1})} \sum_{i=1}^n P(y_i=0)I(X_i;Y_i|y_{i-1}=0)+P(y_i=1)I(X_i;Y_i|y_{i-1}=1) \nonumber\\
&\stackrel{(d)}{=} &\lim_{n\to \infty} \max_{\{P(x_i|y_{i-1})\}_{i\geq1}} \sum_{i=1}^n P(y_i=0)I(X_i;Y_i|y_{i-1}=0)+P(y_i=1)I(X_i;Y_i|y_{i-1}=1) \nonumber\\
&\stackrel{}{=}&\max_{P(x_i|y_{i-1}=0)} P(y_i=0)I(X_i;Y_i|y_{i-1}=0)+\max_{P(x_i|y_{i-1}=1)} P(y_i=1)I(X_i;Y_i|y_{i-1}=1), \text{ for some } i \nonumber\\
&\stackrel{(e)}{=}& -P(y_i=0)\log_2 c-P(y_i=1)\log_2 c,\nonumber\\
&\stackrel{}{=}& -\log_2 c,
\end{eqnarray}
where (a) follows from the capacity formula given in
\cite{PermuterWeissmanGoldsmith09,Chen05},  (b) from the inequality
$|I(X^n\to Y^n)-I(X^n\to Y^n|S)|\leq H(S)$ \cite[Lemma
4]{PermuterWeissmanGoldsmith09},  (c) from the fact that
conditioning reduces entropy and from the Markov chain
$Y_i-(X_i,Y_{i-1})-(X^{i-1},Y^{i-2})$,  (d) from the fact that the
set of causal conditioning $P(x^n||y^{n-1})$ is equivalent to the
set of $\{P(x_i|x^{i-1},y^{i-1})\}$ and in this particular case it's
enough to maximize only over $\{P(x_i|y_{i-1})\}$ because of the
objective. Finally, Step (e) follows from the capacity of the
memoryless $Z$-channel with parameter $\alpha$. \hfill \QED

Note that the induced $Y_i$ is a Markov chain with transition
probability $c{\bar \alpha \alpha^{\frac{\alpha}{\bar{\alpha}}}}$
(as $X_i$ depends on the past $X^{i-1},Y^{i-1}$ only through
$Y_{i-1}$). Now, we are interested in expressing the conditional pmf
of the POST($\alpha$) channel recursively. This recursive formula
will be used later to find an input distribution that does not
utilize the feedback for the case of a POST Channel without feedback
and achieves the same output distribution, namely, a Markov chain
with transition probability $c{\bar \alpha
\alpha^{\frac{\alpha}{\bar{\alpha}}}}$.

Table \ref{t_P1_simple} presents the conditional pmf of a
POST($\alpha$) channel when $n=1.$
\begin{table}[h!]
\begin{center}
\begin{tabular}{|l||*{2}{c|}}\hline
\backslashbox{$Y_1$}{$X_1$}
&\makebox[3em]{0}&\makebox[3em]{1}\\\hline\hline 0
&1&$\alpha$\\\hline 1 &0& $\bar \alpha$\\ \hline \hline
\end{tabular}
\ \ \ \ \ \ \ \
\begin{tabular}{|l||*{2}{c|}}\hline
\backslashbox{$Y_1$}{$X_1$}
&\makebox[3em]{0}&\makebox[3em]{1}\\\hline\hline 0 &$\bar
\alpha$&0\\\hline 1 &$\alpha$& 1\\ \hline \hline
\end{tabular}
\caption{ Conditional probabilities $P(Y_1|X_1,s_0=0)$ (on Left) and $P(Y_1|X_1,s_0=1)$ (on the
Right) of the POST($\alpha$) channel. \label{t_P1_simple}}
\end{center}
\end{table}
Let us denote by $P_{n,0}$ and  $P_{n,1}$ the conditional matrices
of the channel given  the respective initial state, i.e., $s_0$ is 0
and 1, respectively. Namely,
\begin{eqnarray}\label{e_def_Pn}
P_{n,0} &\triangleq& P(y^n||x^n,s_0=0)\nonumber\\
P_{n,1} &\triangleq& P(y^n||x^n,s_0=1).
\end{eqnarray}
The columns of the matrices $P_{n,0}$ and  $P_{n,1}$  are indexed by
$x^n=(x_1,x_2,...,x_n)$ and the rows by  $y^n=(y_1,y_2,...,y_n)$
arranged via lexicographical order, where $x_1$ and $y_1$ are the
most significant bits and $x_n$ and $y_n$ are the least significant
bits. For instance, the conditional probabilities $P(y_1|x_1,s_0=0)$
and $P(y_1|x_1,s_0=1)$ are given in Table \ref{t_P1_simple}. Hence,
$P_{n,0}$ and $P_{n,1},$  for $n=1,$ are given by
\begin{equation}
P_{1,0} = \left[\begin{array}{cc} 1&\alpha\\
0& \bar \alpha  \end{array} \right] \ \ \ \ \ \ \ \
P_{1,1} =\left[\begin{array}{cc} \bar \alpha &0\\
\alpha & 1  \end{array} \right].
\end{equation}
\begin{table}[h!]
\begin{center}
\begin{tabular}{|l||*{4}{c|}}\hline
\backslashbox{$Y_1 Y_2$}{$X_1 X_2$}
&\makebox[3em]{00}&\makebox[3em]{01}&\makebox[3em]{10}&\makebox[3em]{11}\\\hline\hline
00
&1&$\alpha$  &$ \alpha$&$\alpha ^2$ \\\hline 01 &0& $\bar \alpha$ &0& $\alpha \bar \alpha$\\
\hline \hline 10 &0&$0$  &$\bar \alpha^2 $&$0$ \\\hline 11 &0& $0$ &$\bar \alpha \alpha $& $\bar \alpha$\\
\hline \hline
\end{tabular}
\caption{ Conditional probabilities $P(Y^2||X^2,s_0=0)$ .}
\label{t_P2_simple}
\end{center}
\end{table}
Table \ref{t_P2_simple} presents   $P(y^2||x^2,s_0=0)$ and the
corresponding matrix $P_{2,0} $ is given in eq.
(\ref{e_P_20_simple}).
\begin{equation}
P_{2,0} = \left[ \begin{array}{cccc} 1&\alpha  &\alpha&\alpha^2\\
0& \bar \alpha &0& \alpha \bar \alpha\\
0&0  &\bar \alpha ^2&0 \\
0& 0 &\bar \alpha \alpha& \alpha \end{array} \right]
\label{e_P_20_simple}
\end{equation}

From the channel definition, the following recursive relation holds
\begin{equation}
P_{n,0}=\left[ \begin{array}{cc}  1\cdot P_{n-1,0}& \alpha\cdot
P_{n-1,0}\\ 0\cdot P_{n-1,1} & \bar \alpha \cdot
P_{n-1,1}\end{array} \right]
\end{equation}
and
\begin{equation}
P_{n,1}=\left[ \begin{array}{cc}  \bar \alpha \cdot P_{n-1,0}& 0\cdot P_{n-1,0}\\
\alpha \cdot P_{n-1,1} & 1\cdot P_{n-1,1}\end{array} \right],
\end{equation}
where $P_{0,0}=P_{0,1}=1$ (i.e., the one by one unit matrix).

Recall that the  output process $\{Y_i\}_{i\geq 1}$ induced by the
input that achieves the feedback capacity, is a binary symmetric
Markov chain with transition probability $c{\bar \alpha
\alpha^{\frac{\alpha}{\bar{\alpha}}}}$, see (\ref{e_py}). Let
$p_0(y^n)$ and $p_1(y^n)$ denote the probability of $y^n$ given the
initial state 0 and 1, respectively. Hence, for $n=1$ we have
\begin{align}
p_0(y_1=0)&=p_1(y_1=1)=c \nonumber \\
p_0(y_1=1)&=p_1(y_1=0)=c{\bar \alpha
\alpha^{\frac{\alpha}{\bar{\alpha}}}}.\label{e_0_markov}
\end{align}
For $n\geq2,$
\begin{align}\label{e_markov}
 p_0(y_n| y^{n-1})=p_1(y_n| y^{n-1})=p_{y_{n-1}}(y_n).
\end{align}%
Now, we present this Markov process in a recursive way. Recall that
$P(y^n)$ is represented as  a (column) probability vector of
dimension $2^n$.
\begin{lemma}[vector representation of a pmf of a symmetric Markov process]
Let $Y^n$ be binary symmetric Markov with transition probability
$\delta$. Let $P_0(y^n)$ and $P_1(y^n)$ be the vector pmf when the
intial state is $0$ and $1$, respectively. One can describe the
vector pmf using the following recursive relation
\begin{equation}
P_{0}(y^n) = \left[\begin{array}{c} \bar \delta P_0(y^{n-1})\\
 \delta P_1(y^{n-1}) \end{array} \right] \ \ \ \ \text{ and} \ \ \ \ \
P_{1}(y^n) = \left[\begin{array}{c} \delta P_0(y^{n-1})\\
 \bar \delta P_1(y^{n-1}) \end{array} \right],\label{e_p_lemma_simple}
\end{equation}
where  $P_0(y^{0})=P_1(y^{0})=1,$ and $\left[\begin{array}{c}  u\\
 v \end{array} \right]$ denotes the column vector obtained by concatenating the two column vectors $u$ and $v$.
\end{lemma}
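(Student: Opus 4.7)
The plan is to prove the recursion by induction on $n$, peeling off the first symbol $y_1$ and using the Markov property together with time-homogeneity of the chain.

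For the base case, take $n=1$. Starting from $s_0=0$, we have $p_0(y_1=0)=\bar\delta$ and $p_0(y_1=1)=\delta$ (and symmetrically for $s_0=1$), which, with the convention $P_0(y^0)=P_1(y^0)=1$, gives exactly the vectors on the right-hand side of (\ref{e_p_lemma_simple}) at $n=1$. (This matches (\ref{e_0_markov}) once one identifies $\delta$ with the cross-over probability of the chain.)

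For the inductive step, condition on the value of $y_1$. The recipe is to write, for $s_0\in\{0,1\}$,
\begin{equation*}
p_{s_0}(y^n) \;=\; p_{s_0}(y_1)\,p_{s_0}(y_2,\ldots,y_n\mid y_1) \;=\; p_{s_0}(y_1)\,p_{y_1}(y_2,\ldots,y_n),
\end{equation*}
where the second equality uses the Markov property ($(Y_2,\ldots,Y_n)$ depends on $s_0$ only through $Y_1$) and time-homogeneity (the distribution of $(Y_2,\ldots,Y_n)$ given $Y_1=s$ equals the distribution of $(Y_1,\ldots,Y_{n-1})$ started from state $s$). Thus, starting from $s_0=0$, the block of entries with $y_1=0$ has total factor $\bar\delta$ multiplied by the vector $P_0(y^{n-1})$ (by the inductive hypothesis), while the block with $y_1=1$ contributes $\delta\,P_1(y^{n-1})$. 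Because the lexicographic ordering places $y_1$ as the most significant bit, the first half of the coordinates of $P_0(y^n)$ corresponds to $y_1=0$ and the second half to $y_1=1$, yielding precisely the stacked form in (\ref{e_p_lemma_simple}). The case $s_0=1$ is identical modulo swapping $\bar\delta\leftrightarrow\delta$.

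The argument is essentially bookkeeping: the only thing to be careful about is that the lexicographic convention used for the matrices $P_{n,0},P_{n,1}$ (most significant bit first) is consistent with reading off the two halves of the vector $P_{s_0}(y^n)$ as conditioning on $y_1=0$ and $y_1=1$. Apart from that check, there is no substantive obstacle: the Markov property and homogeneity do all the work, and the base case is a direct restatement of (\ref{e_0_markov}).
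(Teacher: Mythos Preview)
Your argument is correct, but it proceeds in the opposite direction from the paper's. You start from the assumption that $Y^n$ is a homogeneous Markov chain and derive the recursion by conditioning on $y_1$: the identity $p_{s_0}(y^n)=p_{s_0}(y_1)\,p_{y_1}(y_2,\ldots,y_n)$ immediately gives the stacked form once one notes the lexicographic convention puts the $y_1=0$ block on top. The paper instead \emph{starts} from the recursion and verifies that it defines a Markov process: it checks the base case and then shows, by repeatedly peeling off the leading coordinate, that $p_0(y_n\mid y^{n-1})=p_1(y_n\mid y^{n-1})=p_{y_{n-1}}(y_n)$, i.e., that (\ref{e_markov}) holds. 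Since both the Markov specification and the recursion uniquely determine the joint pmf, either direction suffices.

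Your route is the more direct one for the lemma as stated (``Markov $\Rightarrow$ recursion''), and it avoids the somewhat roundabout verification the paper does. One small remark: the phrase ``by the inductive hypothesis'' in your inductive step is not quite right---the identification of the conditional law of $(Y_2,\ldots,Y_n)$ given $Y_1=s$ with $P_s(y^{n-1})$ is a consequence of the Markov property and time-homogeneity, not of the inductive hypothesis. In fact no induction is needed at all in your argument; the recursion for every $n$ follows directly from one application of the Markov property. This is harmless, but worth tightening.
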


\begin{proof}
We prove this claim by induction. Note that, for $n=1$
(\ref{e_p_lemma_simple}) implies
\begin{align}
p_0(y_1=0)&=p_1(y_1=1)=\bar \delta \nonumber \\
p_0(y_1=1)&=p_1(y_1=0)=\delta.\label{e_0_markov_lemma}
\end{align}
Now, we need to show that, regardless of the initial state,
(\ref{e_markov}) holds. Assume $y_1=1$ and the initial state is 0,
then for any $y_2^n$
\begin{eqnarray}
p_{0}(y_n|y^{n-1})&=& \frac{p_{0}(y^{n})}{p_{0}(y^{n-1})}\nonumber \\
&=& \frac{p_{0}(1,y_2^{n})}{p_{0}(1,y_2^{n-1})}\nonumber \\
&=& \frac{\delta p_{1}(y_2^{n})}{\delta p_{1}(y_2^{n-1})}\nonumber \\
&=& \frac{p_{1}(y_2^{n})}{p_{1}(y_2^{n-1})}\nonumber \\
&=& p_{1}(y_n|y_2^{n-1}).\label{e_p_procedure}
\end{eqnarray}
In a similar way we obtain for any $y_2^n$ and any initial state
\begin{align}
p_0(y_n| y^{n-1})=p_1(y_n| y^{n-1})=p_{y_1}(y_n| y_2^{n-1}).
\end{align}
By repeating the procedure in (\ref{e_p_procedure}) $i$ times, we
obtain that for $1\leq i<n$ and $\forall y_2^n$
\begin{align}
p_0(y_n| y^{n-1})=p_1(y_n| y^{n-1})=p_{y_i}(y_n| y_{i+1}^{n-1}).
\end{align}
Now, note that choosing $i=n-1$ we obtain (\ref{e_markov}), which
means that the process is indeed Markov with transition probability
$\delta$.
\end{proof}

The following is our first main result:
\begin{theorem}\label{l_simple_y_channel}
Feedback does not increase the capacity of the POST($\alpha$) channel.
\end{theorem}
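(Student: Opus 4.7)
The plan is to apply Corollary~\ref{cor_pyn}. By Lemma~\ref{l_posta_feedback}, the feedback capacity equals $-\log_2 c$ and is achieved by an input whose induced output $\{Y_i\}$ is a binary symmetric Markov chain $P^*(y^n)$ with transition probability $\delta = c\bar\alpha \alpha^{\alpha/\bar\alpha}$; moreover the feedback-optimal causal-conditioning pmf $P^*(x^n||y^{n-1}) = \prod_{i=1}^{n} P^*(x_i|y_{i-1})$ has strictly positive entries because the Z-channel capacity-achieving input assigns positive mass to both symbols (for $0<\alpha<1$; the boundary cases are trivial). Hence it suffices, for each $n$ and each initial state $s_0 \in \{0,1\}$, to exhibit a non-feedback input distribution $P(x^n)$ that induces the same output $P_{s_0}(y^n)$. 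In the matrix notation of the excerpt this is a linear system $P_{n,s_0} Q_{n,s_0} = P_{s_0}(y^n)$ for a column vector $Q_{n,s_0} \in \mathbb{R}^{2^n}$, and we must show it admits a nonnegative unit-mass solution.

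I would construct $Q_{n,0}$ and $Q_{n,1}$ jointly by induction on $n$, matching the block recursion of the channel matrices $P_{n,s}$ against the block recursion of the Markov output vectors $P_s(y^n)$. Splitting $Q_{n,0}$ by the value of $x_1$ and using invertibility of $P_{n-1,s}$ (inherited from the block-triangular recursion) forces
\[
Q_{n,0} = \begin{bmatrix} \bar\delta\, Q_{n-1,0} - (\alpha\delta/\bar\alpha)\, Q_{n-1,1} \\ (\delta/\bar\alpha)\, Q_{n-1,1} \end{bmatrix}, \qquad Q_{n,1} = \begin{bmatrix} (\delta/\bar\alpha)\, Q_{n-1,0} \\ \bar\delta\, Q_{n-1,1} - (\alpha\delta/\bar\alpha)\, Q_{n-1,0} \end{bmatrix},
\]
with base case $Q_{1,s}$ given directly by the Z- and S-channel optima. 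Unit mass is routine using $\bar\delta + \delta = 1$ and $\bar\alpha + \alpha = 1$.

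The step I expect to be the main obstacle is componentwise nonnegativity of the subtracted upper/lower blocks. Using the identities $\bar\delta = c$ and $\delta = c\bar\alpha \alpha^{\alpha/\bar\alpha}$ implied by (\ref{e_normalization_coeeficient}), nonnegativity reduces to componentwise comparisons of the form $Q_{n-1,0}(x^{n-1}) \geq \alpha^{1/\bar\alpha}\, Q_{n-1,1}(x^{n-1})$ and its mirror. A naive two-sided ratio invariant $Q_{n,0}/Q_{n,1} \in [\alpha^{1/\bar\alpha}, \alpha^{-1/\bar\alpha}]$ is not preserved by the recursion, since the update $r \mapsto \alpha^{-\alpha/\bar\alpha} - \alpha/r$ sends the lower boundary $r=\alpha^{1/\bar\alpha}$ to $0$; the induction must therefore be closed with a strictly sharper invariant. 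The natural candidate is the bit-flip symmetry $Q_{n,1}(x^n) = Q_{n,0}(\bar{x}^n)$, which holds at $n=1$ from the Z-channel optimum and can be verified to be preserved by the pair of recursions above, coupled with componentwise ratio bounds lying strictly inside $[\alpha^{1/\bar\alpha}, \alpha^{-1/\bar\alpha}]$ obtained by substituting the recursion formula into itself. Once $Q_{n,0}$ and $Q_{n,1}$ are shown to be valid probability vectors for every $n$, Corollary~\ref{cor_pyn} concludes that the non-feedback capacity equals the feedback value $-\log_2 c$, proving the theorem.
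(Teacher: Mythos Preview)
Your overall strategy matches the paper exactly: invoke Corollary~\ref{cor_pyn}, invert the block-recursive channel matrices to obtain the same pair of recursions for $Q_{n,0},Q_{n,1}$ (these are identical to the paper's $P_0(x^n),P_1(x^n)$ in (\ref{e_PX0_recursive})--(\ref{e_PX1_recursive}) once you substitute $\bar\delta=c$ and $\delta/\bar\alpha=c\,\alpha^{\alpha/\bar\alpha}$), check unit mass, and then prove componentwise nonnegativity by induction. You also correctly diagnose that the naive invariant $Q_{n,0}/Q_{n,1}\in[\alpha^{1/\bar\alpha},\alpha^{-1/\bar\alpha}]$ is \emph{not} self-reproducing under the recursion.

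The gap is in your proposed repair. The bit-flip symmetry $Q_{n,1}(x^n)=Q_{n,0}(\bar x^n)$ is true and pleasant, but it only collapses the two required inequalities into one; it does not supply an inductively closed ratio bound. ``Substituting the recursion into itself'' is too vague to constitute an argument: a two-step composition still maps points near the boundary $r=\alpha^{1/\bar\alpha}$ arbitrarily close to~$0$, so no finite iterate of the naive interval closes. What actually works---and what the paper does in Lemma~\ref{l_nonegatovity_alpha}---is to look for a \emph{single} constant $\beta$ with $1\le\beta\le\alpha^{-1/\bar\alpha}$ such that the symmetric two-sided bound $\beta^{-1}\le Q_{n,0}/Q_{n,1}\le\beta$ is preserved by the recursion. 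Writing out the four required comparisons yields two quadratic inequalities in $\beta$,
\[
\alpha^{1/\bar\alpha}\beta^2-\beta+\alpha^{\alpha/\bar\alpha}\le 0,\qquad
\alpha^{\alpha/\bar\alpha}\beta^2-\beta+\alpha^{1/\bar\alpha}\ge 0,
\]
whose simultaneous solution set is the nonempty interval
\[
\frac{1+\sqrt{1-4\alpha^{(\alpha+1)/\bar\alpha}}}{2\alpha^{\alpha/\bar\alpha}}\ \le\ \beta\ \le\ \frac{1+\sqrt{1-4\alpha^{(\alpha+1)/\bar\alpha}}}{2\alpha^{1/\bar\alpha}}.
\]
Nonemptiness (and the side conditions $1\le\beta\le\alpha^{-1/\bar\alpha}$, discriminant $\ge 0$) are elementary but require the analytic lemmas collected in Appendix~\ref{app_inequalities}. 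This is the missing concrete ingredient your sketch needs; once you plug it in, the induction closes and Corollary~\ref{cor_pyn} finishes the proof.
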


{\it Proof:} According to Corollary \ref{cor_pyn},  in order to show
that the nonfeedback capacity equals the feedback capacity, it
suffices to show that there exists an input pmf  $P(x^n)$ that
induces the optimal $P^*(y^n)$ of the feedback case, which is the
binary symmetric Markov chain with transition probability $c{\bar
\alpha \alpha^{\frac{\alpha}{\bar{\alpha}}}}$.

We will now find such a pmf by calculating
$P_1(x^n)=P_{n,1}^{-1}P_1(y^n)$ and $P_0(x^n)=P_{n,0}^{-1}P_0(y^n)$
and verifying that $P_0(x^n)$ and  $P_1(x^n)$ are indeed a valid
pmf. Recall that
\begin{equation}
 \left[\begin{array}{cc} A&B\\
0& D  \end{array} \right]^{-1}=\left[\begin{array}{cc} A^{-1}& -A^{-1}BD^{-1}\\
0& D^{-1}  \end{array} \right].
\end{equation}
Hence,

\begin{align}
P_{n,0}^{-1}&=\left[ \begin{array}{cc}  1\cdot P_{n-1,0}&
\alpha\cdot P_{n-1,0}\\ 0\cdot P_{n-1,1} & \bar \alpha \cdot
P_{n-1,1}\end{array} \right]^{-1} =\left[
\begin{array}{cc}  P_{n-1,0}^{-1}&  -\frac{\alpha}{\bar \alpha}P_{n-1,1}^{-1}\\
0 & \frac{1}{\bar \alpha} P_{n-1,1}^{-1}\end{array} \right]
\end{align}

\begin{align}
P_{n,1}^{-1}=\left[ \begin{array}{cc}  \bar \alpha \cdot P_{n-1,0}& 0\cdot P_{n-1,0}\\
\alpha \cdot P_{n-1,1} & 1\cdot P_{n-1,1}\end{array} \right]^{-1}=\left[ \begin{array}{cc}  \frac{1}{\bar \alpha}P_{n-1,0}^{-1}& 0\\
-\frac{\alpha}{\bar \alpha} P_{n-1,0}^{-1} &
P_{n-1,1}^{-1}\end{array} \right]
\end{align}


Now we compute $P_1(x^n)$ and $P_0(x^n).$

\begin{align}\label{e_PX0_recursive}
P_0(x^n)&=P_{n,0}^{-1}P_0(y^n)\nonumber\\
&= \left[
\begin{array}{cc}  P_{n-1,0}^{-1}&  -\frac{\alpha}{\bar \alpha}P_{n-1,1}^{-1}\\
0 & \frac{1}{\bar \alpha} P_{n-1,1}^{-1}\end{array} \right]
\left[\begin{array}{c} c P_0(y^{n-1})\\
 c{\bar
\alpha \alpha^{\frac{\alpha}{\bar{\alpha}}}}
P_1(y^{n-1}) \end{array} \right]\nonumber \\
&=  c\left[\begin{array}{c} P_0(x^{n-1})-{ \alpha^{\frac{1}{\bar{\alpha}}}} P_1(x^{n-1})\\
{ \alpha^{\frac{\alpha}{\bar{\alpha}}}}P_1(x^{n-1}) \end{array}
\right],
\end{align}

\begin{align}\label{e_PX1_recursive}
P_1(x^n)&=P_{n,1}^{-1}P_1(y^n)\nonumber\\
&= \left[ \begin{array}{cc}  \frac{1}{\bar \alpha}P_{n-1,0}^{-1}& 0\\
-\frac{\alpha}{\bar \alpha} P_{n-1,0}^{-1} &
P_{n-1,1}^{-1}\end{array} \right] \left[\begin{array}{c} c{\bar
\alpha \alpha^{\frac{\alpha}{\bar{\alpha}}}} P_0(y^{n-1})\\
 cP_1(y^{n-1}) \end{array} \right]\nonumber \\
&=  c\left[\begin{array}{c} {
\alpha^{\frac{\alpha}{\bar{\alpha}}}}P_0(x^{n-1}) \\
P_1(x^{n-1})-{ \alpha^{\frac{1}{\bar{\alpha}}}}P_0(x^{n-1})
 \end{array} \right],
 \end{align}
where $P_0(x^{0})=P_1(x^{0})=1$.

Now, we need to show that  the probability expressions are valid,
namely, nonnegative and sum to 1. The fact that they sum to 1 can be
seen from the recursion immediately by verifying that
$c(\alpha^{\frac{\alpha}{\bar{\alpha}}} +1 -{
\alpha^{\frac{1}{\bar{\alpha}}}})=1$.

In order to show the nonnegativity we need to show that
\begin{align}\label{e_what_need_to_prove}
P_0(x^{n-1})-\alpha^{\frac{1}{\bar{\alpha}}} P_1(x^{n-1})&\geq 0 \nonumber \\
P_1(x^{n-1})-\alpha^{\frac{1}{\bar{\alpha}}} P_0(x^{n-1})&\geq 0 .
\end{align}  For $n=1$ this is true since $\alpha^{\frac{1}{\bar{\alpha}}}\leq 1$ for $0\leq \alpha\leq 1$ (see Lemma \ref{l_in1_alpha} in the appendix).
The following lemma (Lemma \ref{l_nonegatovity_alpha}) states that
if (\ref{e_what_need_to_prove}) holds for $n-1$ it also holds for
$n$ and, by induction, we conclude that (\ref{e_what_need_to_prove})
holds for all $n$. \hfill \QED
\begin{lemma}\label{l_nonegatovity_alpha}
There exists a  $1\leq \beta\leq \alpha^{-\frac{1}{\bar{\alpha}}}$,
for which the condition
\begin{align}
\beta P_1(x^{n-1})&\geq P_0(x^{n-1}), \ \forall x^{n-1}, \nonumber \\
\beta P_0(x^{n-1})&\geq P_1(x^{n-1}),  \ \forall
x^{n-1},\label{e_lemma_asump_alpha}
\end{align}
implies
\begin{align}
\beta P_1(x^{n})&\geq P_0(x^{n}),  \ \forall x^{n}, \nonumber \\
\beta P_0(x^{n})&\geq P_1(x^{n}),  \ \forall x^{n}.
\label{e_lemma_to_proof_alpha}
\end{align}
\end{lemma}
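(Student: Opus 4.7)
The plan is to reduce the hypothesis-to-conclusion step to a purely scalar question about $\beta$. Throughout, write $a = \alpha^{1/\bar{\alpha}}$ and $b = \alpha^{\alpha/\bar{\alpha}}$, and note $a = \alpha b$, so in particular $a \leq b$. First I would substitute the recursions (\ref{e_PX0_recursive})--(\ref{e_PX1_recursive}) into the target inequalities (\ref{e_lemma_to_proof_alpha}) and split by $x_1$: for $x_1 = 0$ the coordinates are $P_0(x^n) = c\bigl(P_0(x_2^n) - a P_1(x_2^n)\bigr)$ and $P_1(x^n) = c b\, P_0(x_2^n)$; for $x_1 = 1$ the roles of $P_0$ and $P_1$ in the inner sequence simply swap. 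After dividing by $c > 0$, the four resulting inequalities reduce, by that swap symmetry, to the requirement that for every $x_2^n$ one has both
\[
(\beta - b)\, P_0(x_2^n) \geq \beta a\, P_1(x_2^n) \quad\text{and}\quad (1 - \beta b)\, P_0(x_2^n) \leq a\, P_1(x_2^n),
\]
together with the mirror versions with $P_0, P_1$ interchanged.

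Next I would feed in the inductive hypothesis (\ref{e_lemma_asump_alpha}), rewritten as $1/\beta \leq P_0(x_2^n)/P_1(x_2^n) \leq \beta$, and worst-case the ratio $P_0/P_1$ in each inequality. The extreme values $\beta$ and $1/\beta$ turn the per-sequence requirements into two scalar conditions on $\beta$:
\[
a\beta^2 - \beta + b \leq 0 \qquad\text{and}\qquad b\beta^2 - \beta + a \geq 0.
\]
The first describes $\beta \in [\beta_1, \beta_2]$ with $\beta_{1,2} = (1 \mp \sqrt{1-4ab})/(2a)$, and the second describes $\beta \notin (\beta_3, \beta_4)$ with $\beta_{3,4} = (1 \mp \sqrt{1-4ab})/(2b)$. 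Since $a \leq b$, the four roots order as $\beta_3 \leq \beta_1 \leq \beta_4 \leq \beta_2$ (the non-trivial inequality $\beta_4 \geq \beta_1$ reducing, after squaring, to $a + b \leq 1$), so the common feasible set is the interval $[\beta_4, \beta_2]$.

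Finally I would show that $[\beta_4, \beta_2]$ is nonempty and contained in $[1, \alpha^{-1/\bar{\alpha}}] = [1, 1/a]$. The upper bound $\beta_2 \leq 1/a$ is immediate from $\sqrt{1-4ab} \leq 1$; the lower bound $\beta_4 \geq 1$ again reduces to $a + b \leq 1$; and by AM--GM, $4ab \leq (a+b)^2$, so $a + b \leq 1$ also delivers $4ab \leq 1$, i.e.\ real roots and nonemptiness. All three conditions thus collapse to the single scalar inequality
\[
\alpha^{1/\bar{\alpha}} + \alpha^{\alpha/\bar{\alpha}} \leq 1, \qquad \alpha \in (0,1),
\]
which is the main obstacle. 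I would prove it by substituting $x = 1/\alpha \geq 1$ to recast the claim as $(x-1)\ln(x+1) \leq x\ln x$, and then dispatch it using the concavity bound $\ln(x+1) \leq \ln x + 1/x$ together with the standard estimate $\ln x \geq 1 - 1/x$. Any $\beta$ in the resulting nonempty interval $[\beta_4, \beta_2] \subseteq [1, \alpha^{-1/\bar{\alpha}}]$ then witnesses the lemma and completes the induction.
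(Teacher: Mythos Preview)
Your argument is correct and follows the same route as the paper: both reduce the induction step, via the recursions and the hypothesis $1/\beta\le P_0/P_1\le\beta$, to the pair of quadratics $a\beta^2-\beta+b\le 0$ and $b\beta^2-\beta+a\ge 0$ (with $a=\alpha^{1/\bar\alpha}$, $b=\alpha^{\alpha/\bar\alpha}$), and then locate a feasible $\beta$ in $\bigl[\tfrac{1+\sqrt{1-4ab}}{2b},\,\tfrac{1+\sqrt{1-4ab}}{2a}\bigr]\subseteq[1,\alpha^{-1/\bar\alpha}]$. The only real difference is in the auxiliary scalar work: the paper proves $4ab\le 1$ and $\tfrac{1+\sqrt{1-4ab}}{2b}\ge 1$ as two separate lemmas via differentiation, whereas you collapse everything to the single inequality $a+b\le 1$ (recovering $4ab\le 1$ by AM--GM) and dispatch it with the substitution $x=1/\alpha$ and the one-line chain $(x-1)\ln(x+1)\le (x-1)\ln x+(x-1)/x\le x\ln x$, which is a tidier endgame than the paper's calculus arguments.
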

\begin{proof}

Let's assume that (\ref{e_lemma_asump_alpha}) holds and we need to
show that (\ref{e_lemma_to_proof_alpha}) holds, which is equivalent
to showing the following four inequalities:
\begin{equation}\label{e1_alpha1}
\beta \left(c P_0(x^{n-1})-c{ \alpha^{\frac{1}{\bar{\alpha}}}}
P_1(x^{n-1})\right)\geq c{
\alpha^{\frac{\alpha}{\bar{\alpha}}}}P_0(x^{n-1}),
\end{equation}
\begin{equation}\label{e1_alpha2}
\beta c{ \alpha^{\frac{\alpha}{\bar{\alpha}}}}P_1(x^{n-1}) \geq c
P_1(x^{n-1})-c{ \alpha^{\frac{1}{\bar{\alpha}}}} P_0(x^{n-1}),
\end{equation}
\begin{equation}\label{e1_alpha3}
\beta c{ \alpha^{\frac{\alpha}{\bar{\alpha}}}}P_0(x^{n-1}) \geq c
P_0(x^{n-1})-c{ \alpha^{\frac{1}{\bar{\alpha}}}} P_1(x^{n-1}),
\end{equation}
\begin{equation}\label{e1_alpha4}
\beta \left(c P_1(x^{n-1})-c{ \alpha^{\frac{1}{\bar{\alpha}}}}
P_0(x^{n-1})\right)\geq c{
\alpha^{\frac{\alpha}{\bar{\alpha}}}}P_1(x^{n-1}).
\end{equation}
Because of the symmetry it suffices to show that (\ref{e1_alpha1})
and (\ref{e1_alpha2}) hold. We start by showing that
(\ref{e1_alpha1}) holds. The inequality (\ref{e1_alpha1}) is
equivalent to
\begin{equation}\label{e1_alpha}
\beta \left( P_0(x^{n-1})-{ \alpha^{\frac{1}{\bar{\alpha}}}}
P_1(x^{n-1})\right)\geq {
\alpha^{\frac{\alpha}{\bar{\alpha}}}}P_0(x^{n-1}),
\end{equation}
which can be further written as
\begin{equation}\label{e2_alpha}
 P_0(x^{n-1})(\beta-
\alpha^{\frac{\alpha}{\bar{\alpha}}})\geq  \beta
\alpha^{\frac{1}{\bar{\alpha}}} P_1(x^{n-1}),
\end{equation}
and as
\begin{equation}\label{e2_alpha}
 P_0(x^{n-1})\frac{\beta-
\alpha^{\frac{\alpha}{\bar{\alpha}}}}{ \beta
\alpha^{\frac{1}{\bar{\alpha}}}}\geq  P_1(x^{n-1}).
\end{equation}
This is true using the induction assumption in
(\ref{e_lemma_asump_alpha}) if
\begin{equation}
\frac{\beta- \alpha^{\frac{\alpha}{\bar{\alpha}}}}{\beta
\alpha^{\frac{1}{\bar{\alpha}}}}\geq \beta.
\end{equation}
Equivalently
\begin{equation}
 \alpha^{\frac{1}{\bar{\alpha}}} \beta^2  -\beta+\alpha^{\frac{\alpha}{\bar{\alpha}}}\leq 0.
\end{equation}
\begin{equation}\label{e_beta1}
\frac{1-\sqrt{1-4\alpha^{\frac{\alpha+1}{\bar{\alpha}}}}}{2\alpha^{\frac{1}{\bar{\alpha}}}}
\leq \beta \leq
\frac{1+\sqrt{1-4\alpha^{\frac{\alpha+1}{\bar{\alpha}}}}}{2\alpha^{\frac{1}{\bar{\alpha}}}},
\end{equation}
where the condition $4\alpha^{\frac{\alpha+1}{\bar{\alpha}}}\leq 1$
can be verified to hold for all $0\leq \alpha \leq 1,$ as shown in
Lemma \ref{l_inequality_4alpha} in the appendix.

Now let's  consider inequality (\ref{e1_alpha2}). We need to show
that
\begin{equation}\label{e1_alpha}
\beta { \alpha^{\frac{\alpha}{\bar{\alpha}}}}P_1(x^{n-1}) \geq
P_1(x^{n-1})-{ \alpha^{\frac{1}{\bar{\alpha}}}} P_0(x^{n-1}),
\end{equation}
or, equivalently,
\begin{equation}\label{e1_alpha}
 P_0(x^{n-1}) \geq  \frac{1-\beta {
\alpha^{\frac{\alpha}{\bar{\alpha}}}}}{\alpha^{\frac{1}{\bar{\alpha}}}}
P_1(x^{n-1}).
\end{equation}
This is true using the induction assumption in
(\ref{e_lemma_asump_alpha}) if
\begin{equation}
\frac{1}{\beta}\geq \frac{1-\beta {
\alpha^{\frac{\alpha}{\bar{\alpha}}}}}{\alpha^{\frac{1}{\bar{\alpha}}}},
\end{equation}
and equivalently
\begin{equation}
\alpha^{\frac{\alpha}{\bar{\alpha}}} \beta ^2 -\beta +
\alpha^{\frac{1}{\bar{\alpha}}}\geq 0.
\end{equation}
This holds if
\begin{equation}\label{e_beta2}
\beta\geq
\frac{1+\sqrt{1-4\alpha^{\frac{\alpha+1}{\bar{\alpha}}}}}{2\alpha^{\frac{\alpha}{\bar{\alpha}}}}.
\end{equation}
Combining (\ref{e_beta1}) with (\ref{e_beta2}) we obtain that there
exists a $\beta$ in the interval
\begin{equation}
\frac{1+\sqrt{1-4\alpha^{\frac{\alpha+1}{\bar{\alpha}}}}}{2\alpha^{\frac{\alpha}{\bar{\alpha}}}}
\leq \beta\leq
\frac{1+\sqrt{1-4\alpha^{\frac{\alpha+1}{\bar{\alpha}}}}}{2\alpha^{\frac{1}{\bar{\alpha}}}}
\end{equation}
that satisfies the lemma. Note that the interval is nonempty since
for $0\leq \alpha \leq 1$, $\alpha^{\frac{\alpha}{\bar{\alpha}}}\geq
\alpha^{\frac{1}{\bar{\alpha}}}.$ Finally, note that since
$\frac{1+\sqrt{1-4\alpha^{\frac{\alpha+1}{\bar{\alpha}}}}}{2}\leq 1$
we obtain that $\beta\leq  \alpha^{-\frac{1}{\bar{\alpha}}}.$
Furthermore, it is shown in Lemma \ref{l_inequlity_frac1} in the
appendix that
$\frac{1+\sqrt{1-4\alpha^{\frac{\alpha+1}{\bar{\alpha}}}}}{2\alpha^{\frac{\alpha}{\bar{\alpha}}}}\geq
1$ for any $0\leq \alpha\leq 1,$ and, therefore, also $\beta\geq 1.$
\end{proof}

In the proof that feedback does not increase the capacity of the
POST($\alpha$) channel (Theorem \ref{l_simple_y_channel}) we
actually found a recursive formula of the input distribution that
achieves the capacity (see
(\ref{e_PX0_recursive})-(\ref{e_PX1_recursive})). It is interesting
that the input distribution is not a Markov distribution and  has an
infinite memory but it induces a simple Markov distribution at the
output of the channel. In addition, as shown in the following lemma,
the input distribution is stationary.
\begin{lemma}[Stationarity of the input distribution]
The probabilities $P_0(x^n)$ and $P_1(x^n)$ that achieves the
capacity of the  POST($\alpha$) channel (i.e., achieve the maximum
in (11) when the mutual information is conditioned, respectively, on
the initial state being $0$ and $1$), and are specified in
(\ref{e_PX0_recursive})-(\ref{e_PX1_recursive}), are stationary,
namely, for any $n$ and $i<n$, and for any  $x^n$
\begin{equation}
p_0(x^i)=p_0(x_{n-i+1}^n),\ \  p_1(x^i)=p_1(x_{n-i+1}^n).
\end{equation}
\end{lemma}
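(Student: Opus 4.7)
The plan is to proceed by induction on $n$, leveraging the recursive formulas~(\ref{e_PX0_recursive})--(\ref{e_PX1_recursive}) and an auxiliary bit-flip symmetry. A preliminary claim I would establish first, by a short induction on the same recursion, is the symmetry $p_1(x_1,\ldots,x_n)=p_0(\bar x_1,\ldots,\bar x_n)$ where $\bar b:=1-b$; this falls out immediately from the fact that interchanging the two halves of each recursion together with the bit flip swaps the roles of $P_0$ and $P_1$. Two algebraic identities are then used throughout: the normalization $c(1+\alpha^{\alpha/\bar\alpha}-\alpha^{1/\bar\alpha})=1$ implicit in the proof of Theorem~\ref{l_simple_y_channel}, and the exponent identity $\alpha\cdot\alpha^{\alpha/\bar\alpha}=\alpha^{1/\bar\alpha}$ (which is just $1+\alpha/\bar\alpha=1/\bar\alpha$). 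Abbreviate $u:=\alpha^{1/\bar\alpha}$ and $v:=\alpha^{\alpha/\bar\alpha}$.

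For the base case $n=2$, $i=1$, the two single-coordinate marginals of $p_0(x^2)$ can be computed directly from~(\ref{e_PX0_recursive}), and the two identities above together with the symmetry relation should force both to reduce to $c(1-u)$ and $cv$. For the inductive step I would assume the statement for all block lengths less than $n$, and then marginalize the recursion
\[
p_0(x^n)=c\Bigl[p_0(x^{n-1})-u\,p_1(x^{n-1})\,;\ v\,p_1(x^{n-1})\Bigr]
\]
first over the leftmost $n-i$ coordinates to obtain an expression for $p_0(x_{n-i+1}^n)$, and separately over the rightmost $n-i$ coordinates to obtain $p_0(x^i)$. Each resulting expression is a weighted linear combination (with weights $c$ and $c(v-u)$, summing to $1$ by normalization) of last-$i$- and first-$i$-marginals of $p_0^{(n-1)}$ and $p_1^{(n-1)}$, to which the inductive hypothesis applies. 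Applying the bit-flip symmetry and the length-$i$ recursion to the resulting combinations should bring the two marginals into agreement, and the analogous claim for $p_1$ then follows either by the same argument on~(\ref{e_PX1_recursive}) or directly from the symmetry.

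The hard part will be matching the coefficients produced by the two different marginalizations. The recursion is asymmetric --- the top block of $p_0(x^n)$ involves both $p_0(x^{n-1})$ and $p_1(x^{n-1})$, while the bottom block involves only $p_1(x^{n-1})$ --- so summing out the leftmost versus rightmost coordinates produces combinations with structurally different forms. Collapsing them to a common expression is precisely where both identities $\alpha v=u$ and $c(1+v-u)=1$ become indispensable: together they force the weighted combination $c\,p_0(\cdot)+c(v-u)\,p_1(\cdot)$ (coming from marginalizing over the first coordinate) to reproduce the same object produced by marginalizing over the last coordinate, via the symmetry-induced substitution $p_1=\bar p_0$. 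Without both identities, or without first isolating the bit-flip symmetry as a separate lemma, the induction does not close.
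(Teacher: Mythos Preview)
Your plan targets a stronger claim than the paper's argument actually establishes, and that stronger claim is false. You read the lemma as asserting that the first-$i$ and last-$i$ marginals of the \emph{same} length-$n$ distribution $P_0^{(n)}$ coincide (ordinary shift-stationarity), and you propose to compute each marginal from the recursion and match coefficients. But take $\alpha=\tfrac12$, so $u=\tfrac14$, $v=\tfrac12$, $c=\tfrac45$: the recursion gives $P_0^{(2)}=\bigl(\tfrac{2}{5},\tfrac{1}{5},\tfrac{4}{25},\tfrac{6}{25}\bigr)$ indexed by $x_1x_2\in\{00,01,10,11\}$, whose first-coordinate marginal is $\bigl(\tfrac{3}{5},\tfrac{2}{5}\bigr)$ but whose last-coordinate marginal is $\bigl(\tfrac{14}{25},\tfrac{11}{25}\bigr)$. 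So no amount of algebra will close your induction --- the two objects you are trying to equate already differ at $n=2$, $i=1$. Your bit-flip symmetry $p_1(x^n)=p_0(\bar x^n)$ is correct, but it cannot rescue an identity that fails numerically.

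What the paper's proof actually does is much simpler, and is a \emph{consistency} statement rather than shift-stationarity: marginalizing $P_0^{(n)}$ over the coordinates that the recursion peels off \emph{last} returns $P_0^{(i)}$. Because the block split in~(\ref{e_PX0_recursive}) is by the most significant bit $x_1$, summing both halves over the remaining coordinates and using that $P_0^{(n-1)}$ and $P_1^{(n-1)}$ each sum to $1$ immediately gives the length-$1$ marginal $c\,(1-u,\,v)$, independent of $n$. For general $i$ one sums out the tail coordinates inside each block and observes that the resulting length-$i$ marginal obeys exactly the length-$i$ instance of the same recursion~(\ref{e_PX0_recursive}), with the same initial data, and hence equals $P_0^{(i)}$. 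There is no coefficient-matching across two different marginalizations and no need for the bit-flip lemma. The paper's notation --- writing $x_n$ and $x_{n-i+1}^n$ where the argument just described produces $x_1$ and $x_1^i$ --- is what steered you toward the harder, and ultimately false, reading.
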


\begin{proof}
We prove first the case $i=1$ and then generalize it to any $i$.
 For any $n$ and $i=1$ we have $p_0(x_{n})=\sum_{x^{n-1}}p_0(x^{n}).$
Note that applying this relation to (\ref{e_PX0_recursive}) we
obtain
\begin{equation}
 P_0(x_{n}) =  c\left[\begin{array}{c} 1-{ \alpha^{\frac{1}{\bar{\alpha}}}} \\
{ \alpha^{\frac{\alpha}{\bar{\alpha}}}} \end{array} \right],
\end{equation}
for all $n$. We can now generalize it to any $i$ by summing over
$x^{n-i}$ terms and we have
\begin{align}
P_0(x_{n-i+1}^n)
 &= c\left[\begin{array}{c}  P_0(x_{n-i+1}^{n-1})-{ \alpha^{\frac{1}{\bar{\alpha}}}}P_1(x_{n-i+1}^{n-1})\\
{ \alpha^{\frac{\alpha}{\bar{\alpha}}}} P_1(x_{n-i+1}^{n-1})
\end{array} \right].
\end{align}
Hence, this is the same iterative equation we have for $P_0(x^i)$ in
(\ref{e_PX0_recursive}), and since $P_0(x_1)=P_0(x_{n-i+1})$ we
obtain the same vector probability for all $n$. Similar proof holds
$P_1(x^n)$.
\end{proof}

\section{Binary POST$(a,b)$ channel }\label{s_post_ab}
In this section, we extend the scope and study  the capacity of what
we refer to as the  POST$(a,b)$ , which is a generalization of
POST($\alpha$) channel. The POST$(a,b)$ channel has two states and
in each state there is a binary channel with respective  parameters
$(a,b)$. We develop, using similar analyses as for the
POST($\alpha$) channel, simple conditions on $a,b$ for which,  if
satisfied,  feedback does not increase capacity. We then  show that
the
 conditions
are satisfied for all parameter values  $(a,b)\in [0,1]\times[0,1].$

\subsection{Definition of the POST$(a,b)$ channel}
Consider the POST channel depicted in  Fig. \ref{Ychannel_ab} with
the following behavior. When $y_{i-1}=0$, then the channel behaves
as a binary channel  with transition matrix
\begin{equation}
\left[ \begin{array}{cc} a & \bar b \\ \bar a & b \end{array} \right]
\end{equation}
 and when $y_{i-1}=1$ then it behaves as a binary channel with the transition matrix
\begin{equation}
\left[ \begin{array}{cc} b & \bar a \\ \bar b & a \end{array} \right].
\end{equation}
\begin{figure}[h!]{
\psfrag{d1}[][][1]{$y_{i-1}=0$} \psfrag{d2}[][][1]{$y_{i-1}=1$}  \psfrag{x}[][][1]{$x_{i}$}
\psfrag{y}[][][1]{$y_i$} \psfrag{a1}[][][0.9]{$a$} \psfrag{a1b}[][][0.9]{$\bar a\ \ \ $}
\psfrag{a2}[][][0.9]{$\bar b$}\psfrag{a2b}[][][0.9]{$ b$}
  \centerline{ \includegraphics[width=8cm]{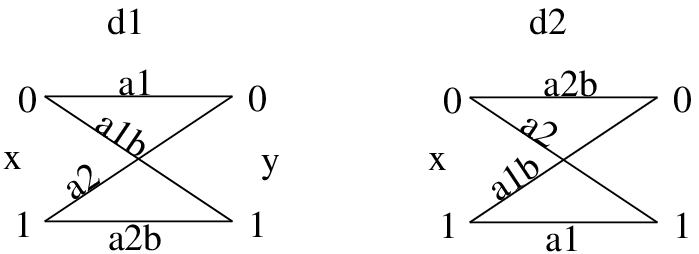}}
  \caption{POST$(a,b)$ channel.
  If $y_{i-1}=0$ then the channel behaves as DMC with parameters $(a,b)$ and if $y_{i-1}=1$ then the channel behaves as DMC with parameters $(b,a)$.}
 \label{Ychannel_ab}
}\end{figure} We refer to this channel as the POST$(a,b)$ channel.
POST($\alpha$) is a special case of POST$(a,b)$, where $a=1$ and
$b=\bar \alpha$.

Without loss of generality, we assume throughout that $a+b-1>0$. It
is easy to see that in the case where $a+b-1=0$ or, equivalently,
where $a=\bar b$,  the capacity is simply 0. Additionally, if
$a+b-1<0$ then $\bar a+\bar b>1;$ hence by relabeling the inputs $(0
\leftrightarrow 1)$ we obtain a new channel (with parameter $a',b'$
rather than $a,b$) where $a'=\bar a$ and $b'=\bar b$ and we have
$a'+b'-1>0.$

\subsection{Capacity of the POST$(a,b)$ channel with and without feedback}
Before considering the POST$(a,b)$ let us first consider the binary
DMC  with parameters $(a,b)$. The capacity of the binary DMC with
parameters $(a,b)$ was derived by Ash in \cite[Ex 3.7]{Ash65} by
applying \cite[Theorem 3.3.3]{Ash65} and is given by
\begin{equation}\label{e_c_dmc_ab}
C=\log\left[2^{\frac{\bar a H_b(b)-bH_b(a)}{a+b-1}}+ 2^{\frac{\bar b
H_b(a)-aH_b(b)}{a+b-1}} \right].
\end{equation}
 The  capacity achieving input distribution is
\begin{align}
P(x=0)&=c_0 \left(b2^{\frac{H(b)}{a+b-1}} -{\bar
b}2^{\frac{H(a)}{a+b-1}}\right),\nonumber
\\  P(x=1)&=c_0 \left(-\bar a2^{\frac{H(b)}{a+b-1}}
+{a}2^{\frac{H(a)}{a+b-1}}\right),
\end{align}
where $c_0$ is a normalizing coefficient so that the sum
$P(x=0)+P(x=1)$ is equal to  1. The induced output distribution is
\begin{align}
P(y=0)
 &= c_0(ab-\bar a\bar b)2^{\frac{H(b)}{a+b-1}}
\end{align}
\begin{align}
P(y=1)&= c_0(ab-\bar a\bar b)2^{\frac{H(a)}{a+b-1}}.
\end{align}

\begin{lemma}[Feedback capacity of POST$(a,b)$]
The feedback capacity of the POST($a,b$) channel  is the same as of
the memoryless DMC with parameters ($a,b$), which is given in
(\ref{e_c_dmc_ab}).
\end{lemma}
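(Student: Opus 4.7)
The plan is to follow the same two-step structure used in the proof of Lemma \ref{l_posta_feedback}. The crucial symmetry is that the transition matrix in state $y_{i-1}=1$ is obtained from the transition matrix in state $y_{i-1}=0$ by swapping the two rows (equivalently, the two columns), so that the corresponding memoryless DMCs both have capacity equal to the expression in (\ref{e_c_dmc_ab}), which is manifestly invariant under the interchange $a\leftrightarrow b$ (the two exponents in the formula simply swap).

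For achievability, the encoder, which thanks to feedback knows $Y_{i-1}$, and the decoder, which directly observes $Y_{i-1}$, both know which of the two binary DMCs is currently active. The encoder then draws $X_i$ from the capacity-achieving input distribution of DMC$(a,b)$ when $Y_{i-1}=0$ and from that of DMC$(b,a)$ when $Y_{i-1}=1$. Since in either mode the instantaneous single-letter capacity is $C$, rate $C$ is achievable, either by a standard joint typicality argument applied state by state or directly via the feedback capacity formula with a time-homogeneous $P(x_i|y_{i-1})$.

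For the converse, the plan is to replay the chain of inequalities used for POST$(\alpha)$ verbatim. Starting from $C_{fb}=\lim_n \tfrac{1}{n}\max_{P(x^n||y^{n-1})} I(X^n\to Y^n)$, I would introduce conditioning on $Y_0$ at a vanishing cost $H(Y_0)/n\leq 1/n$; expand the conditional directed information as $\sum_i \bigl[H(Y_i|Y^{i-1})-H(Y_i|Y^{i-1},X^i)\bigr]$; apply conditioning-reduces-entropy to the first term and the channel Markov property $Y_i-(X_i,Y_{i-1})-(X^{i-1},Y^{i-2})$ to the second, obtaining $\sum_i \bigl[H(Y_i|Y_{i-1})-H(Y_i|Y_{i-1},X_i)\bigr]$; split this quantity according to the value of $Y_{i-1}$, writing it as $\sum_i \bigl[P(y_{i-1}=0)I(X_i;Y_i|y_{i-1}=0)+P(y_{i-1}=1)I(X_i;Y_i|y_{i-1}=1)\bigr]$; and bound each conditional mutual information by the capacity of the associated memoryless binary DMC. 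Because the two DMCs have the same capacity $C$ given in (\ref{e_c_dmc_ab}), the combined bound equals $C$ regardless of how the probability mass on $Y_{i-1}$ is distributed.

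The only step that requires more than mechanical translation from the POST$(\alpha)$ case is confirming that $I(X_i;Y_i|y_{i-1}=s)$ is indeed the mutual information through the appropriate memoryless DMC under the conditional marginal of $X_i$ given $Y_{i-1}=s$; this is immediate from the channel law, which depends on the past only through $Y_{i-1}$. With that in place, and with the symmetry $C_{\text{DMC}(a,b)}=C_{\text{DMC}(b,a)}$ already verified from the closed form in (\ref{e_c_dmc_ab}), the converse and achievability meet at $C$, completing the proof.
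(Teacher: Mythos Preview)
Your proposal is correct and follows essentially the same approach as the paper, which explicitly states that the proof ``follows the same arguments as the proof of the feedback capacity of POST$(\alpha)$ in Lemma~\ref{l_posta_feedback} and is, therefore, omitted.'' The only minor imprecision is your remark that the state-$1$ transition matrix is obtained from the state-$0$ one by swapping rows \emph{or} columns; in fact one must swap both (i.e., relabel $0\leftrightarrow 1$ on input \emph{and} output), but this does not affect your argument since you correctly deduce that DMC$(b,a)$ has the same capacity as DMC$(a,b)$ from the symmetry of~(\ref{e_c_dmc_ab}).
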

The proof follows the same arguments as the proof of the feedback
capacity of POST$(\alpha)$ in Lemma \ref{l_posta_feedback} and is,
therefore, omitted. The behavior  of the capacity as a function of
$(a,b)$ is depicted in Fig. \ref{f_post_ab}.

\begin{figure}[h!]{
\psfrag{a}[][][0.9]{$\ \ \ \ a$}
\psfrag{b}[][][0.9]{$b\ \ \ \ $} \psfrag{c}[][][0.8]{Capacity of
POST($a,b$)}
 \centerline{ \includegraphics[width=6cm]{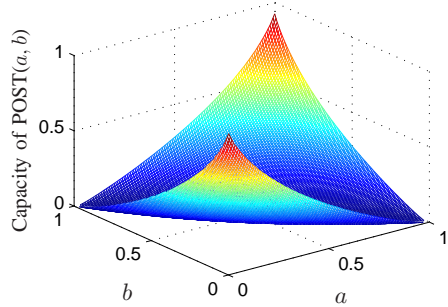}}
 \caption{The capacity of the POST($a,b$) channel with and without feedback. This is also the capacity of
 the binary DMC with parameters $(a,b)$\label{f_post_ab}}
}\end{figure}

We now present  sufficient conditions on $a,b$ implying that
feedback does not increase the capacity of the POST($a,b$) channel.
That these conditions are indeed sufficient we establish in the next
subsection. Define the following intervals:
\begin{align}
\mathcal L_1 &=\left\{ \max(\frac{\bar a}{\bar b}\gamma,
\frac{\gamma(\bar a+b)-\sqrt{\gamma^2(\bar a+b)^2-4a\bar b}}{2\bar
b})\leq  \beta \leq \frac{\gamma(\bar a+b)+\sqrt{\gamma^2(\bar
a+b)^2-4a\bar b}}{2\bar b}. \right\} \nonumber \\
\mathcal L_2 &=\left\{ \frac{( a+ \bar b)+\sqrt{( a+ \bar b)^2-4\bar
a b\gamma^2}}{2 b\gamma} \leq \beta\leq \frac{\bar a}{\bar b}\gamma\right\} \nonumber \\
\mathcal L_3 &=\left\{\beta\leq \min( \frac{\bar a}{\bar b}\gamma,
\frac{( a+ \bar
b)-\sqrt{( a+ \bar b)^2-4\bar a b\gamma^2}}{2 b\gamma})\right\} \nonumber \\
\mathcal L_4 &=\left\{ \beta \leq \min(\frac{b \gamma}{a},\frac{\gamma(\bar a+b)-\sqrt{\gamma^2(\bar a+b)^2-4a\bar b}}{2a}) \right\} \nonumber \\
\mathcal L_5 &=\left\{\frac{\gamma(\bar a+b)+\sqrt{\gamma^2(\bar a+b)^2-4a\bar b}}{2a}\leq \beta \leq \frac{b \gamma}{a} \right\} \nonumber \\
\mathcal L_6 &=\left\{\max( \frac{b \gamma}{a}, \frac{( a+\bar
b)-\sqrt{( a+\bar b)^2-4\bar a b\gamma^2}}{2\bar a\gamma}) \leq
\beta \leq \frac{( a+\bar b)+\sqrt{( a+\bar b)^2-4\bar a
b\gamma^2}}{2\bar a\gamma} \right\},\label{e_inter}
\end{align}
where $\gamma$ is defined as
\begin{equation}\gamma=2^{\frac{H(b)-H(a)}{a+b-1}}\label{e_gamma}.
\end{equation}
In addition, let
\begin{equation}\mathcal L_0 =\left\{ 1\le \beta\le
\min(\frac{a}{\bar{a}\gamma},\frac{b\gamma}{\bar{b}})\right\}
\end{equation}

\begin{lemma}[conditions to determine that feedback does not increase capacity of the POST$(a,b)$]\label{l_suffecienty_cond}
If the intersections of the intervals $\mathcal L_1\cup\mathcal
L_2\cup\mathcal L_3$ with $\mathcal L_4\cup\mathcal L_5\cup\mathcal
L_6$ and  $\mathcal L_0$  is nonempty then feedback does not
increase the capacity of the POST($a,b$) channel.
\end{lemma}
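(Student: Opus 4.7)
The plan is to mimic the proof of Theorem~\ref{l_simple_y_channel}. By Corollary~\ref{cor_pyn} it suffices to exhibit, for every $n$, a nonnegative input pmf $P(x^n)$ that induces the same output distribution $P^*(y^n)$ as the feedback-optimal scheme. From the feedback capacity formula for POST$(a,b)$, the optimal output process is a binary \emph{symmetric} Markov chain: its symmetry is inherited from the fact that when $y_{i-1}=1$ the channel is DMC$(b,a)$, the coordinate swap of the channel DMC$(a,b)$ that governs $y_{i-1}=0$, so the DMC-capacity-achieving output masses get swapped as well. Its crossover probability $\delta$ satisfies $\bar\delta/\delta=\gamma$, with $\gamma$ as in \eqref{e_gamma}.

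Next, I would write down the channel matrices $P_{n,0}, P_{n,1}$ from \eqref{e_def_Pn} recursively. Partitioning by $(x_1,y_1)$ and tracking the induced next state (which, in a POST channel, equals $y_1$) gives
\begin{equation*}
P_{n,0}=\begin{pmatrix} a\,P_{n-1,0} & \bar b\,P_{n-1,0}\\ \bar a\,P_{n-1,1} & b\,P_{n-1,1}\end{pmatrix},\qquad
P_{n,1}=\begin{pmatrix} b\,P_{n-1,0} & \bar a\,P_{n-1,0}\\ \bar b\,P_{n-1,1} & a\,P_{n-1,1}\end{pmatrix}.
\end{equation*}
Each factors as $\mathrm{diag}(P_{n-1,0},P_{n-1,1})$ times a $2{\times}2$ mixing matrix applied block-wise; under the assumption $a+b-1>0$ the mixing matrix is invertible. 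Inverting and setting $P_0(x^n)=P_{n,0}^{-1}P_0(y^n)$, $P_1(x^n)=P_{n,1}^{-1}P_1(y^n)$ then yields closed-form linear recursions for $P_0(x^n),P_1(x^n)$ in terms of $P_0(x^{n-1}),P_1(x^{n-1})$, fully analogous to \eqref{e_PX0_recursive}--\eqref{e_PX1_recursive}, with coefficients built from $a,b,\delta,\bar\delta$. The sum-to-one property then propagates inductively, using $\delta+\bar\delta=1$ and the normalization of the DMC$(a,b)$ capacity-achieving input.

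The main obstacle, as in the POST$(\alpha)$ case, is nonnegativity of $P_0(x^n),P_1(x^n)$. Following the strategy of Lemma~\ref{l_nonegatovity_alpha}, I would look for a constant $\beta\ge 1$ satisfying the symmetric bound $P_1(x^{n-1})\le\beta P_0(x^{n-1})$ and $P_0(x^{n-1})\le\beta P_1(x^{n-1})$ for all $x^{n-1}$, and show that the recursion preserves this bound. The four entry-wise nonnegativity requirements, each of which must \emph{itself} respect the induction hypothesis at the next step, translate into four inequalities in $\beta$ whose coefficients depend on $a,b,\gamma$. Because of the asymmetry between $a$ and $b$, the implications are (tight) quadratic in $\beta$, so the admissible set for each requirement is a union of at most three subintervals; this is precisely the split $\mathcal L_1\cup\mathcal L_2\cup\mathcal L_3$ for the requirements coming from the $P_0$ recursion, and $\mathcal L_4\cup\mathcal L_5\cup\mathcal L_6$ for those coming from the $P_1$ recursion. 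The three-way decomposition arises from a case split on the sign of $\beta\bar b-\bar a\gamma$ (respectively $\beta a-b\gamma$), which determines whether the effective constraint is linear or quadratic.

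The auxiliary interval $\mathcal L_0$ encodes the base case $n=1$: it forces both $P_0(x_1)/P_1(x_1)$ and its reciprocal to lie in $[1/\beta,\beta]$, so that the induction has a valid starting point. If $(\mathcal L_1\cup\mathcal L_2\cup\mathcal L_3)\cap(\mathcal L_4\cup\mathcal L_5\cup\mathcal L_6)\cap\mathcal L_0$ is nonempty, any $\beta$ in the intersection closes the induction, producing a legitimate feedback-free pmf $P(x^n)$ that induces $P^*(y^n)$. Corollary~\ref{cor_pyn} then yields the equality of the feedback and non-feedback capacities. The hardest part of this plan is verifying that the non-linearities really collapse into the six quadratic branches displayed in \eqref{e_inter} and identifying the correct case split; once that bookkeeping is done, the rest is an induction with the same shape as the one already carried out for the POST$(\alpha)$ channel.
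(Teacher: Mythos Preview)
Your proposal is correct and follows the same route as the paper: recursive block structure of $P_{n,0},P_{n,1}$, inversion to get recursions for $P_0(x^n),P_1(x^n)$, then nonnegativity via an inductive ratio bound on a parameter $\beta$ whose admissible range is carved out by the quadratic case analysis (the paper's Lemma~\ref{l_cond_beta}) into the intervals $\mathcal L_1,\ldots,\mathcal L_6$, with $\mathcal L_0$ supplying both the base case ($\beta\ge 1$) and the guarantee that the ratio bound actually forces entrywise nonnegativity ($\beta\le\min\{a/(\bar a\gamma),\,b\gamma/\bar b\}$). Your only slip is attributional: the pairing $\{\mathcal L_1,\mathcal L_2,\mathcal L_3\}$ versus $\{\mathcal L_4,\mathcal L_5,\mathcal L_6\}$ is not ``$P_0$ recursion'' versus ``$P_1$ recursion'' but rather the two symmetric pairs among the four block inequalities (top block of one direction matched with bottom block of the other), with the case split on the signs of $\beta\bar b-\bar a\gamma$ and $b\gamma-\beta a$ exactly as you say.
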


\begin{lemma}
The condition in  Lemma \ref{l_suffecienty_cond} holds for all POST
channel parameters $(a,b)$. Thus, feedback does not increase
capacity of POST($a,b$).
\end{lemma}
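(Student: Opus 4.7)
The plan is to exhibit an explicit $\beta = \beta(a,b)$ lying in the three required sets for every $(a,b) \in [0,1]^2$ with $a+b > 1$, so that the sufficient condition of Lemma \ref{l_suffecienty_cond} is met. The first step is a structural observation: the six intervals are built from the roots of two pairs of quadratics that are \emph{reciprocal} to each other. One pair,
\[
\bar b \beta^2 - \gamma(\bar a + b)\beta + a = 0 \quad \text{and} \quad a\beta^2 - \gamma(\bar a+b)\beta + \bar b = 0,
\]
supplies the endpoints of $\mathcal{L}_1$ and $\mathcal{L}_5$, while the other pair,
\[
b\gamma\beta^2 - (a+\bar b)\beta + \bar a\gamma = 0 \quad \text{and} \quad \bar a\gamma\beta^2 - (a+\bar b)\beta + b\gamma = 0,
\]
supplies the endpoints of $\mathcal{L}_{2,3}$ and $\mathcal{L}_6$. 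Within each pair the two quadratics are related by $\beta\mapsto 1/\beta$, which is the key algebraic fact that links the two unions $\mathcal{L}_1\cup\mathcal{L}_2\cup\mathcal{L}_3$ and $\mathcal{L}_4\cup\mathcal{L}_5\cup\mathcal{L}_6$.

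Step 1 is to establish that both discriminants are nonnegative, i.e.\ $\gamma^2(\bar a+b)^2 \geq 4a\bar b$ and $(a+\bar b)^2 \geq 4\bar a b\gamma^2$. After substituting $\gamma = 2^{(H_b(b)-H_b(a))/(a+b-1)}$ and taking logarithms, each reduces to an entropy inequality that generalizes the one-parameter Lemma \ref{l_inequality_4alpha}. I would prove these by fixing the sum $a+b$ and parametrizing the remaining degree of freedom by $t = a - b$, reducing the claim to a convex one-dimensional inequality that can be checked by evaluating at the symmetric point $a=b$ and controlling the sign of the first derivative in $t$.

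Step 2 is to choose the candidate $\beta^* := \frac{\gamma(\bar a+b)+\sqrt{\gamma^2(\bar a+b)^2 - 4a\bar b}}{2\bar b}$, the larger root of the first quadratic and hence the right endpoint of $\mathcal{L}_1$ (or its reciprocal $1/\beta^*$, selected by the $(a,b)\leftrightarrow(\bar b,\bar a)$ symmetry). Membership of $\beta^*$ in $\mathcal{L}_0$ reduces to the pair of inequalities $\beta^* \geq 1$ and $\beta^* \leq \min(a/(\bar a\gamma), b\gamma/\bar b)$, each verified by evaluating the defining quadratic at the relevant endpoint and using Step~1 to read off the sign. Membership in $\mathcal{L}_4\cup\mathcal{L}_5\cup\mathcal{L}_6$ follows from the reciprocal duality: $1/\beta^*$ lies in the closed interval between the roots of $a\beta^2 - \gamma(\bar a+b)\beta + \bar b = 0$, which by reflection places $\beta^*$ itself in $\mathcal{L}_5$ (or $\mathcal{L}_6$ after comparing with $b\gamma/a$).

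The main obstacle is the entropy inequality in Step 1: it is a genuine two-variable statement without an obvious one-line proof, so it will require a careful convexity or monotonicity argument in one variable after fixing $a+b$. A secondary difficulty is the sign bookkeeping in Step 2, where the $\max$ and $\min$ in the definitions of $\mathcal{L}_1,\mathcal{L}_3,\mathcal{L}_4,\mathcal{L}_6$ must be resolved by case analysis, e.g.\ on whether $\gamma \geq 1$ or $\gamma \leq 1$. Once the discriminant inequalities are settled, the rest is a mechanical verification via Vieta's formulas and evaluation of the two quadratics at the distinguished points $1$, $\bar a\gamma/\bar b$, $a/(\bar a\gamma)$, $b\gamma/\bar b$, and $b\gamma/a$.
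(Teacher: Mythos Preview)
Your two-step plan matches the paper's: first show both discriminants $\gamma^2(\bar a+b)^2-4a\bar b$ and $(a+\bar b)^2-4\bar ab\gamma^2$ are nonnegative, then locate a point in the triple intersection. For Step~1 the paper does not fix $a+b$; it proves the entropy inequality by noting equality at $a=\bar b$ and differentiating with respect to $a$ (twice), which is a somewhat cleaner route than your proposed $t=a-b$ parametrization. For Step~2 the paper splits on whether $a\bar a\le b\bar b$ or $a\bar a\ge b\bar b$ (equivalent, under $a+b>1$, to $a\gtrless b$ and hence to your $\gamma\gtrless 1$) and, in the first case, shows $\mathcal L_1\cap\mathcal L_5\cap\mathcal L_0\neq\emptyset$ by directly comparing the five interval endpoints, without naming a single $\beta$.

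Your reciprocal-duality observation is correct and yields that the roots of $a\beta^2-\gamma(\bar a+b)\beta+\bar b$ are $1/\beta^*$ and $\bar b\beta^*/a$, so $\beta^*\ge \bar b\beta^*/a$ automatically clears the lower endpoint of $\mathcal L_5$. However, it does \emph{not} give the upper constraint $\beta^*\le b\gamma/a$, and this can fail: for $a=0.9$, $b=0.6$ one has $\beta^*\approx 2.66$ while $b\gamma/a\approx 1.34$. In that regime $\beta^*$ must be placed in $\mathcal L_6$, whose defining quadratic $\bar a\gamma\beta^2-(a+\bar b)\beta+b\gamma$ belongs to the \emph{second} pair and is not touched by the duality you invoked. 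So the sentence ``by reflection places $\beta^*$ itself in $\mathcal L_5$ (or $\mathcal L_6$)'' is where the argument is incomplete: you will need either the paper's direct endpoint comparisons (e.g.\ showing $b\gamma/a\le\min\{a/(\bar a\gamma),\,b\gamma/\bar b\}$ and that both upper bounds of $\mathcal L_1$ and $\mathcal L_5$ exceed $1$) or a separate verification that $\beta^*$ satisfies the $\mathcal L_6$ inequality whenever $\beta^*>b\gamma/a$.
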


\begin{proof} It suffices to show that the either the interval
$(\mathcal L_1\cap\mathcal L_5\cap \mathcal L_0)$ or the interval
$(\mathcal L_2\cap L_6\cap \mathcal L_0)$ is non empty. First we
claim that the expression in the square roots of (\ref{e_inter}) are
nonnegative, i.e.,
\begin{align}
\gamma^2(\bar a+b)^2-4a\bar b& \geq 0\nonumber \\
( a+\bar b)^2-4\bar a b\gamma^2& \geq 0 \label{e_jiantao}
\end{align}
 as shown in Appendix \ref{app_sub_discrimnant} (Note that the second inequality follows from the first by switching between $a$ and $b$).

Recall that $a+b-1\geq 0$ which implies $a\geq \bar b$ and $\bar a
\leq b$. In addition,  assume that $\bar a a\leq b \bar b,$ and we
show in Appendix \ref{app_sub_a_geq_b} that together with  $a\geq
\bar b$ it implies that $a\geq b$. We now prove that if $\bar a
a\leq b \bar b,$ then $(\mathcal L_1\cap\mathcal L_5\cap \mathcal
L_0)$ is nonempty. (Similarly, one can show that if $\bar a a\geq b
\bar b,$ $(\mathcal L_2\cap L_6\cap \mathcal L_0)$ is non empty.) We
first want to claim that the intersection $(\mathcal L_1\cap\mathcal
L_5)$ is nonempty. Indeed, the lower bound of $\mathcal L_5$ is
smaller than the upper bound of $\mathcal L_1$ since
$\frac{1}{2a}\leq \frac{1}{2\bar b}$. The upper bound of $\mathcal
L_5$ is larger than the lower bound of $\mathcal L_1$ since
$\frac{b\gamma}{a}\geq\frac{\bar a\gamma}{\bar b}$ because $\bar a
a\leq b \bar b,$ and because,
\begin{eqnarray}
\frac{\gamma(\bar a+b)-\sqrt{\gamma^2(\bar a+b)^2-4a\bar b}}{2\bar
b}\leq\frac{b\gamma}{a},
\end{eqnarray}
where the  inequality follows the fact that the LHS is less or equal
to 1, as shown in Appendix \ref{app_sub_L1_L5_in}, and the RHS is
great or equal to 1, as show in Appendix \ref{app_sec_b_gamma}.

Now we need to show that the intersection  $(\mathcal
L_1\cap\mathcal L_5)$ with  $\mathcal L_0$ is still non empty. We
will first show that
$\min\{\frac{a}{\bar{a}\gamma},\frac{b\gamma}{\bar{b}}\}\geq
\frac{b\gamma}{a}$. Recall that $a\geq \bar b$ hence
$\frac{b\gamma}{\bar{b}}\geq \frac{b\gamma}{a}.$ In addition $\gamma
^2\leq \frac{a^2}{b\bar a}$ as shown in Appendix
\ref{app_sec_inq_gamma2}. Now we want to show that the lower bound
of $\mathcal L_0$ which is 1 is not larger than the upper bound of
the intersection $(\mathcal L_1\cap\mathcal L_5)$. First we claim
that the upper bound of $\mathcal L_5$ is larger than 1, i.e.,
$\frac{b\gamma}{a}\geq 1$ for $a\geq \bar b,$ as shown in Appendix
\ref{app_sec_b_gamma}. Finally, we need to show that  the upper
bound of $\mathcal L_1$ is larger than 1, i.e.,
\begin{equation}
\frac{\gamma(\bar a+b)}{2\bar b}\stackrel{}{\geq} 1,
\end{equation}
as shown in Appendix \ref{app_sec_bar_a}.
\end{proof}

%

\subsection{Deriving the sufficient conditions of  Lemma \ref{l_suffecienty_cond}}
{\it Proof of Lemma \ref{l_suffecienty_cond}:}
%
Let $P_{n,0}$ and $P_{n,1}$ be defined as in (\ref{e_def_Pn}).
Following the channel definition we have
\begin{equation}
P_{n,0}=\left[ \begin{array}{cc}  a\cdot P_{n-1,0}& \bar b \cdot P_{n-1,0}\\ \bar a\cdot
P_{n-1,1} & b \cdot P_{n-1,1}\end{array} \right]
\end{equation}
and
\begin{equation}
P_{n,1}=\left[ \begin{array}{cc}  b \cdot P_{n-1,0}& \bar a \cdot P_{n-1,0}\\
\bar b \cdot P_{n-1,1} & a \cdot P_{n-1,1}\end{array} \right]
\end{equation}
where $P_{0,0}=P_{0,1}=1$. Using the identity
\begin{equation}
\begin{bmatrix} \mathbf{A} & \mathbf{B} \\ \mathbf{C} & \mathbf{D} \end{bmatrix}^{-1} =
 \begin{bmatrix} \mathbf{A}^{-1}+\mathbf{A}^{-1}\mathbf{B}(\mathbf{D}-\mathbf{CA}^{-1}\mathbf{B})^{-1}\mathbf{CA}^{-1} & -\mathbf{A}^{-1}\mathbf{B}(\mathbf{D}-\mathbf{CA}^{-1}\mathbf{B})^{-1} \\ -(\mathbf{D}-\mathbf{CA}^{-1}\mathbf{B})^{-1}\mathbf{CA}^{-1} & (\mathbf{D}-\mathbf{CA}^{-1}\mathbf{B})^{-1} \end{bmatrix}
\end{equation}
%
%
%
%
%
%
%
we obtain
\begin{align}
P_{n,0}^{-1}&=\left[
\begin{array}{cc} \frac{b}{ba -\bar a \bar b}P_0^{-1}&  -\frac{\bar b}{ba -\bar a \bar b}P_1^{-1}\\
-\frac{\bar a}{ba -\bar a \bar b}P_0^{-1} & \frac{a}{ba -\bar a \bar a}P_1^{-1}\end{array}
\right]
\end{align}

\begin{align}
P_{n,1}^{-1}&=\left[
\begin{array}{cc} \frac{a}{ba -\bar a \bar b}P_0^{-1}&  -\frac{\bar a}{ba -\bar a \bar b}P_1^{-1}\\
-\frac{\bar b}{ba -\bar a \bar b}P_0^{-1} & \frac{b}{ba -\bar a \bar b}P_1^{-1}\end{array}
\right]
\end{align}

Now we compute $P_1(x^n)$ and $P_0(x^n)$
\begin{align}
P_0(x^n)&=P_{n,0}^{-1}P_0(y^n)\nonumber\\
&= \frac{1}{a+b-1}\left[
\begin{array}{cc} b P_0^{-1}&  -\bar bP_1^{-1}\\
-\bar aP_0^{-1} & aP_1^{-1}\end{array} \right]
 \left[\begin{array}{c} 2^{\frac{H(b)}{a+b-1}}P_0(y^{n-1})\\
 2^{\frac{H(a)}{a+b-1}}
P_1(y^{n-1}) \end{array} \right]\frac{1}{2^{\frac{H(b)}{a+b-1}}+2^{\frac{H(a)}{a+b-1}}} \nonumber \\
&=  \frac{1}{(a+b-1)(2^{\frac{H(b)}{a+b-1}}+2^{\frac{H(a)}{a+b-1}})} \left[\begin{array}{c}
b2^{\frac{H(b)}{a+b-1}}
P_0(x^{n-1})-\bar b 2^{\frac{H(a)}{a+b-1}} P_1(x^{n-1})\\
-\bar a2^{\frac{H(b)}{a+b-1}} P_0(x^{n-1})+ a 2^{\frac{H(a)}{a+b-1}} P_1(x^{n-1})
\end{array} \right],
\end{align}
\begin{align}
P_1(x^n)&=P_{n,1}^{-1}P_1(y^n)\nonumber\\
&=  \frac{1}{(a+b-1)(2^{\frac{H(b)}{a+b-1}}+2^{\frac{H(a)}{a+b-1}})} \left[\begin{array}{c}
a2^{\frac{H(a)}{a+b-1}}
P_0(x^{n-1})-\bar a 2^{\frac{H(b)}{a+b-1}} P_1(x^{n-1})\\
-\bar b2^{\frac{H(a)}{a+b-1}} P_0(x^{n-1})+ b 2^{\frac{H(b)}{a+b-1}} P_1(x^{n-1})
\end{array} \right],
 \end{align}
where $P_0(x^{0})=P_1(x^{0})=1$. We can rewrite $P_0(x^n)$ and
$P_1(x^n)$ follows:
\begin{align}
P_0(x^n) &=  \frac{1}{(a+b-1)(\gamma+1)} \left[\begin{array}{c} b\gamma
P_0(x^{n-1})-\bar b  P_1(x^{n-1})\\
-\bar a\gamma P_0(x^{n-1})+ a P_1(x^{n-1})
\end{array} \right],
\end{align}
\begin{align}
P_1(x^n)&=  \frac{1}{(a+b-1)(\gamma+1)} \left[\begin{array}{c} a
P_0(x^{n-1})-\bar a \gamma P_1(x^{n-1})\\
-\bar bP_0(x^{n-1})+ b \gamma P_1(x^{n-1})
\end{array} \right].
 \end{align}
We need to show that indeed the probability expressions are
valid, namely nonnegative and sum to 1. 
%
%
Showing the non-negativity of each of the terms in the above
expression is equivalent to showing $\forall n\ge 1$ and for all
$x^{n-1}$, \bea
\min\{\frac{a}{\bar{a}\gamma},\frac{b\gamma}{\bar{b}}\}P_0(x^{n-1})&\ge& P_1(x^{n-1})\nonumber\\
\min\{\frac{a}{\bar{a}\gamma},\frac{b\gamma}{\bar{b}}\}P_1(x^{n-1})&\ge&
P_0(x^{n-1}). \eea For $n=1$ this follows from the fact that
$\min\{\frac{a}{\bar{a}\gamma},\frac{b\gamma}{\bar{b}}\}\ge 1$ which
is proved in Appendix \ref{app_sec_gamma_and_d_non_neg}.
  To
prove for $n\geq 1$ we use the following lemma, whose proof appears
in Appendix \ref{app_prove_lemma_cond_beta}. \hfill \QED
\begin{lemma}\label{l_cond_beta}
If the condition in Lemma \ref{l_suffecienty_cond} holds then there
exists, $1\le \beta\le
\min\{\frac{a}{\bar{a}\gamma},\frac{b\gamma}{\bar{b}}\}$ such that
$\forall n$, the inequalities
\begin{align}
\beta P_1(x^{n-1})&\geq P_0(x^{n-1}), \ \forall x^{n-1}, \nonumber \\
\beta P_0(x^{n-1})&\geq P_1(x^{n-1}),  \ \forall
x^{n-1},\label{eq_a_b_n_1}
\end{align}
imply
\begin{align}
\beta P_1(x^{n})&\geq P_0(x^{n}),  \ \forall x^{n}, \nonumber \\
\beta P_0(x^{n})&\geq P_1(x^{n}),  \ \forall x^{n}. \label{eq_a_b_n}
\end{align}
\end{lemma}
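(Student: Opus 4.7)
The plan is to mirror the strategy of Lemma \ref{l_nonegatovity_alpha} from the POST$(\alpha)$ case, just with more parameters and thus more bookkeeping. First, I would substitute the recursive formulas for $P_0(x^n)$ and $P_1(x^n)$ (derived just above the lemma statement) into the target inequalities (\ref{eq_a_b_n}). Since the vector $P_0(x^n)$ decomposes into two blocks indexed by $x_n\in\{0,1\}$, and similarly for $P_1(x^n)$, each of the two vector inequalities in (\ref{eq_a_b_n}) unpacks into two scalar inequalities, each a linear combination of $P_0(x^{n-1})$ and $P_1(x^{n-1})$. So the single induction step reduces to four scalar inequalities that must hold for every $x^{n-1}$.

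Next, I would use the induction hypothesis (\ref{eq_a_b_n_1}) to bound the ratio $P_0(x^{n-1})/P_1(x^{n-1})$ (or its reciprocal) by $\beta$, turning each of the four scalar inequalities into a quadratic inequality in $\beta$ alone, exactly analogous to the step $\alpha^{1/\bar\alpha}\beta^2-\beta+\alpha^{\alpha/\bar\alpha}\le 0$ in the proof of Lemma \ref{l_nonegatovity_alpha}. Two of these quadratics produce the admissibility set $\mathcal L_1\cup\mathcal L_2\cup\mathcal L_3$, and the other two produce $\mathcal L_4\cup\mathcal L_5\cup\mathcal L_6$, via the quadratic formula applied with $\gamma$ as in (\ref{e_gamma}). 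The auxiliary constraint $\beta\in\mathcal L_0$ makes the induction hypothesis (\ref{eq_a_b_n_1}) self-consistent at the base $n=1$ (where $P_0(x^0)=P_1(x^0)=1$ forces $\beta\ge 1$) and ensures that the recursion keeps the updated probability vectors in the positive orthant. Thus the existence of a single $\beta$ that makes the implication go through at every $n$ is precisely the nonemptiness of
\[
(\mathcal L_1\cup\mathcal L_2\cup\mathcal L_3)\cap(\mathcal L_4\cup\mathcal L_5\cup\mathcal L_6)\cap\mathcal L_0,
\]
which is exactly the hypothesis of Lemma \ref{l_suffecienty_cond}.

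The main obstacle is the case analysis. Unlike the POST$(\alpha)$ setting, where a single symmetry reduced four inequalities to two, here the $(a,b)$ asymmetry leaves all four inequalities genuinely distinct, and each quadratic in $\beta$ can contribute either a bounded or an unbounded admissible subset depending on the signs of its leading coefficient and of the coefficients multiplying $P_0(x^{n-1})$ and $P_1(x^{n-1})$. Tracking these signs, verifying the nonnegativity of the discriminants (already recorded in (\ref{e_jiantao})), and selecting the correct branch of each quadratic root is what produces the six intervals $\mathcal L_1,\dots,\mathcal L_6$ rather than only two. Once the correct intervals are identified, each of the four inductive verifications is routine algebra, and the lemma follows.
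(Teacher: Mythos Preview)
Your proposal is correct and follows essentially the same route as the paper: expand the recursion into four scalar inequalities, reduce each via the induction hypothesis to a quadratic condition on $\beta$, and identify the resulting admissible sets with $\mathcal L_1,\dots,\mathcal L_6$ (the case splits you anticipate are exactly the sign distinctions $\beta\bar b\gtrless\bar a\gamma$ and $b\gamma\gtrless a\beta$ that the paper uses). One small correction: the $P_0\leftrightarrow P_1$ symmetry still survives in the $(a,b)$ setting, so the paper does pair off the four inequalities into two, and it is the sign-based case split within each pair---not four genuinely distinct inequalities---that generates three intervals apiece.
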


\section{Does there exists a POST channel for which feedback increases the capacity?} \label{s_POST_increase_capacity}
For the nonfeedback capacity we do not have an analytical
expression, but only the infinite letter expression given in
(\ref{e_max_feedback}). In general, for any finite state channel we
have the following upper bounds \cite[Theorem 4.6.1]{Gallager68}
\cite[Theorem 15]{PermuterWeissmanGoldsmith09}
\begin{equation}\label{e_upper_bound}
C\leq \frac{1}{n} \max_{s_0}\max_{P(x^n)} I(X^n;Y^n|s_0)+\frac{\log|\mathcal S|}{n},
\end{equation}
for any integer $n\geq1$. The notation $|\mathcal S|$ refers to the
number of channel states. However, for the POST channel we have a
tighter upper bound given in the following corollary.
\begin{lemma}[Upper bound on the capacity of the POST channel]
For any POST channel  the following upper bound holds
\begin{equation}\label{e_upper_bound}
C\leq \frac{1}{n} \max_{s_0}\max_{P(x^n)} I(X^n;Y^n|s_0)
\end{equation}
for any integer $n\geq1.$
\end{lemma}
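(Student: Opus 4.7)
The plan is to exploit the defining property of a POST channel, namely that the state at time $jn+1$ equals the previous output $Y_{jn}$, in order to eliminate the $\frac{\log|\mathcal S|}{n}$ penalty that appears in Gallager's general bound. Intuitively, when we parse a block of length $kn$ into $k$ consecutive sub-blocks of length $n$, the initial state of each sub-block is already revealed by the output of the preceding sub-block, so no extra rate is needed to ``announce'' the state at the boundary.

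First, I would start from the capacity expression $C=\lim_{k\to\infty}\frac{1}{kn}\max_{P(x^{kn})}I(X^{kn};Y^{kn})$ (valid because the channel is indecomposable and the formula from Section II applies with $I(X^n\to Y^n)=I(X^n;Y^n)$ in the no-feedback case). Chain-rule the mutual information along the block boundaries $j=0,1,\ldots,k-1$:
\begin{equation}
I(X^{kn};Y^{kn})=\sum_{j=0}^{k-1}I\!\left(X^{kn};Y_{jn+1}^{(j+1)n}\,\big|\,Y^{jn}\right).
\end{equation}

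Second, I would insert the POST identity $S_{jn+1}=Y_{jn}$, which is a deterministic function of $Y^{jn}$, so conditioning on $S_{jn+1}$ in addition to $Y^{jn}$ is free:
\begin{equation}
I\!\left(X^{kn};Y_{jn+1}^{(j+1)n}\,\big|\,Y^{jn}\right)=I\!\left(X^{kn};Y_{jn+1}^{(j+1)n}\,\big|\,Y^{jn},S_{jn+1}\right).
\end{equation}
Given the block inputs $X_{jn+1}^{(j+1)n}$ and the initial state $S_{jn+1}$, the outputs $Y_{jn+1}^{(j+1)n}$ of the sub-block are independent of everything that happened earlier (including $X^{jn}$, $Y^{jn}$, and later inputs $X_{(j+1)n+1}^{kn}$); this is where the finite-state structure is used. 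Hence
\begin{equation}
I\!\left(X^{kn};Y_{jn+1}^{(j+1)n}\,\big|\,Y^{jn},S_{jn+1}\right)=I\!\left(X_{jn+1}^{(j+1)n};Y_{jn+1}^{(j+1)n}\,\big|\,Y^{jn},S_{jn+1}\right).
\end{equation}

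Third, I would average out the conditioning on $Y^{jn}$ by the standard ``conditional distribution is a valid input distribution'' argument: for each fixed value $y^{jn}$, the conditional law of $(X_{jn+1}^{(j+1)n},Y_{jn+1}^{(j+1)n})$ given $Y^{jn}=y^{jn}$ and $S_{jn+1}=y_{jn}$ factors as some input law times the channel law with initial state $s_0=y_{jn}$. Therefore the inner mutual information is upper bounded, pointwise in $y^{jn}$, by $\max_{s_0}\max_{P(x^n)}I(X^n;Y^n|s_0)$, and hence so is its average. Summing over $j$ gives
\begin{equation}
I(X^{kn};Y^{kn})\le k\cdot\max_{s_0}\max_{P(x^n)}I(X^n;Y^n\,|\,s_0),
\end{equation}
and dividing by $kn$ and letting $k\to\infty$ yields the claim.

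The only subtlety, and what I expect to be the one place that needs care, is the Markov step $I(X^{kn};Y_{jn+1}^{(j+1)n}|Y^{jn},S_{jn+1})=I(X_{jn+1}^{(j+1)n};Y_{jn+1}^{(j+1)n}|Y^{jn},S_{jn+1})$, i.e.\ verifying that $(X^{jn},X_{(j+1)n+1}^{kn})\!-\!(X_{jn+1}^{(j+1)n},S_{jn+1})\!-\!Y_{jn+1}^{(j+1)n}$ really is a Markov chain. This follows because the POST channel's output sequence within a block is determined by the block's input sequence together with the single initial state $S_{jn+1}=Y_{jn}$ (the intermediate states inside the block are themselves the intermediate outputs, hence functions of the inputs and $S_{jn+1}$), so no other variables enter; this is precisely the feature that distinguishes POST from a general finite-state channel and that is responsible for the absence of the $\frac{\log|\mathcal S|}{n}$ term.
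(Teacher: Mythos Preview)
Your argument is correct and is essentially the paper's approach made explicit: the paper observes that for a POST channel the sequence $\bigl\{\max_{s_0}\max_{P(x^N)}I(X^N;Y^N|s_0)\bigr\}_{N\ge1}$ is subadditive (the usual $\log|\mathcal S|$ correction in Gallager's proof vanishes because $S_n$ is a function of $Y_n$), and then invokes Fekete's lemma together with Fano's inequality. Your block-parsing computation is precisely what establishes that subadditivity, specialized to multiples of $n$; the only cosmetic point is that the $j=0$ term should be handled by conditioning on the initial state $s_0$ from the outset, after which it fits the same template as the other blocks.
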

\begin{proof}
The proof of this upper bound follows from the subadditivity of the
sequence
$$\left\{\max_{s_0}\max_{P(x^N)} I(X^N;Y^N|s_0)\right\}_{N\geq 1}.$$
 In \cite[Theorem
4.6.1]{Gallager68} or in \cite[Theorem
16]{PermuterWeissmanGoldsmith09} it is proved that the sequence
$\{\max_{s_0}\max_{P(x^N)} I(X^N;Y^N|s_0)\}_{N\geq 1}+\log|\mathcal
S|$ is subadditive. We note that for the POST channel  the
conditioning on $S_n$ done in \cite[(4A.26)]{Gallager68} or in
\cite[(67)]{PermuterWeissmanGoldsmith09} is not needed since $S_n$
is a function of $Y_n$. Because of the subadditivity it follows that
the following limit exists and satisfies
\begin{equation}\label{e_limit_subadditive}
\lim_{N\to \infty} \max_{s_0}\max_{P(x^N)} I(X^N;Y^N|s_0)=\inf_N
\max_{s_0}\max_{P(x^N)} I(X^N;Y^N|s_0).
\end{equation}
From Fano's inequality it follows that the capacity is upper bounded by this limit and from
(\ref{e_limit_subadditive}) it follows that $\max_{s_0}\max_{P(x^N)} I(X^N;Y^N|s_0)$ upper bound
the capacity for any $N\geq 1$.
%
\end{proof}

%
%
%
Given the results that feedback does not increase the capacity of a
POST($a,b$), the question naturally arises: does there exist a
specific POST channel where feedback strictly increases the
capacity? The answer is affirmative. The idea is to find a POST
channel that consists of two states such that when there is
feedback, the optimal input distributions given the states differ
significantly among the different states. For the binary DMC it was
shown in \cite{Feder_uniform_input04} and independently in
\cite{GowthamKumar12_input_symbo} that the input probability that
achieves the capacities is in $[\frac{1}{e}, 1-\frac{1}{e}]$ for
each of the two alphabet symbols. Hence, in the case of a binary
post channel with feedback the optimal input probability as a
function of the state will not vary too much. But as we increase the
alphabet size we can construct a POST channel where the optimal
input probabilities as a function of the state would vary
significantly. Such a channel is presented in Fig. \ref{POST_many}.
\begin{figure}[h!]{
\psfrag{d1}[][][1]{$y_{i-1}=1,2,...,m$} \psfrag{d2}[][][1]{$y_{i-1}=m+1$}
\psfrag{x}[][][1]{$x_{i}$} \psfrag{y}[][][1]{$y_i$}\psfrag{n}[][][1]{$m$}\psfrag{n+1}[][][1]{$\
m+1$}
 \centerline{ \includegraphics[width=8cm]{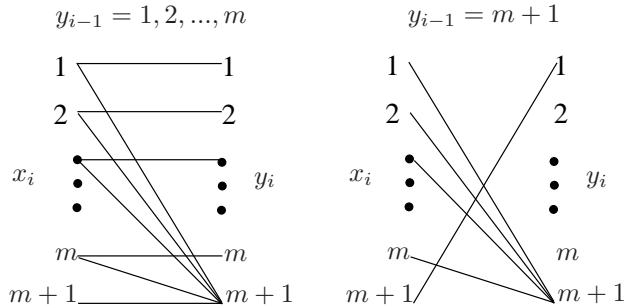}}
  \caption{ A POST channel where feedback increases capacity when $m$ is large. The probability associated with  each edge is either $\frac{1}{2}$ or 1.}
 \label{POST_many}
}\end{figure}

\begin{table}[h!]
\caption{Capacity with and without feedback of the POST channel
 in Fig. \ref{POST_many} as the alphabet grows. Evidently, feedback increases capacity for some values of $m$}\label{t_capacity_post}
\begin{center}

\begin{tabular}{||c|c|c|c||}
\hline \hline
 &  upper bound on capacity &  lower bound on feedback capacity & feedback capacity \\
$m$ & $\frac{1}{6}\max_{s_0 }\max_{P(x^6)} I(X^6;Y^6|s_0)$ &  $R=\frac{\log_2 m} { 3}$ & (\ref{e_feedback_capacity}) \\

\hline $2^0$ &  0.7918 &  0 & 0.7595\\
\hline $2^1$ & 0.8568 &  0.3333 & 0.8325\\
\hline $2^2$ & 0.9803 &  0.6667 & 1.0000\\
\hline $2^3$ &1.1711 &  1.0000 & 1.2599\\
\hline $2^4$ & 1.3865 &  1.3333 & 1.5366\\
\hline $2^5$ & 1.6098 &  1.6667 & 1.8260\\
\hline $2^6$ & 1.8374 &  2.0000 & 2.1252\\
\hline $2^7$ & 2.0683 &  2.3333 & 2.4319\\
\hline $2^8$ & 2.3019 &  2.6667 & 2.7444\\
\hline $2^9$ & 2.5376 &  3.0000 & 3.06140\\
\hline $2^{10}$ & 2.7751 &  3.3333 & 3.3818\\
\hline\hline
\end{tabular}

\end{center}
\end{table}

We can determine the feedback capacity analytically using the
following lemma, which is proved in Appendix
\ref{app_proof_l_feedback_cap_example}.
\begin{lemma}\label{l_feedback_cap_example}
The feedback capacity of the channel in Fig. \ref{POST_many}, is
given by,
\begin{eqnarray}\label{e_feedback_capacity}
C_{fb}=\max_{\gamma,\delta\in[0,1]}\Bigg{\{}\frac{2\delta}{2\delta+1+\gamma}\Big{(}\frac{\bar{\gamma}}{2}\log_2(m)+h_2(\frac{1+\gamma}{2})-(1-\gamma)
\Big{)}+\frac{1+\gamma}{2\delta+1+\gamma}\Big{(}h_2(\delta)\Big{)}\Bigg{\}}
\end{eqnarray}
\end{lemma}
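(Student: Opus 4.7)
The plan is to exploit the fact that, for a finite state channel whose state is a deterministic function of the previous output, the feedback capacity admits a representation as the optimal long-run average reward of a Markov decision process (MDP). The MDP state is the channel state (here, $Y_{i-1}$), the action at each step is the conditional input distribution $P(X_i \mid Y_{i-1})$, and the instantaneous reward is $I(X_i;Y_i \mid Y_{i-1})$, so that $C_{fb} = \sup_\pi \liminf_{n\to\infty} \frac{1}{n} \sum_{i=1}^n \mathbb{E}_\pi[I(X_i;Y_i \mid Y_{i-1})]$. This characterization follows from the general feedback-capacity formula (\ref{e_max_feedback}) together with the fact that the channel state is a function of the past output, so that the optimizing causal-conditional input reduces to one that selects $X_i$ based only on $Y_{i-1}$.

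First, I would exploit the permutation symmetry of the channel in Fig.~\ref{POST_many}: all of the $m$ previous-output states in $\{1,\ldots,m\}$ are equivalent under the natural symmetric-group action on the input/output alphabets. Combined with concavity of mutual information in the input distribution for fixed channel law, averaging any optimal stationary policy over this symmetry group yields another optimal policy that (i) assigns each state $y_{i-1}\in\{1,\ldots,m\}$ the same canonical input distribution (distinguishing only the current symbol from the other $m-1$ symmetric symbols) and (ii) assigns state $y_{i-1}=m+1$ a distribution symmetric across $\{1,\ldots,m\}$. This collapses the effective state space to two aggregated states, $A=\{1,\ldots,m\}$ and $B=\{m+1\}$.

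Next, I would parameterize the symmetric optimal stationary policy by two scalars: $\gamma\in[0,1]$ for the input distribution in state $A$ and $\delta\in[0,1]$ for the input distribution in state $B$. For each $(\gamma,\delta)$ I would (a) write down the induced transition probabilities $P(A\to B)$ and $P(B\to A)$ directly from the channel description, (b) solve the two-state Markov chain's stationary distribution and verify it equals $\pi_A=\frac{2\delta}{2\delta+1+\gamma}$, $\pi_B=\frac{1+\gamma}{2\delta+1+\gamma}$ as in the statement, and (c) compute the per-state mutual informations $I(X_i;Y_i\mid Y_{i-1}\in A)$ and $I(X_i;Y_i\mid Y_{i-1}=B)$ by direct expansion, obtaining $\frac{\bar\gamma}{2}\log_2 m + h_2(\frac{1+\gamma}{2}) - (1-\gamma)$ and $h_2(\delta)$ respectively. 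Weighting by $\pi_A,\pi_B$, summing, and maximizing over $(\gamma,\delta)$ reproduces (\ref{e_feedback_capacity}).

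The main obstacle is the symmetrization step, i.e., rigorously justifying that the supremum over all feedback policies is attained within this two-parameter family. The argument hinges on two ingredients: existence of a stationary optimal policy for the average-reward MDP (standard, since the MDP has finitely many states and the action space at each state is a compact simplex), and concavity of $I(X;Y\mid Y_{i-1}=s)$ in $P(X\mid Y_{i-1}=s)$ for each fixed $s$, so that a Jensen-type averaging over the channel's symmetry group preserves optimality. Once symmetry is imposed, the reward and transition probabilities at state $A$ depend on the input policy only through the single scalar $\gamma$ (respectively, $\delta$ at $B$), so the optimization reduces to the displayed two-dimensional maximization.
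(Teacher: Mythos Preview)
Your proposal is correct and follows essentially the same route as the paper's proof. The paper likewise invokes the permutation symmetry of the states $\{1,\ldots,m\}$ to reduce the optimal feedback policy to the two scalars $\gamma=P(X_i=m+1\mid Y_{i-1}\in\{1,\ldots,m\})$ and $\delta=P(X_i=m+1\mid Y_{i-1}=m+1)$, writes the induced output transition kernel, solves the stationary balance equations (treating $m=1$ and $m\ge 2$ separately), and evaluates $I(X_i;Y_i\mid Y_{i-1})$ at the stationary distribution to obtain the displayed expression. Your explicit MDP framing and appeal to existence of stationary optimal policies is a welcome layer of justification that the paper leaves implicit when it simply writes $C_{fb}=\max_{\gamma,\delta}I(X_i;Y_i\mid Y_{i-1})$.

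One small point to tighten: your description of the symmetrized input law in state $A$ as ``distinguishing only the current symbol from the other $m-1$ symmetric symbols'' is slightly off for this particular channel. Since the channel law $P(Y_i\mid X_i,Y_{i-1}=k)$ is in fact identical for all $k\in\{1,\ldots,m\}$, the states in $A$ can be aggregated outright, and the remaining symmetry (permutations of $\{1,\ldots,m\}$ on inputs and outputs in the aggregated state $A$) forces the input to be uniform over $\{1,\ldots,m\}$, leaving exactly the single parameter $\gamma$. Without this observation your symmetrization would a priori leave two free parameters in state $A$ (mass on the ``current'' symbol versus the rest), so it is worth making explicit that the channel in state $k$ does not depend on $k$.
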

\begin{corollary}
Note that as $m$ approaches infinity, $\frac{\bar{\gamma}}{2}\log_2(m)+h_2(\frac{1+\gamma}{2})-(1-\gamma)\simeq \frac{\bar{\gamma}}{2}\log_2(m)>>h_2(\delta)$, thus $C_{fb}\simeq\max_{\gamma\in[0,1]}\Bigg{\{}\frac{1-\gamma}{3+\gamma}\log_2(m)\Bigg{\}}=\frac{\log_2(m)}{3}$.
\end{corollary}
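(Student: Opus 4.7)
The plan is to exploit the scaling structure of the objective as $m\to\infty$: every entropy term in the formula is uniformly bounded by $1$, while $\frac{\bar{\gamma}}{2}\log_2(m)$ grows without bound. I would therefore rewrite the inner objective in the max as
\begin{equation*}
\frac{2\delta}{2\delta+1+\gamma}\cdot\frac{\bar{\gamma}}{2}\log_2(m) \;+\; R(\gamma,\delta,m),
\end{equation*}
where $R(\gamma,\delta,m)=\frac{2\delta}{2\delta+1+\gamma}\bigl[h_2(\tfrac{1+\gamma}{2})-(1-\gamma)\bigr]+\frac{1+\gamma}{2\delta+1+\gamma}h_2(\delta)$ is bounded uniformly in $(\gamma,\delta)\in[0,1]^2$ by a constant independent of $m$. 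Since taking a maximum of $f(\gamma,\delta)\log_2(m)+O(1)$ over a compact set equals $\log_2(m)\cdot\max f + O(1)$, the problem reduces to optimizing the coefficient $g(\gamma,\delta):=\frac{\delta(1-\gamma)}{2\delta+1+\gamma}$ over $[0,1]^2$.

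The next step is a two-stage elementary optimization. Differentiating $g$ with respect to $\delta$ at fixed $\gamma$ gives
\begin{equation*}
\frac{\partial g}{\partial\delta} \;=\; \frac{(1-\gamma)(1+\gamma)}{(2\delta+1+\gamma)^2}\;\ge\;0,
\end{equation*}
so $g$ is nondecreasing in $\delta$ and its maximizer is $\delta=1$. Substituting yields $g(\gamma,1)=\frac{1-\gamma}{3+\gamma}$, whose derivative in $\gamma$ is $-4/(3+\gamma)^2<0$, so the maximum over $\gamma$ is attained at $\gamma=0$, giving $\max g = \tfrac{1}{3}$.

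Combining these two observations gives $C_{fb}=\tfrac{1}{3}\log_2(m)+O(1)$, which is the corollary's claim. The only subtlety worth flagging is that the bounded remainder $R(\gamma,\delta,m)$ could in principle shift the joint maximizer away from $(\gamma,\delta)=(0,1)$ by an amount that vanishes as $m\to\infty$, but this does not affect the leading-order term: any $(\gamma^\star_m,\delta^\star_m)$ close to $(0,1)$ still produces an objective value of $\tfrac{1}{3}\log_2(m)+O(1)$, and any $(\gamma,\delta)$ bounded away from $(0,1)$ strictly loses by a $\Theta(\log_2 m)$ gap and so cannot be optimal for large $m$. There is thus no real obstacle; the content of the corollary is simply that the $\log m$-rate of the feedback capacity is exactly $1/3$, and this follows from the two one-dimensional monotonicity computations above.
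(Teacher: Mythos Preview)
Your proposal is correct and follows the same approach as the paper: the corollary in the paper is stated heuristically (``Note that \ldots''), arguing that the $\log_2(m)$ term dominates the bounded entropy terms, whence the optimal $\delta$ is $1$ (to put all weight on the diverging summand) and then $\frac{1-\gamma}{3+\gamma}$ is maximized at $\gamma=0$. You have simply made each of these steps rigorous via the uniform $O(1)$ bound on the remainder and the explicit monotonicity computations, which is exactly what the paper's one-line justification is gesturing at.
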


This gives us the intuition to suggest the following simple scheme
approximately capacity-achieving for large $m$. If $y_{i-1}\leq m$,
then transmit $\log_2(m)$ bits via input $X_i=1,2,...,m$. The
probability that these bits would be received at the decoder is
$\frac{1}{2}$. If $y_{i-1}= m+1$, then $X_i=m+1$. Thus, the rate
transmitted error free for this scheme is
\begin{equation}
R=\frac{\text{average bit transmitted}}{\text{average usage of channels}}=\frac{\frac{1}{2}
\log_2 m+\frac{1}{2} 0}{\frac{1}{2}1+\frac{1}{2}2}=\frac{\log_2m}{3}.
\end{equation}



We clearly see from Table \ref{t_capacity_post} that for $m\geq 2^2$
the feedback capacity is strictly larger than the non feedback
capacity. The difference increases with $m$.


\section{Conclusion and further research \label{s_conclusion}}\
We have introduced and studied the family of  POST channels and
showed, somewhat surprisingly,  that feedback does not increase the
capacity of the general $POST(a,b)$ channel. The proof is based on
finding the output probability that is induced by the input causal
conditioning pmf that optimizes the directed information when
feedback is allowed, and then proving that this output pmf can be
also be induced by an input distribution without feedback. There may
be a more direct way, that has thus far eluded us, for  proving that
feedback does not increase the capacity of the Simple POST channel.
We hope that the POST channel introduced in this paper will enhance
our understanding of capacity of finite state channels with and
without feedback, and help us to find simple capacity-achieving
codes.

\section*{Acknowledgement}
The authors are grateful to Jiantao Jiao who suggested the proof of
(\ref{e_jiantao}).



\appendices
\section{Concavity of directed information in $P(x^n||y^{n-1})$}
\label{app_concavity}
\begin{lemma}[Concavity of directed information in
$P(x^n||y^{n-1})$]\label{l_concavity_dir} Directed information
$I(X^n\to Y^n)$ is concave in $P(x^n||y^{n-1})$ for a fixed
$P(y^n||x^{n}).$
\end{lemma}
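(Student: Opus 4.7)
The plan is to split the directed information into a part that is affine in $P(x^n||y^{n-1})$ and a part that is a concave function of $P(x^n||y^{n-1})$, and then appeal to the fact that concavity is preserved under composition with an affine map.

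Starting from \eqref{e_kim_identity}, and using the chain rule $p(x^n,y^n)=p(x^n||y^{n-1})p(y^n||x^n)$, I would write
\begin{equation*}
I(X^n\to Y^n) \;=\; \sum_{x^n,y^n} p(x^n||y^{n-1})p(y^n||x^n)\log p(y^n||x^n) \;-\; \sum_{y^n} p(y^n)\log p(y^n),
\end{equation*}
where $p(y^n)=\sum_{x^n} p(x^n||y^{n-1})p(y^n||x^n)$. For a fixed channel $p(y^n||x^n)$, the first term is manifestly linear in $p(x^n||y^{n-1})$, so it suffices to establish that the second term, which is $H(Y^n)$, is a concave function of $p(x^n||y^{n-1})$.

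The map $p(x^n||y^{n-1}) \mapsto p(y^n) = \sum_{x^n} p(x^n||y^{n-1})p(y^n||x^n)$ is affine in $p(x^n||y^{n-1})$ (because $p(y^n||x^n)$ is held fixed). The Shannon entropy $P \mapsto -\sum_y p(y)\log p(y)$ is a concave function of its argument (a standard consequence of the concavity of $-t\log t$). Composing a concave function with an affine map yields a concave function, hence $H(Y^n)$ is concave in $p(x^n||y^{n-1})$. Adding the linear piece preserves concavity, giving the desired claim.

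The only real point requiring care is the identity $p(y^n)=\sum_{x^n}p(x^n||y^{n-1})p(y^n||x^n)$; this is where causal conditioning might look suspicious because of the conditioning on future $y$'s inside $p(x^n||y^{n-1})$. I would justify it by invoking the chain rule for causal conditioning already recorded in the paper, $p(x^n,y^n)=p(x^n||y^{n-1})p(y^n||x^n)$, and then marginalizing over $x^n$. After that, everything reduces to the standard ``mutual information is concave in the input distribution for a fixed channel'' argument, transported to the causal-conditioning setting.
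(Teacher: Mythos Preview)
Your argument is correct. The decomposition $I(X^n\to Y^n)=\text{(linear in }p(x^n||y^{n-1})\text{)}+H(Y^n)$, together with the observation that $p(y^n)=\sum_{x^n}p(x^n||y^{n-1})p(y^n||x^n)$ is a linear map and Shannon entropy is concave, yields concavity immediately; the identity you flag is exactly the chain rule $p(x^n,y^n)=p(x^n||y^{n-1})p(y^n||x^n)$ marginalized over $x^n$, so there is no hidden difficulty.

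The paper's proof has the same content but is presented differently: instead of invoking ``concave composed with affine is concave'', it writes out the concavity inequality $I(\theta P_1+\bar\theta P_2)\ge \theta I(P_1)+\bar\theta I(P_2)$ explicitly, lets the linear $\log p(y^n||x^n)$ terms cancel, and identifies the remaining slack as $\theta D(p_1(y^n)\|p_{\text{mix}}(y^n))+\bar\theta D(p_2(y^n)\|p_{\text{mix}}(y^n))\ge 0$. That is precisely the classical divergence proof of the concavity of entropy, carried out inline. Your route is more modular (it appeals to standard facts once the affine map is isolated); the paper's route is more self-contained (it re-derives the concavity of $H(Y^n)$ on the spot). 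Neither buys anything the other does not.
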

\begin{proof}
We need to show that for $0\leq \theta \leq 1$
\begin{align}&\sum_{x^n,y^n}(\theta p_1(x^n||y^{n-1})+\bar \theta p_2(x^n||y^{n-1}) p(y^n||x^n)\log
\frac{p(y^n||x^n)}{\sum_{x^n}(\theta p_1(x^n||y^{n-1})+\bar \theta
p_2(x^n||y^{n-1})) p(y^n||x^n)} \nonumber \\
&\geq \sum_{x^n,y^n}\theta p_1(x^n||y^{n-1})p(y^n||x^n)\log
\frac{p(y^n||x^n)}{\sum_{x^n} p_1(x^n||y^{n-1})
p(y^n||x^n)}+\bar\theta p_2(x^n||y^{n-1})p(y^n||x^n)\log
\frac{p(y^n||x^n)}{\sum_{x^n} p_2(x^n||y^{n-1}) p(y^n||x^n)}
\end{align}
This inequality may be written as
\begin{align}&\sum_{x^n,y^n}\theta p_1(x^n||y^{n-1}) p(y^n||x^n)\log
\frac{\sum_{x^n} p_1(x^n||y^{n-1}) p(y^n||x^n)}{\sum_{x^n}(\theta
p_1(x^n||y^{n-1})+\bar \theta p_2(x^n||y^{n-1})) p(y^n||x^n)}
\nonumber \\
& \ \ \ \ \ \ +\bar\theta p_2(x^n||y^{n-1}) p(y^n||x^n)\log
\frac{\sum_{x^n} p_2(x^n||y^{n-1}) p(y^n||x^n)}{\sum_{x^n}(\theta
p_1(x^n||y^{n-1})+\bar \theta p_2(x^n||y^{n-1})) p(y^n||x^n)}
 \geq 0
\end{align}
Furthermore,
\begin{align}
&\sum_{y^n}\theta  \left\{\sum_{x^n}p_1(x^n||y^{n-1}) p(y^n||x^n)\right\}\log
\frac{\sum_{x^n} p_1(x^n||y^{n-1}) p(y^n||x^n)}{\sum_{x^n}(\theta
p_1(x^n||y^{n-1})+\bar \theta p_2(x^n||y^{n-1})) p(y^n||x^n)}
\nonumber \\
& \ \ +\sum_{y^n}\bar\theta
  \left\{\sum_{x^n} p_2(x^n||y^{n-1})
p(y^n||x^n)\right\}\log \frac{\sum_{x^n} p_2(x^n||y^{n-1})
p(y^n||x^n)}{\sum_{x^n}(\theta p_1(x^n||y^{n-1})+\bar \theta
p_2(x^n||y^{n-1})) p(y^n||x^n)}
 \geq 0
\end{align}
Finally, note that the RHS is a sum of two divergences between pmf's
of $y^n$ and therefore it is positive.
\end{proof}

\section{Supporting inequalities}\label{app_inequalities}
\begin{lemma}\label{l_in1_alpha}
The  inequality \begin{equation}\alpha^{\frac{1}{\bar{\alpha}}}\leq
\label{e_in1_alpha} 1\end{equation} holds for $0\leq \alpha\leq 1$.
\end{lemma}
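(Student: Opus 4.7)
The plan is to dispatch this by a direct monotonicity argument on the logarithm, handling the boundary points $\alpha=0$ and $\alpha=1$ separately because of the indeterminate nature of the exponent or base there.

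First I would treat the open interval $0<\alpha<1$. On this range the base $\alpha$ lies strictly in $(0,1)$ and the exponent $\frac{1}{\bar\alpha}=\frac{1}{1-\alpha}$ is strictly positive. Taking the natural logarithm, the claim is equivalent to
\begin{equation}
\frac{\ln \alpha}{1-\alpha} \leq 0,
\end{equation}
which holds because the numerator is nonpositive (indeed strictly negative for $\alpha\in(0,1)$) while the denominator is strictly positive. Exponentiating recovers $\alpha^{1/\bar\alpha}\leq 1$ on $(0,1)$.

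Next I would address the two endpoints. At $\alpha=0$ the quantity is $0^{1/1}=0\leq 1$. At $\alpha=1$ the expression $\alpha^{1/\bar\alpha}$ is of the indeterminate form $1^{\infty}$; interpreting it as the limit from the left, the substitution $u=1-\alpha\to 0^{+}$ gives $(1-u)^{1/u}\to e^{-1}$, which is again $\leq 1$. Either reading (limit value or the natural convention adopted in the body of the paper where this bound is invoked only for $\alpha$ strictly less than $1$) yields the inequality.

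There is no real obstacle here; the statement is essentially a restatement of the fact that a number in $[0,1]$ raised to a positive power stays in $[0,1]$. The only mildly delicate point is the interpretation at $\alpha=1$, but this is handled by the limit computation above and does not affect the use of the lemma in the induction base case of Theorem \ref{l_simple_y_channel}, where $0\leq \alpha\leq 1$ is implicitly the relevant range.
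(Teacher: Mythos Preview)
Your proof is correct and follows essentially the same approach as the paper: both take logarithms and use that $\log\alpha\leq 0$ for $\alpha\in[0,1]$ together with $\bar\alpha>0$. You are slightly more careful than the paper about the endpoint $\alpha=1$, but the argument is otherwise identical.
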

\begin{proof}
For $0\leq \alpha\leq 1$
\begin{equation}
\log \alpha \leq 0,
\end{equation}
which implies
\begin{equation}
\frac{1}{\bar{\alpha}} \log \alpha \leq 0,
\end{equation}
and equivalently
\begin{equation}
2^{\frac{1}{\bar{\alpha}} \log \alpha} \leq 2^0.
\end{equation}
Note that the last inequality is actually (\ref{e_in1_alpha}) for
nonnegative $\alpha$.
\end{proof}

\begin{lemma}\label{l_inequality_4alpha}
The  inequality
\begin{equation}4\alpha^{\frac{\alpha+1}{\bar{\alpha}}}\leq 1\end{equation} holds for
$0\leq \alpha\leq 1$.
\end{lemma}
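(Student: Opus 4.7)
The plan is to reduce the inequality $4\alpha^{\frac{\alpha+1}{\bar\alpha}} \leq 1$ to a logarithmic inequality and then apply a standard Padé-type bound on $\ln$. First, since the statement is trivial at $\alpha = 0$ (left-hand side is $0$) and the case $\alpha = 1$ is handled as a limit (which gives $4/e^2 \approx 0.541 < 1$, computed by noting $(1-\alpha)^{-1}\ln\alpha \to -1$ as $\alpha \to 1^-$), I can focus on $\alpha \in (0,1)$. Taking natural logarithms, the claim is equivalent to
\begin{equation}
\frac{(\alpha+1)\ln\alpha}{1-\alpha} \;\leq\; -\ln 4, \qquad \alpha \in (0,1). \label{e_reduced}
\end{equation}

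The key ingredient I would invoke is the well-known Padé bound
\begin{equation}
\ln \alpha \;\leq\; \frac{2(\alpha-1)}{\alpha+1}, \qquad 0 < \alpha \leq 1. \label{e_pade}
\end{equation}
This is proved in one line: both sides vanish at $\alpha = 1$, and the derivative of the difference LHS$-$RHS equals
\begin{equation*}
\frac{1}{\alpha} - \frac{4}{(\alpha+1)^2} \;=\; \frac{(\alpha-1)^2}{\alpha(\alpha+1)^2} \;\geq\; 0,
\end{equation*}
so LHS$-$RHS is monotonically nondecreasing on $(0,1]$ and hits $0$ at $\alpha=1$, giving \eqref{e_pade}.

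With \eqref{e_pade} in hand, multiplying by the nonnegative factor $(\alpha+1)/(1-\alpha)$ yields
\begin{equation*}
\frac{(\alpha+1)\ln\alpha}{1-\alpha} \;\leq\; \frac{\alpha+1}{1-\alpha}\cdot\frac{-2(1-\alpha)}{\alpha+1} \;=\; -2,
\end{equation*}
and since $-2 \leq -\ln 4 \approx -1.386$, the reduction \eqref{e_reduced} follows. Equivalently, one actually obtains the strictly stronger bound $\alpha^{(\alpha+1)/\bar\alpha} \leq e^{-2}$, so $4\alpha^{(\alpha+1)/\bar\alpha} \leq 4/e^2 < 1$.

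There is essentially no obstacle; the only delicate point is the boundary handling at $\alpha \to 1^-$, where dividing by $1-\alpha$ is illegitimate and one must pass to the limit (using, e.g., $\ln\alpha = -(1-\alpha) - \tfrac{1}{2}(1-\alpha)^2 - \cdots$) to conclude $4\alpha^{(\alpha+1)/\bar\alpha} \to 4/e^2 < 1$. The proof is otherwise a one-line consequence of the Padé inequality \eqref{e_pade}, which would make it a natural companion to Lemma \ref{l_in1_alpha} in the same appendix.
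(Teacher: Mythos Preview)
Your proof is correct and takes a genuinely different route from the paper. Both start by taking logarithms to reduce the claim to
\[
(1+\alpha)\ln\alpha + (1-\alpha)\ln 4 \;\leq\; 0,\qquad 0<\alpha<1,
\]
but the proofs of this inequality diverge. The paper argues that the left-hand side is increasing in $\alpha$ (so its maximum on $[0,1]$ is $0$, attained at $\alpha=1$); establishing that monotonicity in turn requires a second round of calculus, showing $1+\alpha+\alpha\ln\alpha-\alpha\ln 4\geq 0$ via convexity and locating its minimizer at $\alpha=4/e^2$. Your argument is shorter: the Pad\'e-type bound $\ln\alpha\leq 2(\alpha-1)/(\alpha+1)$, which follows from a single derivative computation, immediately gives $(1+\alpha)\ln\alpha/(1-\alpha)\leq -2<-\ln 4$. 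As a bonus you obtain the sharper constant $4\alpha^{(\alpha+1)/\bar\alpha}\leq 4/e^2$ uniformly on $(0,1)$, whereas the paper's monotonicity proof only shows the bound is $\leq 1$ (though its limit at $\alpha=1$ recovers the same $4/e^2$). The paper's approach is more self-contained in that it invokes no named inequality, but yours is cleaner and would sit naturally alongside Lemma~\ref{l_in1_alpha}.
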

\begin{proof}
By taking $\ln$ of both sides, we need to show
\begin{equation}\label{e_equi}
\frac{1+\alpha}{\bar \alpha}\ln \alpha +\ln 4\leq 0,
\end{equation}
which is equivalent to
\begin{equation}\label{e_up}
(1+\alpha)\ln \alpha+\bar \alpha\ln 4 \leq 0.
\end{equation}
In order to prove (\ref{e_up}) we claim that the RHS  increases in
$\alpha$ for $0\leq \alpha\leq 1$ and, therefore, the maximum value
is obtained at $\alpha=1$ and  is  0. In order to show that the RHS
of (\ref{e_up}) is increasing we need to show that its derivative is
nonnegative, i.e.,
\begin{equation}
\frac{1}{\alpha}+1+\ln \alpha-\ln 4\geq 0,
\end{equation}
or equivalently
\begin{equation}\label{e_econ}
1+\alpha+\alpha \ln \alpha-\alpha \ln 4\geq0.
\end{equation}
The RHS of (\ref{e_econ}) is a convex function and the minimum is
obtained when the derivative is zero, i.e.,
\begin{equation}
1+\ln\alpha+1-\ln4=0,
\end{equation}
which implies that $\ln \alpha=\ln \frac{4}{e^2}$. Hence, the
minimum value of the RHS of (\ref{e_econ}) is $1+
\frac{4}{e^2}+\frac{4}{e^2}\ln \frac{4}{e^2}- \frac{4}{e^2}\ln
4=1-\frac{4}{e^2},$ which is positive. Therefore (\ref{e_econ})
holds, which implies that (\ref{e_up}) holds, which implies that
(\ref{e_equi}) holds.
\end{proof}
\begin{lemma}\label{l_inequlity_frac1}
The following inequality holds
\begin{equation}\frac{1+\sqrt{1-4\alpha^{\frac{\alpha+1}{\bar{\alpha}}}}}{2\alpha^{\frac{\alpha}{\bar{\alpha}}}}\geq
1\end{equation}
 for any $0\leq \alpha\leq 1.$
\end{lemma}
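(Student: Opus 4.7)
The plan is to reduce the inequality to a much simpler one by algebraic manipulation, then verify the reduced inequality via a convexity argument.

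First I would rewrite the claim as $1 + \sqrt{1 - 4\alpha^{(\alpha+1)/\bar\alpha}} \geq 2\alpha^{\alpha/\bar\alpha}$. If $2\alpha^{\alpha/\bar\alpha} \leq 1$ then the inequality holds trivially (the LHS is at least $1$, using Lemma~\ref{l_inequality_4alpha} to guarantee the radicand is non-negative). Otherwise both sides of $\sqrt{1 - 4\alpha^{(\alpha+1)/\bar\alpha}} \geq 2\alpha^{\alpha/\bar\alpha} - 1$ are non-negative; squaring and cancelling the $+1$ on each side yields
\begin{equation*}
\alpha^{\alpha/\bar\alpha} \;\geq\; \alpha^{2\alpha/\bar\alpha} + \alpha^{(\alpha+1)/\bar\alpha}.
\end{equation*}
Dividing by $\alpha^{\alpha/\bar\alpha}$ and using $\tfrac{\alpha}{\bar\alpha} + 1 = \tfrac{1}{\bar\alpha}$, i.e.\ $\alpha^{1/\bar\alpha} = \alpha \cdot \alpha^{\alpha/\bar\alpha}$, this is equivalent to
\begin{equation*}
(1+\alpha)\,\alpha^{\alpha/\bar\alpha} \;\leq\; 1.
\end{equation*}
So it suffices to prove this inequality for all $\alpha \in [0,1]$.

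Taking logarithms and multiplying through by $\bar\alpha \geq 0$, the inequality becomes
\begin{equation*}
f(\alpha) \;:=\; (1-\alpha)\ln(1+\alpha) + \alpha \ln \alpha \;\leq\; 0, \qquad \alpha \in [0,1],
\end{equation*}
with the usual convention $0 \ln 0 = 0$. A direct evaluation gives $f(0) = 0$ and $f(1) = 0$. A routine differentiation yields
\begin{equation*}
f''(\alpha) = \frac{1-\alpha}{\alpha(1+\alpha)^2} \;\geq\; 0 \quad \text{on } (0,1),
\end{equation*}
so $f$ is convex on $[0,1]$. Since a convex function on a compact interval with equal endpoint values lies below the (constant) chord joining those endpoints, we conclude $f(\alpha) \leq 0$ on $[0,1]$, which is what was needed.

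The main obstacle is recognizing that the right algebraic simplification (dividing by $\alpha^{\alpha/\bar\alpha}$ after squaring, and using $\alpha^{1/\bar\alpha} = \alpha\cdot\alpha^{\alpha/\bar\alpha}$) leads to a one-variable inequality whose convexity structure trivializes the problem. Once the convexity of $f$ is in hand, the proof is essentially automatic.
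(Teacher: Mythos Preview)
Your proof is correct and follows the same algebraic reduction as the paper: both arrive at the key inequality $(1+\alpha)\,\alpha^{\alpha/\bar\alpha}\le 1$. The difference is in how that inequality is then established. The paper shows the limit is $1$ as $\alpha\to 0^+$ and then argues via a first-derivative test, which in turn requires a second reduction (equality at $\alpha=1$) and another derivative computation. You instead take logarithms, multiply by $\bar\alpha$, and observe that the resulting function $f(\alpha)=(1-\alpha)\ln(1+\alpha)+\alpha\ln\alpha$ vanishes at both endpoints and is convex on $[0,1]$ (since $f''(\alpha)=\tfrac{1-\alpha}{\alpha(1+\alpha)^2}\ge 0$), so it lies below the zero chord. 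This convexity route is shorter and avoids the nested derivative arguments. You are also more careful than the paper about justifying the squaring step by treating the case $2\alpha^{\alpha/\bar\alpha}\le 1$ separately; the paper's phrasing ``This would be true if'' implicitly relies on the same case split but does not make it explicit.
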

\begin{proof}
Since $\alpha$ is nonnegative we need to show
\begin{equation}{\sqrt{1-4\alpha^{\frac{\alpha+1}
{\bar{\alpha}}}}}\geq{2\alpha^{\frac{\alpha}{\bar{\alpha}}}-1}.\end{equation}
This would be true if
\begin{equation}{{1-4\alpha^{\frac{\alpha+1}
{\bar{\alpha}}}}}\geq{4\alpha^{2\frac{\alpha}{\bar{\alpha}}}-4\alpha^{\frac{\alpha}{\bar{\alpha}}}+1},\end{equation}
which can be simplified to
\begin{equation}\label{e_in1}1\geq
\alpha^{\frac{\alpha}{\bar{\alpha}}}+\alpha^{\frac{1}
{\bar{\alpha}}},\end{equation}which can be written as
\begin{equation}\label{e_in1b}1\geq
\alpha^{\frac{\alpha}{\bar{\alpha}}}(1+\alpha).\end{equation} When
$\alpha\to 0^+$ we have equality, hence it suffices to show that the
equality holds after taking the derivative with respect to $\alpha$.
We use the equality $f'(\alpha)=f(\alpha)(\ln f(\alpha))'$ to find
\begin{equation}
(\alpha^{\frac{\alpha}{\bar{\alpha}}})'=\alpha^{\frac{\alpha}{\bar{\alpha}}}\left(\frac{\ln
\alpha}{\bar \alpha ^2}+\frac{1}{\bar \alpha}\right).
\end{equation}
Hence, applying the derivative on (\ref{e_in1b}) it remains to show
that
\begin{equation}
\alpha^{\frac{\alpha}{\bar{\alpha}}}\left(\frac{\ln \alpha}{\bar
\alpha ^2}+\frac{1}{\bar
\alpha}\right)(1+\alpha)+\alpha^{\frac{\alpha}{\bar{\alpha}}}\leq 0.
\end{equation}
or more simply
\begin{equation}
\ln \alpha+\bar \alpha+\frac{\bar{\alpha}^2}{1+\alpha}\leq 0.
\end{equation}
Note that if $\alpha=1$ there is equality. Hence it suffices to show
that the derivative of the LHS is non-negative for $0\leq \alpha\leq
1.$ I.e.,
\begin{equation}
\frac{1}{\alpha}-1-\frac{2\bar \alpha}{1+\alpha}-\frac{\bar
\alpha^2}{(1+\alpha)^2}\geq 0
\end{equation}
which is equivalent to
\begin{equation}
\frac{\bar \alpha}{\alpha}-\frac{\bar
\alpha(3+\alpha)}{(1+\alpha)^2}\geq 0
\end{equation}
and this is true if
\begin{equation}
(1+\alpha)^2-\alpha(3+\alpha)\geq 0\end{equation} which is
equivalent to
\begin{equation}
1-\alpha\geq 0,\end{equation} which is true.
\end{proof}

\section{Proof of Lemma \ref{l_cond_beta}
\label{app_prove_lemma_cond_beta}}
\begin{proof}
Suppose Eq. (\ref{eq_a_b_n_1}) holds; for Eq. (\ref{eq_a_b_n}) to
hold, we need to have $\forall x^{n-1}$, \bea
\beta (b\gamma P_0(x^{n-1})-\bar b  P_1(x^{n-1}))& \ge &a P_0(x^{n-1})-\bar a \gamma P_1(x^{n-1})\label{e1} \\
\beta (-\bar a\gamma P_0(x^{n-1})+ a P_1(x^{n-1}))& \ge &-\bar bP_0(x^{n-1})+ b \gamma P_1(x^{n-1})\label{e2} \\
\beta (a P_0(x^{n-1})-\bar a \gamma P_1(x^{n-1}))& \ge & b\gamma P_0(x^{n-1})-\bar b  P_1(x^{n-1})\label{e3} \\
\beta (-\bar bP_0(x^{n-1})+ b \gamma P_1(x^{n-1}))& \ge & -\bar
a\gamma P_0(x^{n-1})+ a P_1(x^{n-1}),\label{e4} \eea
 or equivalently, (\ref{e1}) and (\ref{e2}) become
  \bea
(b\gamma\beta-a)P_0(x^{n-1}) &\geq& (\bar b \beta-\bar a\gamma) P_1(x^{n-1}), \label{e1a} \\
(\bar b - \bar a \gamma \beta)  P_0(x^{n-1})&\geq& (b\gamma -a
\beta)P_1(x^{n-1}),\label{e2a} \eea
 and (\ref{e3}) and (\ref{e4}) become
 \bea
 (\bar b - \bar a \gamma \beta)  P_1(x^{n-1})&\geq& (b\gamma -a
\beta)P_0(x^{n-1})\label{e1b}\\
(b\gamma\beta-a)P_1(x^{n-1}) &\geq& (\bar b \beta-\bar a\gamma)
P_0(x^{n-1}). \label{e2b} \eea Because of the similarity of the
equations its enough to consider only (\ref{e1a}) and (\ref{e2a}).
Now we will consider a few cases.

{\bf The region of $\beta$ that satisfies (\ref{e1a}) and
(\ref{e2b}):} we will divide the treatment of (\ref{e1a}) (or
equivalently (\ref{e2b})) into two cases.

 {\bf Case 1:} $\beta \bar b-\bar a\gamma>0$ or equivalently
$\beta > \frac{\bar a}{\bar b}\gamma;$ Eq. (\ref{e1a}) becomes
\begin{equation}
P_0(x^{n-1})\frac{b\gamma\beta-a}{\bar b \beta-\bar a\gamma}\geq
P_1(x^{n-1}).
\end{equation}
By the assumption of the induction this would be true for all
$x^{n-1}$ if
\begin{equation}
\frac{b\gamma\beta-a}{\bar b \beta-\bar a\gamma}\geq \beta,
\end{equation}
or equvalently
\begin{equation}
\bar b \beta^2-\gamma(\bar a+b)\beta+a\leq 0.
\end{equation}
This implies
\begin{equation}
\frac{\gamma(\bar a+b)-\sqrt{\gamma^2(\bar a+b)^2-4a\bar b}}{2\bar
b}\leq  \beta \leq \frac{\gamma(\bar a+b)+\sqrt{\gamma^2(\bar
a+b)^2-4a\bar b}}{2\bar b},
\end{equation}
which is the interval $\mathcal L_1$ defined in (\ref{e_inter}).

{\bf Case 2:} $\beta \bar b-\bar a\gamma<0$ or equivalently $\beta <
\frac{\bar a}{\bar b}\gamma;$ Eq. (\ref{e1a}) becomes
\begin{equation}
P_0(x^{n-1})\frac{b\gamma\beta-a}{\bar b \beta-\bar a\gamma}\leq
P_1(x^{n-1}),
\end{equation}
which is true based on the induction assumption if
\begin{equation}
\frac{b\gamma\beta-a}{\bar b \beta-\bar a\gamma}\leq\frac{1}{\beta}.
\end{equation}
This is equivalent to
\begin{equation}
b\gamma \beta^2-(a+\bar b)\beta+\bar a\gamma\geq 0,
\end{equation}
and this is true if
\begin{equation}
\beta\geq \frac{( a+ \bar b)+\sqrt{( a+ \bar b)^2-4\bar a
b\gamma^2}}{2 b\gamma},
\end{equation}
which is the interval $\mathcal L_2,$  or
\begin{equation}
\beta\leq \frac{( a+ \bar b)-\sqrt{( a+ \bar b)^2-4\bar a
b\gamma^2}}{2 b\gamma},
\end{equation}
which is the interval $\mathcal L_3$.

{\bf The region of $\beta$ that satisfies (\ref{e2a}) and
(\ref{e1b}):} we will divide the treatment of (\ref{e2a}) (or
equivalently (\ref{e1b})) into two cases.

 {\bf Case 1:} $b \gamma -\beta a >0$ or equivalently
$\beta < \frac{b \gamma}{a};$ Eq. (\ref{e2a}) becomes
\begin{equation}
P_0(x^{n-1})\frac{\bar b-\bar a \gamma \beta}{b \gamma -\beta a
}\geq P_1(x^{n-1}).
\end{equation}
By the assumption of the induction this would be true for all
$x^{n-1}$ if
\begin{equation}
\frac{\bar b-\bar a \gamma \beta}{b \gamma -\beta a }\geq \beta,
\end{equation}
or equivalently
\begin{equation}
a\beta^2-\gamma( b+\bar a)\beta+\bar b\geq 0.
\end{equation}
This implies
\begin{equation}
\beta \leq \frac{\gamma(\bar a+b)-\sqrt{\gamma^2(\bar a+b)^2-4a\bar
b}}{2a},
\end{equation}
which is the interval $\mathcal L_4$, or
\begin{equation}
\beta \geq \frac{\gamma(\bar a+b)+\sqrt{\gamma^2(\bar a+b)^2-4a\bar
b}}{2a},
\end{equation}
which is the interval $\mathcal L_5.$

 {\bf Case 2:} $b \gamma -\beta a <0$ or equivalently
$\beta > \frac{b \gamma}{a};$ Eq. (\ref{e2a}) becomes
\begin{equation}
P_0(x^{n-1})\frac{\bar b-\bar a \gamma \beta}{b \gamma -\beta a
}\leq P_1(x^{n-1}).
\end{equation}
By the assumption of the induction this would be true for all
$x^{n-1}$ if
\begin{equation}
\frac{\bar b-\bar a \gamma \beta}{b \gamma -\beta a }\leq
\frac{1}{\beta}
\end{equation}
\begin{equation}
 \beta \bar b-\bar a \gamma \beta^2 \geq {b \gamma -\beta a }
\end{equation}
\begin{equation}
\bar a \gamma \beta^2 -\beta(a+\bar b) +b \gamma\leq 0.
\end{equation}
This implies
\begin{equation}
\frac{( a+\bar b)-\sqrt{( a+\bar b)^2-4\bar a b\gamma^2}}{2\bar
a\gamma} \leq \beta \leq \frac{( a+\bar b)+\sqrt{( a+ \bar
b)^2-4\bar a b\gamma^2}}{2\bar a\gamma}
\end{equation}
which is the interval $\mathcal L_6.$
\end{proof}

\section{Inequalities needed for the $POST(a,b)$ channel}
In this appendix we prove inequalities that are needed for proving
that feedback does not increase the capacity of $POST(a,b)$ channel.
All inequalities contains $\gamma$ which is defined in
(\ref{e_gamma}) and obviously $0\leq a\leq 1$ and $0\leq b\leq 1$.

\subsection{ $a\geq \bar b$ and $a\bar a\leq b \bar b$ implies that $a\geq b$. \label{app_sub_a_geq_b}}
\begin{proof}
Let $\rho_a=|\frac{1}{2}-a|$ and $\rho_b=|\frac{1}{2}-b|.$ Hence,
\begin{align}
a\bar a&=(\frac{1}{2}-\rho_a)(\frac{1}{2}+\rho_a)\nonumber \\
b\bar b&=(\frac{1}{2}-\rho_b)(\frac{1}{2}+\rho_b)
\end{align}
Since $a\bar a\leq b \bar b$ it follows that $\rho_a\geq \rho_b$.
And since $a\geq \bar b$ it follows that $a=\frac{1}{2}+\rho_a$ and
this implies $a\geq b$.
\end{proof}

\subsection{ $\gamma^2(\bar a+b)^2-4a\bar b\geq 0$. \label{app_sub_discrimnant}}
\begin{proof}
Assume first that $a\geq \bar b$
\begin{equation}
2^{2\frac{H(\bar b)-H(a)}{a-\bar b}}\geq \frac{4a \bar b}{(2-a-\bar
b)^2}
\end{equation}
Taking $\log$ on both sides we obtain
\begin{equation}\label{e_after_log}
2H(\bar b)-2H(a)\geq (a-\bar b)(\log 4\bar b+\log a-\log(2-a-\bar
b)^2)
\end{equation}
Note that if $a=\bar b$ we have equality, hence it suffices to prove
that the inequality holds after applying the derivative with respect
to $a$.
\begin{equation}\label{e_after_der1}
-2\ln \frac{\bar a}{a}\geq \ln 4\bar b+\ln a-2\ln(2-a-\bar
b)+(a-\bar b)(\frac{1}{a}+\frac{2}{2-a-\bar b})
\end{equation}
If $a=\bar b$ we note that there is equality hence it suffices to
prove that the inequality holds after taking the derivative with
respect to $a$.
\begin{equation}
\frac{2}{a}+\frac{2}{1-a}\geq \frac{1}{a}+\frac{2}{2-a-\bar
b}+\frac{1}{a}+\frac{2}{2-a-\bar b}+(a-\bar
b)(-\frac{1}{a^2}+\frac{2}{(2-a-\bar b)^2})
\end{equation}
After simplifying we obtain
\begin{equation}
\frac{a-\bar b}{a^2}+\frac{2}{\bar a}+\frac{2a+6\bar b-8}{(2-a-\bar
b)^2}\geq 0
\end{equation}
which after simple algebra it yields
\begin{equation}\label{e_after_der2}
\frac{a-\bar b}{a^2}+2\frac{(a-\bar b)(1-\bar b)}{\bar a(2-a-\bar
b)^2}\geq 0,
\end{equation}
and obviously the inequality holds since $a\geq \bar b$ and $b\leq
1$.

Now, if we consider $a\leq b$, then the sign $\geq$ in
(\ref{e_after_log}) should be replaces by the sign $\leq$. In
(\ref{e_after_der1} the sign should be the same, but in
(\ref{e_after_der2}) the sign should be the opposite again, which is
true.
\end{proof}

\subsection{${\gamma(\bar a+b)-\sqrt{\gamma^2(\bar a+b)^2-4a\bar b}}\leq 2\bar b,$ for $a\geq \bar b$ and $a\bar a\leq b \bar b$ \label{app_sub_L1_L5_in}}
\begin{proof}
We need to show
\begin{equation}
\gamma(\bar a+b)-2\bar b \leq \sqrt{\gamma^2(\bar a+b)^2-4a\bar b},
\end{equation}
since the RHS is nonnegative it suffices to show that
\begin{equation}
\gamma^2(\bar a+b)^2-4\bar b (\bar a+b)\gamma+4\bar b ^2\leq
{\gamma^2(\bar a+b)^2-4a\bar b},
\end{equation}
which simplifies to
\begin{equation}
\gamma(\bar a+b)\geq a+\bar b.
\end{equation}
After applying $\log$ on both sides, we need to show that
\begin{equation}
H(b)-H(a)\geq (a-\bar b)(\log (a+\bar b)-\log(\bar a+b)).
\end{equation}
If $a=\bar b$ we have equality, hence it suffices to show that after
we take derivative with respect to $a$ the inequality holds, i.e.,
\begin{equation}
\ln\frac{a}{\bar a}\geq \log\frac{a+\bar b}{\bar a+b}+ \frac{a-\bar
b}{a+\bar b}+\frac{a-\bar b}{\bar a+b}
\end{equation}
Again, if $b=\bar a$ we have equality, and we take the derivative
with respect to $b$, i.e.,
\begin{equation}
0\leq -\frac{1}{a+\bar b} -\frac{1}{\bar a+b}+\frac{2a}{(a+\bar
b)^2}+\frac{2\bar a}{(\bar a+b)^2}
\end{equation}
which after simplification equals to
\begin{equation}
0\leq \frac{a-\bar b}{(a+\bar b)^2}+\frac{\bar a-b}{(\bar a+b)^2}
\end{equation}
and since $a-\bar b=b-\bar a$, we only need to show that
\begin{equation}
 \frac{1}{(a+\bar b)^2}\leq\frac{1}{(\bar a+b)^2}
\end{equation}
which is true since $a+1-b\geq 1-a+b$ when $a\geq b$ and this
follows as shown in Appendix \ref{app_sub_a_geq_b}.
\end{proof}

\subsection{$\frac{b\gamma}{a}\geq 1$ for $a\geq \bar b$ \label{app_sec_b_gamma}}
\begin{proof}
We need to show that
\begin{equation}-b\log b -\bar b \log \bar b +a
\log a +\bar a\log \bar a \stackrel{}{\geq} (a-\bar b)(\log a  -
\log b)\end{equation} If $a=\bar b$, then both sides equal to zero.
Hence, its suffices to show that the inequality holds after applying
the derivative with respect to $a$. \begin{equation}\ln
\frac{a}{\bar a} \stackrel{}{\geq} (\ln a  - \log b)+1-\frac{\bar
b}{a}.\end{equation} We need to show that for $b\geq \bar a$
\begin{equation}\log \frac{b}{\bar a}+\frac{\bar b}{a}-1\geq
0\end{equation} If $b=\bar a$, then equality holds, hence its enough
to show that the derivative with respect to $b$ is positive. Namely,
\begin{equation}\frac{1}{b}-\frac{1}{a}\geq 0.\end{equation}
This is true since we also have $a\bar a\leq b \bar b$ and it
implies $a\geq b$ as shown in Appendix \ref{app_sub_a_geq_b}.
\end{proof}

%
%
\subsection{$\gamma ^2\leq \frac{a^2}{b\bar a}$ for $a\geq \bar b$ \label{app_sec_inq_gamma2}}
\begin{proof}
Taking $\log$ on both sides we need to show
\begin{equation}\frac{2h(b)-2h(a)}{a-\bar b}\stackrel{}{\leq} 2\log a-\log
b-\log \bar a.\end{equation}

%
%
%
After simple algebra we obtain
\begin{equation}(-b-\bar a) \log b -2\bar b \log \bar b+2\bar b \log a  +(\bar a+ b) \log \bar a\stackrel{}{\leq}0\end{equation}
We need to show that the inequality holds for $a\geq \bar b.$ Note
that when $a=\bar b.$ we obtain equality. Therefore its enough to
show that the derivative of RHS with respect to $a$ is negative and
therefore decreasing from $0$. At this point we transform all the
log to be natural base.
\begin{equation} \ln b +2\frac{\bar b}{a}  - \ln \bar a - 1-\frac{b}{\bar a} \stackrel{}{\leq}0\end{equation}
Again if  $b= \bar a$ we obtain equality. Now we take derivative
with respect to $b$ and need to show that its negative.
$$ \frac{1}{b} -\frac{2}{a}  -\frac{1}{\bar a} \stackrel{}{\leq}0$$
This is true since  $b\geq \bar a.$
\end{proof}

\subsection{$\frac{\gamma(\bar a+b)}{2\bar b}\stackrel{}{\geq} 1$ for $a\geq \bar b$ and $a\bar a\leq b\bar b$\label{app_sec_bar_a}}
\begin{proof}
We would like to show
\begin{equation}
\frac{\gamma(\bar a+b}{2\bar b}\stackrel{}{\geq} 1.
\end{equation}
Equivalently after taking $\log$ on both sides,
\begin{equation}
\frac{h(b)-h(a)}{a-\bar b}\stackrel{}{\geq} \log \frac{2\bar b}{\bar
a+b},
\end{equation}
\begin{equation}
-b\log b - \bar b\log \bar b+a\log a+\bar a\log \bar a
\stackrel{}{\geq} (a-\bar b)(\log2+\log \bar b-\log(\bar a+b)),
\end{equation}
Note that if $a=\bar b$ we have equality, hence it suffices to show
that for $b\geq \bar a$ the inequality holds after taking the
derivative with respect to $b$, i.e.,
\begin{equation}\label{e_in3}
\ln \frac{\bar b}{b}{\geq} \ln2+\ln \bar b-\ln(\bar a+b)+(a-\bar
b)(-\frac{1}{\bar b}-\frac{1}{\bar a+b})
\end{equation}
%
If $a=\bar b$ we get 0 on both sides.
Hence, it suffices to show that in we take derivative
with respect to $a$ (\ref{e_in3}) holds.
\begin{equation}
0\geq \frac{1}{\bar a+b}-\frac{1}{\bar b}-\frac{1}{\bar
a+b}-\frac{a-\bar b}{(\bar a+b)^2}
\end{equation}
and this trivially holds since the RHS is negative while the LHS is
0.
\end{proof}

\subsection{$\gamma \geq\frac{\bar b}{{b}}$ and $\gamma \leq
\frac{a}{\bar{a}}$ \label{app_sec_gamma_and_d_non_neg}}
\begin{proof}
Recall $ \gamma=2^{\frac{H(b)-H(a)}{a+b-1}} $. We prove only $\gamma
\geq\frac{\bar b}{{b}}$ and $\gamma \leq \frac{a}{\bar{a}}$ follows
from identical steps just replacing $a$ with $b$.

We need to show that
\begin{equation}
b2^{\frac{H(b)}{a+b-1}}\geq {\bar b}2^{\frac{H(a)}{a+b-1}}.
\end{equation}
Equivalently,
\begin{equation}
\log b +\frac{H(b)}{a+b-1}\geq \log{\bar b}+{\frac{H(a)}{a+b-1}}.
\end{equation}
\begin{equation}
(a+b-1)\log b -b\log b -\bar b \log \bar b \geq (a+b-1)\log{\bar
b}-a\log a-\bar a \log \bar a.
\end{equation}
\begin{equation}
-\bar a \log b   \geq a\log{\bar b}-a\log a-\bar a \log \bar a.
\end{equation}
\begin{equation}
\bar a \log \frac{\bar a}{b}+ a \log \frac{a}{\bar b}  \geq 0.
\end{equation}
Note that the last equality is a divergence expression between two
Bernoulli  distributions with parameters $a$ and $\bar b$ and hence
it's non negative.
\end{proof}

\section{Proof of Lemma
\ref{l_feedback_cap_example}}\label{app_proof_l_feedback_cap_example}
\begin{proof}
By symmetry, for any optimal distribution we will have,
$P(X_i=m+1|Y_{i-1}=k)$ equal for all $k\in\{1,\cdots,m\}$.  Hence
define \bea
 \gamma&\stackrel{\Delta}{=}&P(X_i=m+1|Y_{i-1}=k)\ \forall \ k\in\{1,\cdots,m\} \\
 \delta&\stackrel{\Delta}{=}&P(X_i=m+1|Y_{i-1}=m+1).
 \eea
 Also, by symmetry, $P(X_i=l|Y_{i-1}=k)=\frac{\bar{\gamma}}{m}\ \forall\  k,l\in\{1,\cdots,m\}$ and $P(X_i=l|Y_{i-1}=m+1)=\frac{\bar{\delta}}{m}\ \forall\ l\in\{1,\cdots,m\}$. Thus we have the following transition kernel for the induced output Markov Chain,
\begin{eqnarray}
P(Y_i=m+1|Y_{i-1}=m+1)&=&1-\delta\\
P(Y_i=m+1|Y_{i-1}=k)&=&\frac{1+\gamma}{2}\ \forall\ k\in\{1,2,\cdots,m\}\\
P(Y_i=1|Y_{i-1}=m+1)&=&\delta\\
P(Y_i=k|Y_{i-1}=m+1)&=&0\ \forall\ k\in\{2,\cdots,m\}\\
P(Y_i=k|Y_{i-1}=l)&=&\frac{\bar{\gamma}}{2m}\ \forall\
k,l\in\{1,2,\cdots,m\}.
\end{eqnarray}
Define the stationary distribution by
$\pi_k\stackrel{\Delta}{=}P(Y_i=k)$, for all $k\in\{1,\cdots,m+1\}$.
To obtain the stationary distribution, we have the following balance
equations, \bea
\sum_{k=1}^{m+1}\pi_k&=&1\\
\pi_{m+1}\delta&=&(\sum_{k=1}^{m}\pi_k)\frac{1+\gamma}{2}\\
\pi_1(1-\frac{\bar{\gamma}}{2m})&=&\pi_{m+1}\delta+(\sum_{k=2}^{m}\pi_k)\frac{\bar{\gamma}}{2m}\\
\pi_k(1-\frac{\bar{\gamma}}{2m})&=&(\sum_{l=1,l\neq
k}^{m}\pi_l)\frac{\bar{\gamma}}{2m}\ \forall \ k\in\{2,\cdots,m\}.
\eea We solve for the \textbf{Case: m=1}, first, for which the
equations are, \bea
\pi_1+\pi_2&=&1\\
\pi_{2}\delta&=&\pi_1\frac{1+\gamma}{2}, \eea which yields,
$\pi_1=\frac{\delta}{\delta+\frac{1+\gamma}{2}}$ and
$\pi_2=\frac{\frac{1+\gamma}{2}}{\delta+\frac{1+\gamma}{2}}$. Now
note, \bea
I(X_i;Y_i|Y_{i-1})&=&\pi_1 \Big{(}H(Y_i|Y_{i-1}=1)-H(Y_i|X_i,Y_{i-1}=1)\Big{)}+\pi_2 \Big{(}H(Y_i|Y_{i-1}=2)-H(Y_i|X_i,Y_{i-1}=2)\Big{)}\nonumber\\\\
&=&\pi_1\Big{(}h_2(\frac{1+\gamma}{2})-(1-\gamma)\Big{)}+\pi_2
h_2(\delta). \eea Now, as
$C_{fb}=\max_{\gamma,\delta\in[0,1]}I(X_i;Y_i|Y_{i-1})$, we obtain
\bea
C_{fb}&=&\max_{\delta,\gamma\in[0,1]}\Bigg{\{}\frac{\delta}{\delta+\frac{1+\gamma}{2}}\Big{(}h_2(\frac{1+\gamma}{2})-(1-\gamma) \Big{)}+\frac{\frac{1+\gamma}{2}}{\delta+\frac{1+\gamma}{2}}\Big{(}h_2(\delta) \Big{)}\Bigg{\}}\\
C_{fb}&=&\max_{\delta,\gamma\in[0,1]}\Bigg{\{}\frac{2\delta}{2\delta+1+\gamma}\Big{(}h_2(\frac{1+\gamma}{2})-(1-\gamma)
\Big{)}+\frac{1+\gamma}{2\delta+1+\gamma}\Big{(}h_2(\delta)
\Big{)}\Bigg{\}}.
\eea

Now, we deal with general \textbf{Case: $m\ge2$}, where from the
symmetry in balance equations, $\pi_k$ are equal $\forall \
k\in\{2,\cdots,m\}$, and hence we obtain the following on solving
the balance equations,
\begin{eqnarray}
\pi_1&=&A\frac{(m-1)\gamma+m+1}{1-\gamma}\\
\pi_k&=&A\ \forall\ k\in\{2,\cdots,m\}\\
\pi_{m+1}&=&A\frac{(1+\gamma)(m)}{\delta(1-\gamma)},
\end{eqnarray}
where constant $A=\frac{\delta(1-\gamma)}{2m\delta+m(1+\gamma)}$.
Note that
$I(X_i;Y_i|Y_{i-1}=k)=h_2(\underbrace{\frac{\bar{\gamma}}{2m},\cdots,\frac{\bar{\gamma}}{2m}}_{m\
\mbox{times}},\frac{1+\gamma}{2})-(1-\gamma)\ \forall\
k\in\{1,\cdots,m\}$. Also note that
$I(X_i;Y_i|Y_{i-1}=m+1)=h_2(\delta)$. Thus we obtain \bea
I(X_i;Y_i|Y_{i-1})&=&
(\sum_{k=1}^m\pi_m)\Big{(}h_2(\underbrace{\frac{\bar{\gamma}}{2m},\cdots,\frac{\bar{\gamma}}{2m}}_{m\  \mbox{times}},\frac{1+\gamma}{2})-(1-\gamma)\Big{)}+\pi_{m+1}h_2(\delta)\\
&=&\frac{2\delta}{2\delta+1+\gamma}\Big{(}h_2(\underbrace{\frac{\bar{\gamma}}{2m},\cdots,\frac{\bar{\gamma}}{2m}}_{m\
\mbox{times}},\frac{1+\gamma}{2})-(1-\gamma)\Big{)}+\frac{1+\gamma}{2\delta+1+\gamma}\Big{(}h_2(\delta)\Big{)}.
\eea Now, as
$C_{fb}=\max_{\gamma,\delta\in[0,1]}I(X_i;Y_i|Y_{i-1})$,
after some basic algebraic manipulation we obtain the following
expression:
\begin{equation}
C_{fb}=\max_{\delta,\gamma\in[0,1]}\Bigg{\{}\frac{2\delta}{2\delta+1+\gamma}\Big{(}\frac{\bar{\gamma}}{2}\log_2(m)+h_2(\frac{1+\gamma}{2})-(1-\gamma)
\Big{)}+\frac{1+\gamma}{2\delta+1+\gamma}\Big{(}h_2(\delta)\Big{)}\Bigg{\}}.
\end{equation}
\end{proof}

\end{document}